\newtheorem{thm}{Theorem}[section]
\newtheorem{cor}[thm]{Corollary}
\newtheorem{pro}[thm]{Proposition}
\theoremstyle{definition}   
\newtheorem{defi}[thm]{Definition}
\newtheorem{ex}[thm]{Example}
\newtheorem{rmk}[thm]{Remark}
\newcommand{\nc}{\newcommand}
\newcommand{\delete}[1]{}
\nc{\mlabel}[1]{\label{#1}}  
\nc{\mcite}[1]{\cite{#1}}  
\nc{\mref}[1]{\ref{#1}}  
\nc{\meqref}[1]{\eqref{#1}}  
\nc{\mbibitem}[1]{\bibitem{#1}} 
\nc{\mlabel}[1]{\label{#1}{\hfill \hspace{1cm}{\tt{{\ }\hfill(#1)}}}}
\nc{\mcite}[1]{\cite{#1}{{\tt{{\ }(#1)}}}}  
\nc{\mref}[1]{\ref{#1}{{\tt{{\ }(#1)}}}}  
\nc{\meqref}[1]{\eqref{#1}{{\tt{{\ }(#1)}}}}  
\nc{\mbibitem}[1]{\bibitem[\tt #1]{#1}} 
\newcommand {\emptycomment}[1]{}
\newcommand{\bk}{{\mathbf{k}}}
\nc{\vep}{\varepsilon}
\nc{\oprn}{\theta}
\nc{\Oprn}{\Theta}
\nc{\tforall}{\ \ \text{for all }}
\nc{\calo}{\mathcal{O}}
\nc{\oop}{$\mathcal{O}$-operator\xspace}
\nc{\oops}{$\mathcal{O}$-operators\xspace}
\nc{\mrho}{{\bm{\varrho}}}
\nc{\emk}{\mathbf{K}}
\nc{\invlim}{\displaystyle{\lim_{\longleftarrow}}\,}
\nc{\ot}{\otimes}
\newcommand{\lon }{\,\rightarrow\,}
\newcommand{\be }{\begin{equation}}
\newcommand{\ee }{\end{equation}}
\newcommand{\gr}{\mathsf{gr}}
\newcommand{\g}{\mathfrak g}
\newcommand{\G}{\mathbb G}
\newcommand{\huaR}{\mathcal{R}}
\newcommand{\huaF}{\mathcal{F}}
\nc{\calp}{\mathcal{P}}
\newcommand{\frkR}{\mathfrak R}
\newcommand{\br}[1]{   [ \cdot,    \cdot  ]   }
\newcommand{\BCH}{\mathsf{BCH}}
\nc{\NBS}{\mathsf{NBS}}
\newcommand{\fVec}{\mathsf{fVect}}
\newcommand{\cVec}{\mathsf{cVect}}
\nc{\CV}{\mathbf{C}}
\begin{document}
\title{
    Formal integration of complete Rota-Baxter  Lie algebras and Magnus expansion
}

\author{Maxim Goncharov}
\address{Sobolev Institute of Mathematics, Acad. Koptyug ave. 4, Novosibirsk, Russia}
\email{goncharov.gme@math.nsc.ru.}

\author{Pavel Kolesnikov}
\address{Sobolev Institute of Mathematics, Acad. Koptyug ave. 4, Novosibirsk, Russia}
\email{pavelsk77@gmail.com}

\author{Yunhe Sheng}
\address{Department of Mathematics, Jilin University, Changchun 130012, Jilin, China}
\email{shengyh@jlu.edu.cn}

\author{Rong Tang}
\address{Department of Mathematics, Jilin University, Changchun 130012, Jilin, China}
\email{tangrong@jlu.edu.cn}

\begin{abstract}
In this paper, first we revisit the  formal integration of   Lie algebras, which give rise to braces in some special cases. Then we establish  the formal integration theory for complete Rota-Baxter Lie algebras, that is, we show that there is a Rota-Baxter group with the underlying group structure given by the Baker-Campbell-Hausdorff  formula, associated to any complete Rota-Baxter Lie algebra. In particular, we use the post-Lie Magnus expansion to give the explicit formula of the Rota-Baxter operator. Finally we show that one can obtain a graded Rota-Baxter Lie ring from a filtered Rota-Baxter group.

\end{abstract}

\keywords{Rota-Baxter Lie algebra, integration, Rota-Baxter group, brace, Magnus expansion}

\renewcommand{\thefootnote}{}
\footnotetext{2020 Mathematics Subject Classification. 22E60, 
16T25, 
17B38, 
16T05, 
}

\maketitle

\tableofcontents
\section{Introduction}

The notion of Rota-Baxter operators on associative algebras was introduced by G. Baxter \cite{Bax} and they are applied in the
Connes-Kreimer's algebraic approach to renormalization of quantum
field theory~\cite{CK}, noncommutative
symmetric functions and noncommutative Bohnenblust-Spitzer identities \cite{Fard}, splitting of operads \cite{BBGN}, double Lie algebras \cite{GK},  etc. (see the book \cite{Gub} for more details).
In the Lie algebra context,  Rota-Baxter operators of weight 0 lead to the classical Yang-Baxter equation and integrable systems \cite{Bai,Ku}, Rota-Baxter operators of weight 1  are in one-to-one correspondence  with
solutions of the modified Yang-Baxter equation \cite{STS} and give rise to factorizations of Lie algebras. Moreover, from a more algebraic approach, a  Rota-Baxter operator naturally gives rise to a pre-Lie algebra or a post-Lie
algebra, and plays an important role in mathematical
physics~\cite{Aguiar,BGN}.

To study the global symmetry of Rota-Baxter Lie algebras, the notion of Rota-Baxter operators on groups was introduced in \cite{GLS}, and further studied in \cite{BG}. One can obtain Rota-Baxter operators of weight 1 on Lie algebras from that on Lie groups by differentiation. Later in \cite{JSZ}, it was shown that every Rota-Baxter Lie algebra of weight 1 can be integrated to a local
Rota-Baxter Lie group, namely the Rota-Baxter operator was only defined in an open neighborhood of the identity in the Lie group. It remains to be an open problem whether every Rota-Baxter Lie algebra of weight 1 can be integrated to a Rota-Baxter Lie group.
 On the other hand, the notion of  Rota-Baxter operators on cocommutative Hopf algebras  was given in  \cite{Goncharov} such that many classical results still hold in the Hopf algebra level. Then the notions of post-groups and post-Hopf algebras were introduced in \cite{BGST} and \cite{LST}, which are the split algebraic   structures underlying Rota-Baxter operators on groups and Hopf algebras respectively. In particular, a commutative post-group is called a pre-group, which is equivalent to a brace. The notion of a brace was introduced by Rump   in \cite{Rump1} to understand  the structure of set-theoretical solutions of the Yang-Baxter equation, which was proposed by Drinfel'd \cite{Dr}.

 In this paper, we use the formal integration approach to study integration of Rota-Baxter Lie algebras. For this purpose, we first revisit formal integration of Lie algebras, and characterize group-like elements in the completion of the universal enveloping algebra of a filtered Lie algebra using the exponential map explicitly. For a complete Lie algebra $(\g, \huaF_{\bullet}\g)$, the formal integration $(\g,*)$ is a group, where the group structure $*$ is given by the Baker-Campbell-Hausdorff  formula. As applications, we show that formal integration of certain nilpotent Lie algebras give rise to braces naturally, which lead to possible applications in the Yang-Baxter equation and pre-Lie algebras \cite{Rump1,Sm22b}. Then we show that the universal enveloping algebra of a filtered Rota-Baxter Lie algebra is a filtered Rota-Baxter Hopf algebra. Finally, by completing the filtered Rota-Baxter universal enveloping algebra and considering the set of group-like elements, we obtain a Rota-Baxter group. We summarize it in the following diagram:\vspace{-2mm}
 \begin{equation*}
    \begin{split}
        \xymatrix{
            \text{filtered RB Lie}\atop (\g,\huaF_{\bullet}\g,R) \ar[r]^{U(\cdot)}     &  \text{filtered RB Hopf}\atop (U(\g),\huaR)\ar[r]^{\widehat{\cdot}} &
            \text{Complete RB Hopf}\atop (\widehat U(\g),\widehat \huaR) \ar[r]^{\quad G(\cdot)}  & \text{RB group}\atop (G,\cdot,\widehat \huaR)  \ar@<.3ex>[r]^{\log} &
        \text{RB group} \atop (\widehat\g,*,\frkR)\ar@<.3ex>[l]^{\exp}}
    \end{split}
\end{equation*}
 In particular, for a complete Rota-Baxter Lie algebra $(\g, \huaF_{\bullet}\g, R)$, we obtain a Rota-Baxter group  $(\g,*,\frkR)$, which serves as the formal integration. It is worth noting that the post-Lie Magnus expansion \mcite{CP,CEO,EMQ,EMM,MQS} naturally appears in the explicit expression of the Rota-Baxter operator $\frkR$ on the group $ (\widehat{\g},*)$. Conversely, we show that associated to a filtered Rota-Baxter group, one can naturally obtain a graded Rota-Baxter Lie ring.

 The paper is organized as follows. In Section \ref{sec:pre}, we recall the notions of filtered vector spaces, filtered Lie algebras, filtered Hopf algebras and their relations. In Section \ref{sec:intLie}, we revisit formal integration of Lie algebras, and establish the connection with braces. In Section \ref{sec:intRB}, first we show that the universal enveloping algebra of a filtered Rota-Baxter Lie algebra is a filtered Rota-Baxter Hopf algebra. Then using the trick of completion, we associate to every filtered Rota-Baxter Lie algebra a Rota-Baxter group. In Section \ref{sec:diffRB}, we show that one can obtain a graded Rota-Baxter Lie ring from
 a filtered Rota-Baxter group.


Throughout the paper, $\bk$ is an arbitrary field of characteristic zero.\vspace{2mm}

\noindent
{\bf Acknowledgments.} We thank Prof. Efim Zelmanov for very helpful comments that lead us to obtain the results in the last part.
The first and the second authors were supported by RAS Fundamental Research Program (FWNF-2022-0002), the third author is supported by
NSFC (12471060) and the fourth author is supported by
NSFC (12371029) and the Fundamental Research Funds for the Central Universities.

\section{Preliminaries: filtered vector spaces, Lie algebras, and Hopf algebras.} \label{sec:pre}

In this section, we recall the notions of filtered vector spaces, filtered Lie algebras, filtered Hopf algebras and the relations between filtered Lie algebras and filtered Hopf algebras.

\begin{defi}\mcite{Fr}
A {\bf   filtered vector space} is a pair $(V,\huaF_{\bullet}V)$, where $V$ is a vector space and $\huaF_{\bullet}V$ is a descending
filtration of the  vector space $V$ such that $$V=\huaF_0V\supset\huaF_1V\supset\cdots\supset\huaF_n V\supset\cdots.$$

Let $(V,\huaF_{\bullet}V)$ and $(W,\huaF_{\bullet}W)$ be filtered  vector spaces. A {\bf homomorphism} $f:(V,\huaF_{\bullet}V)\to (W,\huaF_{\bullet}W)$ is a linear map $f:V\to W$ such that $f(\huaF_n V)\subset \huaF_n W,~n\geq 0.$
\end{defi}

Denote by  $\fVec$ the category of filtered vector spaces.
For filtered vector spaces $(V,\huaF_{\bullet}V)$ and $(W,\huaF_{\bullet}W)$, there is a filtration on $V\otimes W$   given by
\begin{eqnarray}\mlabel{FTP}
\huaF_n(V\otimes W)\coloneqq \sum_{i+j=n}\huaF_iV\otimes\huaF_jW\subset V\otimes W.
\end{eqnarray}
Thus, $(V\otimes W,\huaF_{\bullet}(V\otimes W))$ is a filtered vector space, called the  {\bf tensor product} of filtered vector spaces  $(V,\huaF_{\bullet}V)$ and $(W,\huaF_{\bullet}W)$. For the ground field $\bk$, we define $\huaF_0\bk=\bk,\huaF_n\bk=\{0\},~n\geq 1$. It follows that $(\fVec,\otimes,\bk)$ is a $\bk$-linear  symmetric monoidal category.

\begin{pro}Let $V$ be a vector space equipped with two filtrations $\huaF_{\bullet}V$ and $\huaF'_{\bullet}V$. Define the intersection of filtrations $(\huaF\cap \huaF')_{\bullet}V$ as
$$
(\huaF\cap \huaF')_nV=\huaF_nV\cap \huaF'_nV.
$$
Then $(V, (\huaF\cap \huaF')_{\bullet}V)$ is also a filtered vector space.
\end{pro}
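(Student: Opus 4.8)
The plan is to verify directly, from the definition of a filtered vector space, that the two defining conditions hold for the candidate filtration $(\huaF\cap\huaF')_{\bullet}V$, using only that $\huaF_{\bullet}V$ and $\huaF'_{\bullet}V$ are themselves filtrations of $V$. Concretely, I need to confirm that each $(\huaF\cap\huaF')_nV$ is a linear subspace of $V$, that $(\huaF\cap\huaF')_0V=V$, and that the chain $(\huaF\cap\huaF')_0V\supset(\huaF\cap\huaF')_1V\supset\cdots$ is descending.

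First I would observe that each $(\huaF\cap\huaF')_nV=\huaF_nV\cap\huaF'_nV$ is a subspace of $V$, since the intersection of two subspaces is again a subspace; this guarantees that the proposed filtration consists of subspaces, as the definition demands. Next I would check the degree-zero normalization: because $\huaF_0V=V$ and $\huaF'_0V=V$, one immediately obtains $(\huaF\cap\huaF')_0V=\huaF_0V\cap\huaF'_0V=V\cap V=V$. Finally I would establish the descending property. For each $n\geq 0$, the inclusions $\huaF_{n+1}V\subset\huaF_nV$ and $\huaF'_{n+1}V\subset\huaF'_nV$, intersected termwise, yield $\huaF_{n+1}V\cap\huaF'_{n+1}V\subset\huaF_nV\cap\huaF'_nV$, i.e. $(\huaF\cap\huaF')_{n+1}V\subset(\huaF\cap\huaF')_nV$.

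There is no real obstacle here: the only fact used beyond unwinding the definitions is the monotonicity of intersection, namely that $A\subset A'$ and $B\subset B'$ imply $A\cap B\subset A'\cap B'$, applied in each degree. Combining the three checks shows that $(V,(\huaF\cap\huaF')_{\bullet}V)$ satisfies $V=(\huaF\cap\huaF')_0V\supset(\huaF\cap\huaF')_1V\supset\cdots$, and hence is a filtered vector space.
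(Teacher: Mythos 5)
Your proof is correct and is exactly the routine verification the paper has in mind; the paper in fact omits a proof of this proposition entirely, treating it as immediate from the definitions. Your three checks (subspaces, degree-zero normalization, and monotonicity of intersection for the descending chain) fill in precisely what is needed.
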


\begin{defi}\cite{Quillen}
   A {\bf filtered Lie algebra} is a pair $(\g, \huaF_{\bullet}\g)$, where $\g$ is a Lie algebra and $\huaF_{\bullet}\g$ is a descending filtration of the vector space $\g$ such that $\g=\huaF_1\g\supset \huaF_2\g\supset\ldots\supset \huaF_n\g\supset\ldots$ and
   \begin{equation}\label{g1}
   [\huaF_n\g,\huaF_m\g]\subset \huaF_{m+n}\g
   \end{equation}
   for all $m,n\geq 1$.
\end{defi}
Note that a filtration here starts from degree 1,
so filtered Lie algebras are not objects in $\fVec$.

Obviously, if $(\g,\huaF_{\bullet}\g)$ is a filtered Lie algebra, then for any $n\geq 1$, $\huaF_n\g$ is an ideal in $\g$. Now we give two natural examples of filtration structures on a Lie algebra.

\begin{ex}\label{trivial-fil}
For an arbitrary Lie algebra $\g$, one can define a trivial filtration by
$$
\huaF_n\g=\g,\quad n\geq 1.
$$
    \end{ex}

\begin{ex}\label{filtrst}
 By the standard filtration on a Lie algebra $\g$, we will mean the following filtration $\huaF_{\bullet}\g$ on $\g$
$$
\huaF_n\g=\g^n,\quad n\geq 1,
$$
where $\g^n=[\g,\g^{n-1}]$ for $n>1$ and $\g^1=\g$.
\end{ex}

\begin{pro}
    Let ($\g,\huaF_{\bullet}\g)$ and $(\g,\huaF'_{\bullet}\g)$ be two  filtered Lie algebra structures on the same Lie algebra $\g$. Then $(\g,(\huaF\cap \huaF')_{\bullet}\g)$ is also a filtered Lie algebra.
\end{pro}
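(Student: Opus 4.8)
The plan is to verify directly that the intersection filtration $(\huaF\cap\huaF')_{\bullet}\g$ satisfies both the filtration axioms of a filtered Lie algebra and the bracket compatibility condition \eqref{g1}, using that each of $\huaF_{\bullet}\g$ and $\huaF'_{\bullet}\g$ already does. First I would check that $(\huaF\cap\huaF')_{\bullet}\g$ is a descending filtration of $\g$ starting in degree $1$. Indeed, $(\huaF\cap\huaF')_1\g=\huaF_1\g\cap\huaF'_1\g=\g\cap\g=\g$, and for each $n\geq 1$ the inclusions $\huaF_{n+1}\g\subset\huaF_n\g$ and $\huaF'_{n+1}\g\subset\huaF'_n\g$ intersect to give $(\huaF\cap\huaF')_{n+1}\g\subset(\huaF\cap\huaF')_n\g$, so the descending chain condition is inherited termwise.

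The substantive step is the bracket condition. Given $x\in(\huaF\cap\huaF')_n\g$ and $y\in(\huaF\cap\huaF')_m\g$, I would exploit the fact that $x$ and $y$ each lie simultaneously in both filtrations: $x\in\huaF_n\g\cap\huaF'_n\g$ and $y\in\huaF_m\g\cap\huaF'_m\g$. Applying \eqref{g1} for the filtration $\huaF_{\bullet}\g$ gives $[x,y]\in\huaF_{m+n}\g$, and applying it for $\huaF'_{\bullet}\g$ gives $[x,y]\in\huaF'_{m+n}\g$. Hence $[x,y]\in\huaF_{m+n}\g\cap\huaF'_{m+n}\g=(\huaF\cap\huaF')_{m+n}\g$, which is precisely the required inclusion.

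There is no genuine obstacle here: the argument is a routine unwinding of the definitions, and the one point deserving attention is simply to observe that membership in the intersection filtration supplies the hypotheses of \emph{both} original filtrations at once, so that the bracket is forced into each degree-$(m+n)$ piece and therefore into their intersection. Combining the two checks above then shows that $(\g,(\huaF\cap\huaF')_{\bullet}\g)$ is a filtered Lie algebra.
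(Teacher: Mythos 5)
Your proof is correct and is the evident direct verification; the paper states this proposition without proof precisely because the argument is this routine termwise check. Your observation that membership in the intersection supplies the hypotheses of both filtrations simultaneously, forcing $[x,y]$ into $\huaF_{m+n}\g\cap\huaF'_{m+n}\g$, is exactly the intended reasoning.
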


Note that \eqref{g1} means that the product $[\cdot,\cdot]:\g\otimes \g\longrightarrow \g$ is a homomorphism of filtered vector spaces.

For a Hopf algebra $H=(H,\mu, \Delta, \eta, \epsilon, S)$, we  will use the following notations:
    \begin{itemize}
    \item $\mu: H\otimes H \longrightarrow H$ is a multiplication,
     \item $\Delta: H\longrightarrow H\otimes H$ is a comultiplication,
     \item $\eta: \bk\longrightarrow H$ is a unit,
     \item $\epsilon: H\longrightarrow \bk$ is a counit,
     \item $S: H\longrightarrow H$ is the antipode.
\end{itemize}

Given a coalgebra $(A,\Delta)$, we will use the Sweedler's notation: for any $a\in A$
$$
\Delta(a)= a_{(1)}\otimes a_{(2)},
\quad
(\Delta\otimes \mathrm{id})\Delta(a)= a_{(1)}\otimes a_{(2)}\otimes a_{(3)},
\quad \text{etc.}
$$

Given a Hopf algebra $H$, by $P(H)$ we will denote the set of primitive elements of $H$
$$
P(H)=\{x\in H|\ \Delta(x)=x\otimes 1+1\otimes x\}.
$$

\begin{defi}
A {\bf filtered Hopf algebra} is a Hopf algebraic object in the category $(\fVec,\otimes, \bk)$. More precisely, it is a Hopf algebra $(H, \mu, \Delta, \eta, \epsilon, S)$ endowed with a filtration $\huaF_{\bullet}H$  such that:
\begin{itemize}
\item[(i)] all maps $(\mu, \Delta, \eta, \epsilon, S)$ are homomorphisms of the corresponding filtered vector spaces.
\item[(ii)] $\ker(\epsilon)=\huaF_1H$.
\end{itemize}
\end{defi}

Note that the second condition $\ker(\epsilon)=\huaF_1H$ is equivalent to the condition that $H/\huaF_1H\cong\bk$.

The following conclusion is straightforward. 

\begin{pro}\label{pro:Lie-Hopf}\cite{Fr,Quillen}  Let $(\g,\huaF_{\bullet}\g)$ be a filtered Lie algebra. Then the universal enveloping algebra $U(\g)$ is a filtered Hopf algebra with the filtration $\huaF_{\bullet}U(\g)$ defined as
\begin{gather}
\label{g2}\huaF_0U(\g)=U(\g),\\
\label{g3}\huaF_nU(\g)=span \left\{x_1\ldots x_k|\ k\geq 1,\
x_i\in \huaF_{n_i}\g,\ \sum_{i=1}^k n_i\ge n\ \right\},\quad\ n\geq 1.
\end{gather}
The filtered Hopf algebra $(U(\g),\huaF_{\bullet}U(\g))$  is called  the {\bf filtered universal enveloping algebra} of the filtered Lie algebra $(\g,\huaF_{\bullet}\g)$.
\end{pro}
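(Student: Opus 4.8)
The plan is to take for granted the classical fact that $U(\g)$ is a cocommutative Hopf algebra with $\Delta(x)=x\otimes 1+1\otimes x$, $\epsilon(x)=0$ and $S(x)=-x$ for $x\in\g$, and to verify only that the prescribed filtration $\huaF_\bullet U(\g)$ turns all structure maps into homomorphisms of filtered vector spaces and satisfies $\ker(\epsilon)=\huaF_1U(\g)$. First I would record that $\huaF_\bullet U(\g)$ is genuinely a descending filtration with $\huaF_0U(\g)=U(\g)$: if $x_1\cdots x_k$ is a spanning monomial of $\huaF_nU(\g)$ with $x_i\in\huaF_{n_i}\g$ and $\sum_i n_i\ge n$, then the same witnesses give $\sum_i n_i\ge m$ for every $m\le n$, so $\huaF_nU(\g)\subset\huaF_mU(\g)$.

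For multiplicativity of $\mu$ I would concatenate: the product of a monomial of degree $\ge i$ and one of degree $\ge j$ is a monomial of degree $\ge i+j$, and since $\g=\huaF_1\g$ every monomial of length $\ell$ already has degree $\ge\ell$, which handles the cases $i=0$ or $j=0$ via $\huaF_0U(\g)\cdot\huaF_jU(\g)\subset\huaF_jU(\g)$. Hence $\huaF_iU(\g)\cdot\huaF_jU(\g)\subset\huaF_{i+j}U(\g)$. The unit $\eta$ is immediate since $\eta(\bk)\subset U(\g)=\huaF_0U(\g)$ while $\huaF_n\bk=\{0\}$ for $n\ge1$. For $\epsilon$, every spanning monomial of $\huaF_nU(\g)$ with $n\ge1$ has length $k\ge1$ and satisfies $\epsilon(x_1\cdots x_k)=\prod_i\epsilon(x_i)=0$; thus $\epsilon(\huaF_nU(\g))=\{0\}=\huaF_n\bk$ for $n\ge1$, which simultaneously proves $\huaF_1U(\g)\subset\ker(\epsilon)$. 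The antipode reverses a monomial up to sign, $S(x_1\cdots x_k)=(-1)^k x_k\cdots x_1$, preserving the total degree, so $S(\huaF_nU(\g))\subset\huaF_nU(\g)$.

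The only step requiring a genuine computation is the comultiplication. Expanding the algebra homomorphism $\Delta$ on a spanning monomial gives
\[
\Delta(x_1\cdots x_k)=\prod_{i=1}^k\bigl(x_i\otimes 1+1\otimes x_i\bigr)=\sum_{S\subseteq\{1,\dots,k\}}\Bigl(\prod_{i\in S}x_i\Bigr)\otimes\Bigl(\prod_{i\notin S}x_i\Bigr),
\]
the ordered subproducts arising because $x_i\otimes1$ and $1\otimes x_j$ multiply componentwise in $U(\g)\otimes U(\g)$. Each term lies in $\huaF_pU(\g)\otimes\huaF_qU(\g)$ with $p=\sum_{i\in S}n_i$ and $q=\sum_{i\notin S}n_i$, so $p+q=\sum_i n_i\ge n$. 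I would then isolate the one small sublemma that does the work: whenever $p+q\ge n$ one has $\huaF_pU(\g)\otimes\huaF_qU(\g)\subset\huaF_n(U(\g)\otimes U(\g))$. Indeed, choosing $a=\min(p,n)$ and $b=n-a$ gives $a\le p$, $b\le q$ and $a+b=n$, and the descending property of the filtration yields $\huaF_pU(\g)\otimes\huaF_qU(\g)\subset\huaF_aU(\g)\otimes\huaF_bU(\g)\subset\huaF_n(U(\g)\otimes U(\g))$. This places every term of the expansion in $\huaF_n(U(\g)\otimes U(\g))$, proving $\Delta$ is filtered.

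Finally, for condition (ii) it remains to prove $\ker(\epsilon)\subset\huaF_1U(\g)$. Using the PBW basis, $U(\g)=\bk\cdot 1\oplus\ker(\epsilon)$ with $\ker(\epsilon)$ spanned by monomials of length $k\ge1$; each such monomial has all factors in $\g=\huaF_1\g$ and therefore total degree $\ge k\ge1$, so it lies in $\huaF_1U(\g)$. Together with the reverse inclusion already obtained, this gives $\ker(\epsilon)=\huaF_1U(\g)$ and completes the verification. I expect the comultiplication—specifically the bookkeeping in the subset expansion together with the tensor-filtration sublemma—to be the only part demanding care; everything else is a direct appeal to the definition of $\huaF_\bullet U(\g)$ and the descending property.
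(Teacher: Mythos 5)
Your proof is correct. The paper offers no argument for this proposition --- it is labelled ``straightforward'' and attributed to Fresse and Quillen --- and your verification is precisely the routine check being left to the reader; in particular, the one step that genuinely requires an observation, namely that $p+q\ge n$ forces $\huaF_pU(\g)\otimes\huaF_qU(\g)\subset\huaF_n(U(\g)\otimes U(\g))$ via the descending property of the filtration, is handled correctly, as is the PBW argument for $\ker(\epsilon)\subset\huaF_1U(\g)$.
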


  Consider the universal enveloping algebra of the filtered Lie algebras given in Example \ref{trivial-fil} and Example \ref{filtrst}, we have the following examples.

\begin{ex}\label{expow-1}
  The  filtered universal enveloping algebra $(U(\g),\huaF_{\bullet}U(\g))$ of the trivial  filtered Lie algebra $(\g,\huaF_{\bullet}\g)$ given in Example \ref{trivial-fil} is given by
\begin{eqnarray*}
\huaF_0U(\g)&=&U(\g),\\
\huaF_nU(\g)&=&\ker(\epsilon), \quad\ n\geq 1.
\end{eqnarray*}
\end{ex}

\begin{ex}\label{expow}
The  filtered universal enveloping algebra $(U(\g),\huaF_{\bullet}U(\g))$ of the standard filtered Lie algebra $(\g,\huaF_{\bullet}\g)$ given in Example \ref{filtrst} is given by
\begin{eqnarray*}
\huaF_0U(\g)&=&U(\g),\\
\huaF_nU(\g)&=&span\{x_1\ldots x_k|\ k\geq n,\ x_i\in \g\}, \quad n\geq 1.
\end{eqnarray*}

We call this filtration the standard filtration on $U(\g)$.
\end{ex}

\begin{pro}\label{indfilt}
    Let $(H, \mu, \Delta, \eta, \epsilon, S,\huaF_{\bullet}H)$ be a filtered Hopf algebra. Then the set $P(H)$ of primitive elements is a filtered Lie algebra with the induced filtration $$\huaF_nP(H)=\huaF_nH\cap P(H),\quad n\geq 1.$$
\end{pro}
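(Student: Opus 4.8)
The plan is to check the two defining properties of a filtered Lie algebra for the pair $(P(H),\huaF_{\bullet}P(H))$, where $\huaF_nP(H)=\huaF_nH\cap P(H)$ for $n\geq 1$: namely that $P(H)$ is a Lie algebra carrying a descending filtration that begins in degree $1$, and that $[\huaF_nP(H),\huaF_mP(H)]\subset \huaF_{n+m}P(H)$ for all $m,n\geq 1$. First I would record the standard fact that $P(H)$ is closed under the commutator bracket: for $x,y\in P(H)$, since $\Delta$ is an algebra homomorphism and $\Delta(x)=x\otimes 1+1\otimes x$, $\Delta(y)=y\otimes 1+1\otimes y$, a direct expansion of $\Delta(xy)-\Delta(yx)$ shows the cross terms $x\otimes y$ and $y\otimes x$ cancel, leaving $\Delta([x,y])=[x,y]\otimes 1+1\otimes [x,y]$; hence $[x,y]\in P(H)$, so $P(H)$ is a Lie subalgebra of $H$ under the commutator.

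Next I would verify that the filtration starts in degree $1$, that is, $\huaF_1P(H)=P(H)$. This is the step where the filtered-Hopf-algebra hypotheses actually intervene. By condition (ii) in the definition we have $\huaF_1H=\ker(\epsilon)$, so it suffices to show that every primitive element lies in $\ker(\epsilon)$. Applying the counit axiom $(\epsilon\otimes \mathrm{id})\Delta=\mathrm{id}$ to $x\in P(H)$ yields $\epsilon(x)\,1+x=x$ (after the canonical identification $\bk\otimes H\cong H$ and using $\epsilon(1)=1$), forcing $\epsilon(x)=0$; thus $P(H)\subset \huaF_1H$ and $\huaF_1P(H)=\huaF_1H\cap P(H)=P(H)$. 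The descending property $\huaF_{n+1}P(H)\subset \huaF_nP(H)$ is then immediate from $\huaF_{n+1}H\subset \huaF_nH$.

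Finally, for the bracket estimate I would take $x\in \huaF_nH\cap P(H)$ and $y\in \huaF_mH\cap P(H)$ and use that $\mu$ is a homomorphism of filtered vector spaces. By the tensor-product filtration \eqref{FTP}, $x\otimes y\in \huaF_nH\otimes \huaF_mH\subset \huaF_{n+m}(H\otimes H)$, so $xy=\mu(x\otimes y)\in \huaF_{n+m}H$, and symmetrically $yx\in \huaF_{n+m}H$; hence $[x,y]\in \huaF_{n+m}H$. Combining this with $[x,y]\in P(H)$ from the first step gives $[x,y]\in \huaF_{n+m}H\cap P(H)=\huaF_{n+m}P(H)$, which is precisely condition \eqref{g1}. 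There is no serious obstacle here: the argument is routine, and the only points requiring care are invoking condition (ii) to pin down $\huaF_1P(H)=P(H)$ and invoking the filtered compatibility of $\mu$ for the bracket bound, rather than merely using that $H$ is a Hopf algebra.
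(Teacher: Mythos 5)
Your proof is correct and follows essentially the same route as the paper's: establish $P(H)\subset\ker(\epsilon)=\huaF_1H$ to get $\huaF_1P(H)=P(H)$, use $[\huaF_nH,\huaF_mH]\subset\huaF_{n+m}H$ together with $[P(H),P(H)]\subset P(H)$ for the bracket estimate, and note the descending property is inherited. You simply spell out the standard facts (primitives are counit-killed and closed under commutators) that the paper takes as known.
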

\begin{proof}
  Since $P(H)\subset\ker(\epsilon)=\huaF_1 H$, we have that
  $$\huaF_1 P(H)=\huaF_1 H\cap P(H)=P(H).$$

For any $n\geq 1$, we have
    $$\huaF_nP(H)=\huaF_nH\cap P(H)     \supset \huaF_{n+1}H\cap P(H)=\huaF_{n+1}P(H).$$
Also, for any $i,j\geq 1$,
$$
[\huaF_iP(H),\huaF_jP(H)]\subset [\huaF_iH,\huaF_jH]\subset \huaF_{i+j}H.
$$
By $[P(H),P(H)]\subset P(H)$, we deduce  that $[\huaF_iP(H),\huaF_jP(H)]\subset \huaF_{i+j}P(H)$.
\end{proof}

\begin{rmk}
It follows from the Poincar\'e--Birkhoff--Witt Theorem that
the induced filtration on $\g$ from the standard filtration  $\huaF_{\bullet}U(\g)$   on the universal enveloping algebra $U(\g)$ of a Lie algebra $\g$  coincides with the standard filtration on $\g$.
\end{rmk}

\section{Formal integration of complete Lie algebras and braces}\label{sec:intLie}

In this section, we revisit formal integrations of complete Lie algebras, in which the completion of a filtered Hopf algebra is the main ingredient. In particular, we give the precise description of group-like elements in the completion of the universal enveloping algebra of a filtered Lie algebra. Moreover, we establish the connection between formal integration of complete Lie algebras and braces.

\begin{defi} \mcite{Fr}
A {\bf   complete vector space} is a filtered vector space $(V,\huaF_{\bullet}V)$ such that the natural homomorphism
\vspace{-.2cm}
\begin{eqnarray}\mlabel{complete-mor}
\Phi_V:V\to \underleftarrow{\lim}V/\huaF_n V
\end{eqnarray}
is a linear isomorphism of vector spaces. Denote by  $\cVec$ the category of complete vector spaces,
which is a full subcategory of $\fVec$.
\end{defi}

Let $(V,\huaF_{\bullet}V)$ be a filtered vector space. Then the vector space $\widehat{V}=\underleftarrow{\lim}V/\huaF_n (V)$ is a complete vector space with the filtration given by
\begin{eqnarray}
\huaF_n(\widehat{V})=\ker \pi_n,\,\,\pi_n:\widehat{V}\lon V/\huaF_n (V).
\end{eqnarray}
We call the complete vector space $\big(\widehat{V},\huaF_\bullet\widehat{V}\big)$ the {\bf completion} of the filtered vector space $(V,\huaF_{\bullet}V)$.
Let $(V,\huaF_{\bullet}V)$ and $(W,\huaF_{\bullet}W)$ be complete vector spaces. The completion of the filtered vector space $(V\otimes W,\huaF_{\bullet}(V\otimes W))$ is called the {\bf complete tensor product} of $(V,\huaF_{\bullet}V)$ and $(W,\huaF_{\bullet}W)$.
We denote the complete tensor product of $(V,\huaF_{\bullet}V)$ and $(W,\huaF_{\bullet}W)$ by $(V\hat{\otimes}W,\huaF_{\bullet}(V\hat{\otimes}W))$.

\begin{pro}{\rm \cite[Proposition 7.3.11]{Fr}}\label{iso-cv}
Let $(V,\huaF_{\bullet}V)$ and $(W,\huaF_{\bullet}W)$ be filtered vector spaces. Then the  complete vector space $V{\hat{\otimes}}W$ is isomorphic to the  complete vector space $\widehat{V}\hat{\otimes}\widehat{W}$.
\end{pro}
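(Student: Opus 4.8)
The plan is to prove the stronger statement that the completion map $\Phi_V\colon V\to\widehat V$ (and likewise $\Phi_W$) induces an isomorphism on complete tensor products, i.e. that the map $\Phi_V\otimes\Phi_W\colon V\otimes W\to\widehat V\otimes\widehat W$ becomes an isomorphism after completion. Since by definition $V\hat{\otimes}W=\widehat{V\otimes W}=\underleftarrow{\lim}\,(V\otimes W)/\huaF_n(V\otimes W)$ and $\widehat V\hat{\otimes}\widehat W=\widehat{\widehat V\otimes\widehat W}$, it suffices to produce, for every $n$, an isomorphism
\[
(V\otimes W)/\huaF_n(V\otimes W)\;\cong\;(\widehat V\otimes\widehat W)/\huaF_n(\widehat V\otimes\widehat W)
\]
induced by $\Phi_V\otimes\Phi_W$ and compatible with the transition maps of the two inverse systems; passing to $\underleftarrow{\lim}$ then yields the claim.

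First I would record a level-wise comparison between a filtered space and its completion. Because the structure maps of the system $\{V/\huaF_k V\}$ are surjective, each projection $\pi_k\colon\widehat V\to V/\huaF_k V$ is surjective, and since $\huaF_k\widehat V=\ker\pi_k$ by definition, $\Phi_V$ induces an isomorphism $V/\huaF_k V\cong\widehat V/\huaF_k\widehat V$ for every $k$. Working directly with compatible sequences, one checks moreover that under this isomorphism the subspace $\huaF_j V/\huaF_k V$ corresponds exactly to $\huaF_j\widehat V/\huaF_k\widehat V$ whenever $j\leq k$: a sequence in $\ker\pi_j$ has $k$-th component lying in $\huaF_j V/\huaF_k V$, and conversely every such element lifts through $\Phi_V$. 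The same holds for $W$.

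Next I would express each truncation of the tensor product purely in terms of these quotients and subspaces. Using $\huaF_n(V\otimes W)=\sum_{p+q=n}\huaF_p V\otimes\huaF_q W$ and discarding the terms with $p=0$ or $q=0$, which already vanish modulo $\huaF_n V\otimes W+V\otimes\huaF_n W$, exactness of $\otimes$ over the field $\bk$ gives
\[
(V\otimes W)/\huaF_n(V\otimes W)\;\cong\;\frac{(V/\huaF_n V)\otimes(W/\huaF_n W)}{\sum_{p+q=n,\,p,q\geq 1}(\huaF_p V/\huaF_n V)\otimes(\huaF_q W/\huaF_n W)}.
\]
The right-hand side depends only on the data $V/\huaF_n V$, $W/\huaF_n W$ and their nested subspaces $\huaF_p V/\huaF_n V$, $\huaF_q W/\huaF_n W$, which by the previous step match bijectively with the corresponding data for $\widehat V,\widehat W$; hence the map induced by $\Phi_V\otimes\Phi_W$ on these quotients is an isomorphism, and naturality in $V,W$ makes it commute with the inverse-system projections.

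I expect the genuine obstacle to be the comparison of subspaces in the first step, rather than any single computation: one must verify that $\huaF_j\widehat V/\huaF_k\widehat V$ and $\huaF_j V/\huaF_k V$ coincide even when the original filtration on $V$ is not separated, so that a naive splitting $V\cong\bigoplus_p\gr_p V$ is unavailable and the whole argument has to be phrased through the inverse system. Once this is in place, the truncation formula and the passage to $\underleftarrow{\lim}$ are routine, since inverse limits preserve isomorphisms of inverse systems.
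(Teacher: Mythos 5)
Your argument is correct. Note that the paper itself gives no proof of this statement—it is quoted from Fresse (Proposition 7.3.11)—so there is nothing internal to compare against; your level-wise reduction is essentially the standard argument. The two points you rightly identify as the real content both check out: the projections $\pi_k\colon\widehat V\to V/\huaF_kV$ are surjective (already because $\pi_k\circ\Phi_V$ is the quotient map), giving $V/\huaF_kV\cong\widehat V/\huaF_k\widehat V$, and $\pi_k(\huaF_j\widehat V)=\huaF_jV/\huaF_kV$ for $j\le k$ since a compatible sequence killed by $\pi_j$ has its $k$-th component in the image of $\huaF_jV$, while every class in $\huaF_jV/\huaF_kV$ is hit via $\Phi_V$; together with exactness of $\otimes$ over the field $\bk$, this makes each truncation $(V\otimes W)/\huaF_n(V\otimes W)$ canonically isomorphic to the corresponding truncation for $\widehat V\otimes\widehat W$, compatibly with the transition maps, and the claim follows by passing to the inverse limit.
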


Moreover,  $(\cVec,\hat{\otimes},\bk)$ is a $\bk$-linear symmetric monoidal category. By Proposition \ref{iso-cv},  one can see that the completion functor $\widehat{(-)}$ is a  symmetric monoidal functor from the  category $\fVec$ to the  category $\cVec$. 

\begin{defi}
A filtered Lie algebra $(\g, \huaF_{\bullet}\g)$ is called {\bf complete} if there is an isomorphism of vector spaces
	$
	\g\cong\underleftarrow{\lim}~\g/\huaF_{n+1}\g
	$.
\end{defi}

Recall that
    a Lie algebra $\g$ is called nilpotent of nilindex $n\in \mathbb N$, if $\g^n=0$.

\begin{ex}\label{nilp}

Let $\g$ be a nilpotent Lie algebra of nilindex $n$.
Then the standard filtered Lie algebra $(\g,\huaF_{\bullet} \g)$ is a complete Lie algebra. The isomorphism between $\g$ and $\varprojlim \g/\huaF_{n+1}\g$  is given by
\begin{equation}\label{corresp}
x\mapsto (\pi_1(x),\pi_2(x), \ldots, \pi_{n-1}(x),x,x,\ldots,x,\ldots),
\end{equation}
where $\pi_i:\g\rightarrow \g/\huaF_{i+1}\g$ is the natural Lie algebra homomorphism.
\end{ex}

\begin{defi}
    A {\bf complete Hopf algebra} is a Hopf algebraic object in the category $(\cVec,\hat{\otimes}, \bk)$. More precisely, a complete Hopf algebra is a complete vector space $(H,\huaF_{\bullet}H)$ equipped with morphisms of complete vector spaces:
    \begin{itemize}
    \item $\mu: H\hat{\otimes} H\longrightarrow H$ is a multiplication,
     \item $\Delta: H\longrightarrow H\hat{\otimes} H$ is a comultiplication,
     \item $\eta: \bk\longrightarrow H$ is a unit,
     \item $\epsilon: H\longrightarrow \bk$ is a counit,
     \item $S: H\longrightarrow H$ is the antipode,
\end{itemize}
that satisfy the usual axioms of a Hopf algebra.
\end{defi}

Let $(\g,\huaF_{\bullet}\g)$ be a filtered Lie algebra and $(U(\g),\Delta,\eta,\epsilon, S,\huaF_{\bullet}U(\g))$ be the filtered universal enveloping algebra of $(\g,\huaF_{\bullet}\g)$.
Let $\pi_n: U(\mathfrak g)\rightarrow U(\g)/\huaF_{n+1}U(\mathfrak g)$ ($n\geq -1$) be the natural homomorphism. Let  $p_n: U(\mathfrak g)/\huaF_{n+1}U(\mathfrak g)\rightarrow U(\mathfrak g)/\huaF_nU(\mathfrak g)$ ($n\geq 1$) be the map defined by
$$
p_n(x+\huaF_{n+1}U(\mathfrak g))=x+\huaF_nU(\mathfrak g).
$$
Then $(U(\mathfrak g)/\huaF_nU(\mathfrak g), p_n)$ is an inverse system of algebras.

Let $\widehat{U}(\g)=\varprojlim (U(\mathfrak g)/\huaF_n(U(\mathfrak g))$  be the completion of the algebra $U(\mathfrak g)$. As an algebra, $\widehat{U}(\g)$ is a subalgebra in $\prod\limits_{n\geq 0} U(\mathfrak g)/\huaF_{n+1} U(\mathfrak g)$, the direct product of algebras $\left(U(\mathfrak g)/\huaF_{n+1} U(\mathfrak g)\right)_{n\geq 0}$ with component-wise operations. Moreover, as a vector space, $\widehat{U}(\g)$ consists of elements $$(x_0,\ldots,x_n,\ldots)\in \prod\limits_{n\geq 0} U(\mathfrak g)/\huaF_{n+1} U(\mathfrak g) $$ satisfying $p_n(x_n)=x_{n-1}$. Define a filtration on $\widehat{U}(\g)$ as
\begin{eqnarray}
\huaF_n(\widehat{U}(\g))=\ker \pi_{n-1},\ n\geq 0.
\end{eqnarray}

Consider the tensor product $\widehat{U}(\g)\otimes\widehat{U}(\g)$ as a filtered vector space with the filtration defined by \eqref{FTP}. Let $P_n:\widehat{U}(\g)\otimes\widehat{U}(\g)/\huaF_{n+1}(\widehat{U}(\g)\otimes\widehat{U}(\g))\rightarrow  \widehat{U}(\g)\otimes\widehat{U}(\g)/\huaF_{n}(\widehat{U}(\g)\otimes\widehat{U}(\g))$ be the natural homomorphism.   Then $(\widehat{U}(\g)\otimes\widehat{U}(\g), \huaF_{\bullet}(\widehat{U}(\g)\otimes\widehat{U}(\g)),P_n)$ is an inverse system of vector spaces. Define the complete tensor product $\hat{\otimes}$ as:
$$\widehat{U}(\g)\hat{\otimes}\widehat{U}(\g):=\varprojlim (\widehat{U}(\g)\otimes \widehat{U}(\g)/\huaF_{n+1}(\widehat{U}(\g)\otimes\widehat{U}(\g)))\cong\varprojlim ({U}(\g)\otimes {U}(\g)/\huaF_{n+1}({U}(\g)\otimes{U}(\g))).$$
Since {$\Delta: U(\g)\rightarrow U(\g)\otimes U(\g)$ is a homomorphism of filtered vector spaces}, for any $n\geq 0$,  we can define a   linear map  $\Delta_n: U(\mathfrak g)/\huaF_{n+1} U(\mathfrak g)\rightarrow U(\mathfrak g)\otimes U(\mathfrak g)/\huaF_{n+1}\left (U(\mathfrak g)\otimes U(\mathfrak g)\right)$
by
$$
\Delta_n(x+\huaF_{n+1}U(\mathfrak g))=\Delta(x)+\huaF_{n+1}\left (U(\mathfrak g)\otimes U(\mathfrak g)\right).
$$
It is straightforward to check that $\Delta_{n-1}\circ p_n=P_n\circ \Delta_n$. In this case, we can define a map $\widehat{\Delta}:\widehat{U}(\g)\rightarrow \widehat{U}(\g)\hat{\otimes}\widehat{U}(\g)$: for any $(x_0,\ldots,x_n,\ldots)\in \widehat{U}(\g)$ we put
$$
\widehat{\Delta}(x_0,\ldots,x_n,\ldots)=(\Delta_0(x_0),\ldots,\Delta_n(x_n),\ldots)\in \widehat{U}(\g)\hat{\otimes}\widehat{U}(\g).
$$

Similarly, we can define completions of the other Hopf algebra operations defined on $U(\g)$ and consider $\widehat{U}(\g)$ as a complete Hopf algebra.

Given $x=(0,x_1,\ldots,x_n,\ldots)\in \huaF_1\widehat{U}(\g)$, that is,
$x_1\in \huaF_1U(\g)$ and $x_{i+1}-x_i\in \huaF_{i+1}{U}(\g)$
for $i\ge 1$, we can define the exponential map $\exp:\huaF_1\widehat U(\g)\rightarrow \widehat U(\g)$ as
\begin{equation}\label{exp}
\exp(x)=\left (1,1+x_1,1+x_2+\frac{x_2^2}{2},\ldots, \sum\limits_{i=0}^n\frac{x_n^i}{i!},\ldots\right ):=1+x+\frac{x^2}{2}+\ldots+\frac{x^n}{n!}+\ldots.
\end{equation}

\begin{defi}
An element $x\in \widehat{U}(\g)$ is called a group-like element  if $\widehat{\Delta}(x)=x\hat{\otimes}x$ and $x\not=0$. We denote the set of the group-like elements of $\widehat{U}(\g)$ by $\G$.
\end{defi}

Since $\g$ is naturally embedded into $U(\g)$ and $\huaF_n \g\subset \huaF_n U(\g)$, we may consider the completion $\widehat{\g}=\varprojlim (\g/\huaF_{n+1}\g)$ as a subspace in $\huaF_1\widehat{U}(\g)$.

\begin{pro}\label{grexp}{\rm \cite[Proposition 2.6]{Quillen}}
The  map $\exp$
is a
bijective map from  $\widehat{\g}$ to $\G$.
\end{pro}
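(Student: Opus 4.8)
The plan is to show that $\exp$ maps $\widehat{\g}$ bijectively onto $\G$ by first verifying it lands in $\G$, then constructing an explicit two-sided inverse via the logarithm. Throughout I work level by level in the inverse limit, using the compatibility $\Delta_{n-1}\circ p_n=P_n\circ\Delta_n$ established above so that a statement proved modulo each $\huaF_{n+1}$ assembles into a statement in the completion.

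\textbf{Step 1: $\exp(\widehat{\g})\subset\G$.} Take $x=(0,x_1,\ldots)\in\widehat{\g}$, so each $x_n\in\g/\huaF_{n+1}\g$ is primitive in $U(\g)/\huaF_{n+1}U(\g)$, meaning $\Delta_n(x_n)=x_n\otimes 1+1\otimes x_n$. I would verify at each finite level $n$ that $\Delta_n(\exp(x_n))=\exp(x_n)\hat{\otimes}\exp(x_n)$: since $x_n\otimes 1$ and $1\otimes x_n$ commute in the (complete) tensor product algebra, the standard identity $\exp(a+b)=\exp(a)\exp(b)$ for commuting elements gives
\begin{equation*}
\Delta_n(\exp(x_n))=\exp(\Delta_n(x_n))=\exp(x_n\otimes 1)\exp(1\otimes x_n)=\exp(x_n)\otimes\exp(x_n).
\end{equation*}
Because $\widehat{\Delta}$ and $\exp$ are both defined componentwise and $\Delta_n$ is an algebra map that is compatible with the projections, these finite-level identities glue to $\widehat{\Delta}(\exp(x))=\exp(x)\hat{\otimes}\exp(x)$; and $\exp(x)$ has leading term $1\neq 0$, so it is group-like.

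\textbf{Step 2: construct the inverse.} For a group-like $g\in\G$, I would first observe $\epsilon(g)=1$: applying the counit to $\widehat{\Delta}(g)=g\hat{\otimes}g$ and using $(\epsilon\hat{\otimes}\mathrm{id})\widehat{\Delta}=\mathrm{id}$ forces $\epsilon(g)g=g$, hence $\epsilon(g)=1$ as $g\neq 0$. Thus $g=1+u$ with $u\in\huaF_1\widehat{U}(\g)$, and I define $\log(g)=\log(1+u)=\sum_{k\geq 1}(-1)^{k+1}u^k/k$, which converges in the complete algebra since $u\in\huaF_1$ implies $u^k\in\huaF_k$. The key claim is that $\log(g)$ is primitive, i.e. lies in $\widehat{\g}\subset\huaF_1\widehat{U}(\g)$. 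This I verify at each level using that $g$ is grouplike and the formal logarithm of a grouplike element in a (pro-nilpotent, cocommutative) Hopf algebra is primitive: working in $U(\g)/\huaF_{n+1}U(\g)$, the element $g_n=1+u_n$ satisfies $\Delta_n(g_n)=g_n\otimes g_n$, and since $g_n\otimes 1$ and $1\otimes g_n$ commute one gets $\Delta_n(\log g_n)=\log(\Delta_n g_n)=\log(g_n\otimes g_n)=\log g_n\otimes 1+1\otimes\log g_n$, so $\log g_n$ is primitive in $U(\g)/\huaF_{n+1}U(\g)$. By the filtered Poincar\'e--Birkhoff--Witt theorem the primitives of $U(\g)/\huaF_{n+1}U(\g)$ are exactly the image of $\g/\huaF_{n+1}\g$, so $\log g_n\in\g/\huaF_{n+1}\g$; assembling gives $\log(g)\in\widehat{\g}$.

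\textbf{Step 3: mutual inverseness.} Finally I would check $\exp\circ\log=\mathrm{id}$ on $\G$ and $\log\circ\exp=\mathrm{id}$ on $\widehat{\g}$. These reduce to the classical identities $\exp(\log(1+u))=1+u$ and $\log(\exp(x))=x$ between the two formal power series, which hold at every finite level in the nilpotent quotient algebra $U(\g)/\huaF_{n+1}U(\g)$ (where only finitely many terms survive) and therefore hold in the inverse limit. The main obstacle is the primitivity claim in Step 2, where one must justify that $\log$ of a grouplike element is primitive; this is where the cocommutativity of $U(\g)$ (guaranteeing $g\otimes 1$ and $1\otimes g$ commute so that $\log$ is additive on the product) and the filtered PBW identification of the primitives with $\g/\huaF_{n+1}\g$ are both essential, and care is needed to confirm these finite-level arguments are compatible with the structure maps $p_n,P_n$ so they pass cleanly to the completion.
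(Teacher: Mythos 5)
The paper does not prove this proposition at all: it is quoted verbatim from Quillen (\cite[Proposition 2.6]{Quillen}), so there is no in-paper argument to compare against. Your proposal reconstructs the standard proof of that cited result, and the overall architecture is sound: $\exp$ of a primitive is group-like, $\log$ of a group-like is primitive, and the two formal series are mutually inverse level by level in the nilpotent quotients $U(\g)/\huaF_{n+1}U(\g)$, hence in the inverse limit. Two remarks. First, a small misattribution: the commutation of $g\otimes 1$ and $1\otimes g$ (and of $x\otimes 1$ and $1\otimes x$) holds in \emph{any} tensor product of algebras and has nothing to do with cocommutativity of $U(\g)$; cocommutativity is simply not the ingredient you need anywhere in this argument. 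Second, and more substantively, the entire weight of the proposition rests on the step you correctly single out as the ``main obstacle'': the identification of the primitives of $\widehat{U}(\g)$ with $\widehat{\g}$. Your appeal to ``the filtered Poincar\'e--Birkhoff--Witt theorem'' hides two nontrivial facts that must both be checked: (a) $\g\cap\huaF_{n+1}U(\g)=\huaF_{n+1}\g$, so that the image of $\g$ in $U(\g)/\huaF_{n+1}U(\g)$ really is $\g/\huaF_{n+1}\g$ (the paper only records this for the standard filtration, as a Remark after Proposition \ref{indfilt}); and (b) every element of $U(\g)/\huaF_{n+1}U(\g)$ that is primitive \emph{modulo} $\huaF_{n+1}$ lies in the image of $\g$ --- this is not the same as Friedrichs' theorem for $U(\g)$ itself and requires an associated-graded argument using $\gr U(\g)\cong U(\gr\g)$ together with characteristic zero. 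These are exactly the points Quillen's cited proposition is responsible for, so your proof is a correct skeleton whose hardest vertebra is still outsourced; as a blind reconstruction of a classical result that the paper itself only cites, that is a reasonable state of affairs, but it should be stated as such rather than presented as a complete argument.
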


Now we have two bijections: $\exp:\widehat{\g}\rightarrow \G$ and $\log=\exp^{-1}:\G\rightarrow \widehat{\g}$. Consequently, the Baker-Campbell-Hausdorff formula gives rise to a group structure on the set $\widehat \g,$ which is the main ingredient in the study of formal integration of Rota-Baxter Lie algebras in the next section.

\begin{pro}\label{pro:group-BCH}
 Let $(\g,\huaF_{\bullet}\g)$ be a filtered Lie algebra and $\widehat \g$ be its completion. Then there is a group structure $*:\widehat\g\times\widehat\g\lon \widehat\g$ on $\widehat\g$ which is given by
\begin{eqnarray}\label{eq:BCH-group}
x* y\triangleq \BCH(x,y)= \log\big(\exp(x)\cdot\exp(y)\big)\tforall x,y\in\widehat\g.
\end{eqnarray}
Furthermore, both $\exp$ and $\log$ are group isomorphisms.
\end{pro}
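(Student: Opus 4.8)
The plan is to reduce everything to a single structural fact: that the set $\G$ of group-like elements is a group under the multiplication $\cdot$ it inherits from $\widehat{U}(\g)$, and then to transport this group structure to $\widehat{\g}$ along the bijection $\exp\colon\widehat{\g}\to\G$ of Proposition \ref{grexp}. Indeed, the operation defined in \eqref{eq:BCH-group} is exactly $x*y=\log\big(\exp(x)\cdot\exp(y)\big)$, so once $(\G,\cdot)$ is known to be a group, the identity $\exp(x*y)=\exp(x)\cdot\exp(y)$ holds by construction, making the bijection $\exp\colon(\widehat{\g},*)\to(\G,\cdot)$ a group isomorphism and its inverse $\log$ a group isomorphism as well. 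Thus the whole proposition follows from checking that $(\G,\cdot)$ is a group.

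To verify that $\G$ is a group I would argue exactly as in the classical Hopf-algebraic setting, now using the complete Hopf algebra structure on $\widehat{U}(\g)$. For closure, since $\widehat{\Delta}$ is an algebra homomorphism, for group-like $x,y$ one computes $\widehat{\Delta}(x\cdot y)=\widehat{\Delta}(x)\,\widehat{\Delta}(y)=(x\hat{\otimes}x)(y\hat{\otimes}y)=(x\cdot y)\hat{\otimes}(x\cdot y)$, and $x\cdot y\neq 0$ because every element of $\G$ lies in $\exp(\widehat{\g})$ and hence has the form $1+(\text{higher filtration terms})$. The unit $1$ is group-like and serves as the identity. For inverses I would first apply the counit axiom $(\epsilon\hat{\otimes}\Id)\circ\widehat{\Delta}=\Id$ to a group-like $x$, obtaining $\epsilon(x)x=x$ and hence $\epsilon(x)=1$; then the antipode axiom $\mu\circ(S\hat{\otimes}\Id)\circ\widehat{\Delta}=\eta\circ\epsilon=\mu\circ(\Id\hat{\otimes}S)\circ\widehat{\Delta}$ gives $S(x)\cdot x=\eta\epsilon(x)=1=x\cdot S(x)$, so $S(x)$ is a two-sided inverse. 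Finally $S(x)$ is again in $\G$: it is nonzero since $S(x)\cdot x=1$, and it is group-like because $S$ is an anti-coalgebra morphism, whence $\widehat{\Delta}(S(x))=S(x)\hat{\otimes}S(x)$. Associativity of $\cdot$ on $\G$ is inherited from associativity of the multiplication on $\widehat{U}(\g)$.

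With $(\G,\cdot)$ established as a group, transport of structure along $\exp$ gives the group $(\widehat{\g},*)$: the identity is $\log(1)=0$, the inverse of $x$ is $\log\big(S(\exp(x))\big)$, and associativity of $*$ pulls back from that of $\cdot$. The main point requiring care is not any of these routine identities but rather that all the Hopf-algebra relations be invoked in their genuinely \emph{completed} form, i.e. in $\widehat{U}(\g)$ with the complete tensor product $\hat{\otimes}$ and with $\widehat{\Delta}$, $S$, $\epsilon$ the actual completed structure maps. This is precisely what the definition of a complete Hopf algebra guarantees, so the classical computation that group-like elements form a group goes through verbatim; the only thing I would pause to confirm is that $\widehat{\Delta}$, the completed antipode, and $\epsilon$ satisfy the axioms on $\widehat{U}(\g)$, which was arranged in the construction preceding Proposition \ref{grexp}. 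This completes the plan $\tforall x,y\in\widehat{\g}$.
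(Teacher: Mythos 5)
Your proposal is correct and follows essentially the same route the paper intends: the paper states the result as an immediate consequence of the bijection $\exp\colon\widehat{\g}\to\G$ from Proposition \ref{grexp}, transporting the group structure of the group-like elements $(\G,\cdot)$ to $\widehat{\g}$, which is exactly your argument. Your filling in of the standard Hopf-algebraic verification that $\G$ is a group (closure via $\widehat{\Delta}$ being an algebra map, $\epsilon(x)=1$, inverses via the antipode) is sound and merely makes explicit what the paper leaves implicit.
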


\begin{rmk}
  If $(\g,\huaF_{\bullet}\g)$ is a complete   Lie algebra,
then
  the group $(\g,*)$   is viewed as its formal integration.
\end{rmk}

\begin{ex}\label{ex:Hei-group}
Let $\mathfrak g$ be the 3-dimensional Heisenberg Lie algebra with the basis $e_1,e_2,e_3$ satisfying
$$
[e_1,e_2]=e_3,\quad [e_3,e_1]=[e_3,e_2]=0.
$$
We have that $\g^3=0$ and the standard filtration $\huaF_1\g=\g$, $\huaF_2\g=span\{e_3\}$, $\huaF_3\g=0$.
So $(\g,\huaF_{\bullet}\g)$ is a complete  Lie algebra. Since $\g^3=0$, the group structure $*$ on $\g$ is given by
\begin{equation}\label{eq:HeiBCH-group}
x*y=x+y+\frac{1}{2}[x,y].
\end{equation}
\end{ex}

At the end of this section, we give an application of formal integration of complete Lie algebras, and establish the connection with braces.

\begin{defi}{\rm\cite{Rump1}}
  A {\bf brace} is a triple $(B, +, *)$, where
  $(B, +)$ is an abelian group and $(B, *)$ is a group such that the following compatibility condition holds:
 \begin{equation}\label{e9}
 x*(y+z)=(x*y)-x+(x*z),\quad \forall x,y,z\in B.
 \end{equation}
\end{defi}

\begin{thm}
Let $\mathfrak g$ be a nilpotent Lie algebra. If $\g^3=0,$ then $(\mathfrak g, +, *)$ is a brace, where $+$ is natural abelian group structure on the vector space $\g$ and  the group structure $*$ is given by \eqref{eq:BCH-group}.
\end{thm}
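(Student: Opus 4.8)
The plan is to verify the brace compatibility condition \eqref{e9} directly, using the fact that when $\g^3=0$ the group law $*$ reduces to the simple explicit formula
\begin{equation*}
x*y=x+y+\tfrac{1}{2}[x,y],
\end{equation*}
exactly as in Example \ref{ex:Hei-group}. First I would justify this reduction: since $\g^3=0$, every iterated bracket of length $\geq 3$ vanishes, so the Baker--Campbell--Hausdorff series $\BCH(x,y)=\log(\exp(x)\cdot\exp(y))$ truncates after the quadratic term, leaving precisely $x+y+\tfrac12[x,y]$. Having established that $(\g,*)$ is a group by Proposition \ref{pro:group-BCH} and that $(\g,+)$ is the underlying abelian group of the vector space, it remains only to check the single identity \eqref{e9}.

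The main computation is a plug-in verification. I would expand both sides of
\begin{equation*}
x*(y+z)=(x*y)-x+(x*z)
\end{equation*}
using the truncated formula. The left-hand side is
\begin{equation*}
x+(y+z)+\tfrac12[x,y+z]=x+y+z+\tfrac12[x,y]+\tfrac12[x,z].
\end{equation*}
For the right-hand side I would first compute $x*y=x+y+\tfrac12[x,y]$ and $x*z=x+z+\tfrac12[x,z]$, then subtract $x$ (in the additive group) and add the two results; since the $-x$ sits between the two products as ordinary vector addition, the right-hand side collapses to
\begin{equation*}
\big(x+y+\tfrac12[x,y]\big)-x+\big(x+z+\tfrac12[x,z]\big)=x+y+z+\tfrac12[x,y]+\tfrac12[x,z],
\end{equation*}
which matches the left-hand side. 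Here I am using bilinearity of the bracket in the first expansion.

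I do not expect any genuine obstacle in this argument; the content is essentially bookkeeping once the truncation of $\BCH$ is in place. The only point requiring a word of care is the interplay between the two group operations $+$ and $*$ on the same underlying set: one must keep track of where addition versus the BCH product is being used, and confirm that the $-x$ in \eqref{e9} is the additive inverse. A secondary remark worth including is why the additive group is abelian (it is simply the vector-space addition on $\g$), so that $(\g,+,*)$ indeed qualifies as a brace rather than merely a skew brace. With the explicit quadratic formula available, the compatibility is immediate from linearity of $[x,-]$, so the proof is short.
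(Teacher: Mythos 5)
Your proposal is correct and follows essentially the same route as the paper: both reduce the group law to the truncated formula $x*y=x+y+\tfrac12[x,y]$ (valid because $\g^3=0$) and verify the brace identity \eqref{e9} by bilinearity of the bracket. The only difference is that you spell out the truncation of the BCH series explicitly, which the paper leaves implicit via Example \ref{ex:Hei-group}.
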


\begin{proof}
We need to check that for all $x,y,z\in \mathfrak g$, the equality \eqref{e9} holds. We have
\begin{eqnarray*}
  x*(y+z)&=&x+y+z+\frac{1}{2}[x,y+z]\\
  &=&(x+y+\frac{1}{2}[x,y])-x+(x+z+\frac{1}{2}[x,z])\\
  &=&(x*y)-x+(x*z).
\end{eqnarray*}
   The statement is proved.
\end{proof}

\begin{ex}
  Let $\mathfrak g$ be the 3-dimensional Heisenberg Lie algebra given in Example \ref{ex:Hei-group}. Then $(\g,+,*)$ is a brace, where $+$ is the natural abelian group structure on the vector space $\g$ and  the group structure $*$ is given by \eqref{eq:HeiBCH-group}.
\end{ex}

\section{Formal integration of a complete Rota-Baxter Lie algebra}\label{sec:intRB}

In this section first we show that the universal enveloping algebra of a filtered Rota-Baxter Lie algebra is a filtered Rota-Baxter Hopf algebra. Then by completing the Rota-Baxter operator in the filtered Rota-Baxter universal enveloping algebra, we obtain a Rota-Baxter group. In particular, for a complete Rota-Baxter Lie algebra $(\g, \huaF_{\bullet}\g, R)$, we obtain a Rota-Baxter group  $(\g,*,\frkR)$, which serves as the formal integration of the complete Rota-Baxter Lie algebra $(\g, \huaF_{\bullet}\g, R)$. An explicit formula for the Rota-Baxter operator $\frkR$ is given via the post-Lie Magnus expansion.

\begin{defi} \label{defi:O}
A {\bf Rota-Baxter Lie algebra} is a pair $((\g,[\cdot,\cdot]), R)$, where $(\g,[\cdot,\cdot])$ is a Lie algebra and $R:\g\longrightarrow\g$ is a  { Rota-Baxter operator of weight $1$}, i.e.
 \begin{equation}
[R(x),R(y)]=R\big([R(x),y]+ [x,R(y)]+[x,y] \big), \quad \forall x, y \in \g.
 \end{equation}
 \end{defi}
Define for any $x,y\in \g$,
$$
[x,y]_R=[R(x),y]+[x,R(y)]+[x,y].
$$
Then the space $\g$ with this  new product $[\cdot,\cdot]_R$ is a Lie algebra, and $R$ is a Lie algebra homomorphism from $(\g,[\cdot,\cdot]_R)$ to $(\g,[\cdot,\cdot])$. The Lie algebra $(\g,[\cdot,\cdot]_R)$ is called the {\bf descendant Lie algebra} of the Rota-Baxter Lie algebra $((\g,[\cdot,\cdot]), R)$. For simplicity, we denote the descendant Lie algebra by $\g_R$.

\begin{defi}{\rm \cite[Definition 2.1]{GLS}}
A {\bf Rota-Baxter group} is a pair $(G,\frkR)$, where $G$ is a group and $\frkR:G\rightarrow G$ is a map satisfying the following equality:
\begin{equation}\label{defi:1}
    \frkR(g)\frkR(h)=\frkR(g\frkR(g)h\frkR(g)^{-1}), \quad \forall g,h\in G.
\end{equation}
\end{defi}

Note that if $(G,\frkR)$ is a Rota-Baxter group and $x\in G$, then $\frkR(1)=1$ and $$\frkR(x)^{-1}=\frkR\Big(\frkR(x)^{-1}x^{-1}\frkR(x)\Big).$$ Define a new multiplication on $G$ by $$x\star y=x\frkR(x)y\frkR(x)^{-1},\quad \forall x,y\in G$$ Then $(G,\star)$ is a group, and $\frkR$ is a group homomorphism, that is, $\frkR(x)\frkR(y)=\frkR(x\star y)$.

\begin{defi} {\rm\cite{Goncharov}}
A {\bf Rota-Baxter Hopf algebra} is a pair $(H,\huaR)$, where $H$ is a cocommutative Hopf algebra and $\huaR:H\to H$ is a coalgebra homomorphism satisfying the following equality:
\begin{eqnarray}\label{RBH}
\huaR(h)\huaR(t)=\huaR\Big(h_{(1)
}\huaR(h_{(2)})tS(\huaR(h_{(3)}))\Big)
,\,\,\forall h,t\in H.\end{eqnarray}
\end{defi}

\begin{pro}\label{RBH-Hopf}
A Rota-Baxter Hopf algebra $(H,\huaR)$  induces a new Hopf algebra $(H,\star,\Delta, \eta, \epsilon, S_{\huaR})$, where the new product $\star$ and the antipode $S_{\huaR}$ are defined as
\begin{eqnarray}
x\star y&=&x_{(1)}\huaR(x_{(2)})yS(\huaR(x_{(3)})), \label{nprodU}\\
\nonumber  S_{\huaR}(x)&=&S(\huaR(x_{(1)}))S(x_{(2)})\huaR(x_{(3)}).
\end{eqnarray}
\end{pro}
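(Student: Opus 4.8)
The plan is to verify, in order, that $\Delta$ and $\epsilon$ are algebra homomorphisms for $\star$ (so that $(H,\star,\Delta,\eta,\epsilon)$ is a bialgebra, necessarily cocommutative since the coalgebra structure is unchanged), that $\star$ is associative and unital, and finally that $S_{\huaR}$ is a two-sided antipode. Two preliminary facts carry the whole argument and I would isolate them first. Since $\huaR$ is a coalgebra homomorphism, $\Delta\circ\huaR=(\huaR\otimes\huaR)\circ\Delta$ and $\epsilon\circ\huaR=\epsilon$; in particular $\huaR(1)$ is group-like, and putting $h=t=1$ in \eqref{RBH} gives $\huaR(1)^2=\huaR(\huaR(1)S(\huaR(1)))=\huaR(1)$, whence $\huaR(1)=1$. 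Moreover, \eqref{RBH} is exactly the assertion that
\[
\huaR(x\star y)=\huaR(x)\huaR(y),
\]
that is, $\huaR$ is an algebra homomorphism from $(H,\star)$ to $(H,\cdot)$. This reformulation is the conceptual heart of the statement and is used repeatedly below.

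For the unit, $1\star y=\huaR(1)\,y\,S(\huaR(1))=y$, while $x\star 1=x_{(1)}\huaR(x_{(2)})S(\huaR(x_{(3)}))=x$, the last step because $\huaR(x_{(2)})S(\huaR(x_{(3)}))=\epsilon(x_{(2)})1$ by the coalgebra-homomorphism property of $\huaR$ followed by the antipode axiom. Comultiplicativity is where cocommutativity is indispensable. Computing $\Delta(x\star y)$ with the (original) multiplicativity of $\Delta$, the identity $\Delta\huaR=(\huaR\otimes\huaR)\Delta$, and $\Delta(S(a))=S(a_{(1)})\otimes S(a_{(2)})$ (valid as $H$ is cocommutative), I would obtain a sum indexed by a sixfold coproduct of $x$; this matches $\Delta(x)\star\Delta(y)=(x_{(1)}\star y_{(1)})\otimes(x_{(2)}\star y_{(2)})$ after reordering the six coproduct legs, which is legitimate precisely because the iterated coproduct of a cocommutative coalgebra is invariant under arbitrary permutations of its tensor factors. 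The counit identity $\epsilon(x\star y)=\epsilon(x)\epsilon(y)$ follows at once from $\epsilon\huaR=\epsilon$ and $\epsilon\circ S=\epsilon$.

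With $\Delta$ now known to be multiplicative, the threefold coproduct $(x\star y)_{(1)}\otimes(x\star y)_{(2)}\otimes(x\star y)_{(3)}$ equals $(x_{(1)}\star y_{(1)})\otimes(x_{(2)}\star y_{(2)})\otimes(x_{(3)}\star y_{(3)})$, so
\[
(x\star y)\star z=(x_{(1)}\star y_{(1)})\,\huaR(x_{(2)}\star y_{(2)})\,z\,S(\huaR(x_{(3)}\star y_{(3)})).
\]
Applying $\huaR(a\star b)=\huaR(a)\huaR(b)$ to the middle and last legs and then expanding $x_{(1)}\star y_{(1)}$ by the definition of $\star$, a cancellation of the form $S(\huaR(a_{(1)}))\huaR(a_{(2)})=\epsilon(a)1$ occurs between two adjacent coproduct legs of $x$ (regrouped by coassociativity and collapsed by the coalgebra-homomorphism property of $\huaR$ and the antipode axiom). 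The result is
\[
(x\star y)\star z=x_{(1)}\huaR(x_{(2)})\,y_{(1)}\huaR(y_{(2)})\,z\,S(\huaR(y_{(3)}))S(\huaR(x_{(3)})),
\]
which is exactly $x\star(y\star z)$ obtained by direct expansion. Notably this step uses only coassociativity, not cocommutativity.

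It remains to check the antipode axioms $\sum x_{(1)}\star S_{\huaR}(x_{(2)})=\epsilon(x)1=\sum S_{\huaR}(x_{(1)})\star x_{(2)}$. Substituting the formulas for $\star$ and $S_{\huaR}$ yields a sum over a sixfold coproduct; after using cocommutativity to bring each $\huaR$ next to its matching $S(\huaR(\cdot))$ and each $a_{(1)}$ next to $S(a_{(2)})$, the antipode axioms $a_{(1)}S(a_{(2)})=\epsilon(a)1=S(a_{(1)})a_{(2)}$ collapse the whole expression to $\epsilon(x)1$. A useful sanity check is the restriction to group-like elements, on which $\huaR$ satisfies the Rota-Baxter group identity \eqref{defi:1}, $\star$ is the descendant product $g\star g'=g\huaR(g)g'\huaR(g)^{-1}$, and $S_{\huaR}(g)=\huaR(g)^{-1}g^{-1}\huaR(g)$ is precisely the $\star$-inverse of $g$. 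The main obstacle is purely bookkeeping: every identity is a Sweedler computation over a high iterated coproduct, and the two nontrivial verifications—comultiplicativity of $\Delta$ and the antipode axioms—both depend on invoking cocommutativity to permute coproduct legs into position before opposite $\huaR$/$S\huaR$ and $\mathrm{id}/S$ pairs are annihilated by the antipode axiom.
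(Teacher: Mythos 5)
The paper gives no proof of this proposition at all: it is stated as known, with the surrounding text deferring to \cite{Goncharov} (and \cite{ELM}) for the fact that $(U(\g),\star,\Delta,\eta,\epsilon,S_{\huaR})$ realizes $U(\g_R)$. Your blind verification is therefore the only complete argument on the table, and it is correct. The two load-bearing observations you isolate --- that \eqref{RBH} is precisely the statement $\huaR(x\star y)=\huaR(x)\huaR(y)$, and that $\huaR(1)=1$ follows from $\huaR(1)$ being group-like together with $\huaR(1)^2=\huaR(\huaR(1)S(\huaR(1)))=\huaR(1)$ --- are exactly the right ones, and the remaining checks (unitality via $\huaR(x_{(2)})S(\huaR(x_{(3)}))=\epsilon(x_{(2)})1$, comultiplicativity of $\Delta$ via a permutation of six coproduct legs, associativity via the cancellation $S(\huaR(x_{(3)}))\huaR(x_{(4)})=\epsilon(x_{(3)})1$, and the two antipode axioms) all go through as you describe. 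One small caveat on presentation: your remark that the associativity step ``uses only coassociativity, not cocommutativity'' is true of the manipulations local to that step, but the step as you set it up consumes the identity $\Delta(x\star y)=\Delta(x)\star\Delta(y)$, which you proved using cocommutativity; so cocommutativity is still upstream of associativity in your logical order. (It is genuinely needed for the coalgebra-compatibility and antipode checks, as your group-like sanity check already suggests, since $S(\huaR(g))=\huaR(g)^{-1}$ requires $S$ to behave as inversion.) Your approach buys a self-contained proof where the paper offers only a citation; nothing is missing.
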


Now we introduce the notions of filtered Rota-Baxter Lie algebras and filtered Rota-Baxter Hopf algebras.

 \begin{defi}
 A {\bf filtered Rota-Baxter Lie algebra} is a triple $(\g, \huaF_{\bullet}\g , R)$, where $(\g,\huaF_{\bullet}\g)$ is a filtered Lie algebra, $R$ is a Rota-Baxter operator of weight 1 on $\g$ and
    $$R(\huaF_n\g)\subset \huaF_n\g,\quad \forall n\geq 1.$$
   A  filtered Rota-Baxter Lie algebra $(\g, \huaF_{\bullet}\g , R)$ is called a {\bf complete Rota-Baxter Lie algebra}  if the underlying filtered Lie algebra $(\g, \huaF_{\bullet}\g)$ is complete.
\end{defi}

\begin{defi}
A {\bf filtered Rota-Baxter Hopf algebra} is a triple $(H,\huaF_{\bullet} H,\huaR)$, where $(H,\huaF_{\bullet} H)$ is a filtered Hopf algebra and  $\huaR$ is a Rota-Baxter operator on $H$ satisfying
$$
\huaR(\huaF_nH)\subset \huaF_nH,\quad \forall n\geq 0.
$$
\end{defi}

\begin{pro}
    Let $(H,\huaF_{\bullet} H,\huaR)$ be a filtered Rota-Baxter Hopf algebra. Then $(P(H), \huaF_{\bullet}H\cap P(H), \huaR|_{P(H)})$ is a filtered Rota-Baxter Lie algebra.
\end{pro}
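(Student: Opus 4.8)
The plan is to check the three conditions in the definition of a filtered Rota-Baxter Lie algebra for the triple $(P(H), \huaF_{\bullet}H\cap P(H), \huaR|_{P(H)})$. Proposition \ref{indfilt} already supplies the filtered Lie algebra $(P(H), \huaF_{\bullet}H\cap P(H))$, so what remains is to verify that $\huaR$ restricts to an endomorphism of $P(H)$, that this restriction is a weight-$1$ Rota-Baxter operator on the Lie algebra $P(H)$, and that it preserves the induced filtration.

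First I would prove the auxiliary fact $\huaR(1)=1$. Since $\huaR$ is a coalgebra homomorphism, $\Delta(\huaR(1))=(\huaR\otimes\huaR)\Delta(1)=\huaR(1)\otimes\huaR(1)$ and $\epsilon(\huaR(1))=\epsilon(1)=1$, so $\huaR(1)$ is grouplike and in particular invertible with inverse $S(\huaR(1))$. Putting $h=t=1$ in the Rota-Baxter identity \eqref{RBH} and using $1_{(1)}\otimes 1_{(2)}\otimes 1_{(3)}=1\otimes1\otimes1$ gives $\huaR(1)^{2}=\huaR\big(\huaR(1)S(\huaR(1))\big)=\huaR(1)$, and an invertible idempotent equals $1$. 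With $\huaR(1)=1$ in hand, the coalgebra homomorphism property gives, for $x\in P(H)$, both $\Delta(\huaR(x))=\huaR(x)\otimes1+1\otimes\huaR(x)$ and $\epsilon(\huaR(x))=0$, so $\huaR(x)\in P(H)$; hence $\huaR|_{P(H)}\colon P(H)\to P(H)$ is well-defined. Write $R=\huaR|_{P(H)}$.

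The crux is the Rota-Baxter identity on $P(H)$, and the efficient route is to reinterpret \eqref{RBH} structurally rather than expand it term by term. Indeed, \eqref{RBH} says precisely that $\huaR$ is an algebra homomorphism from $(H,\star)$ to $(H,\mu)$, where $\star$ is the product of Proposition \ref{RBH-Hopf}. As $(H,\star,\Delta,\eta,\epsilon,S_{\huaR})$ is a Hopf algebra with the \emph{same} coproduct $\Delta$, its primitive elements again form $P(H)$, now carrying the Lie bracket given by the $\star$-commutator. Using $\huaR(1)=1$, the relation $S(x)=-x$ for primitive $x$, and the iterated coproduct $x_{(1)}\otimes x_{(2)}\otimes x_{(3)}=x\otimes1\otimes1+1\otimes x\otimes1+1\otimes1\otimes x$, a short computation gives $x\star y=xy+\huaR(x)y-y\huaR(x)$ for $x,y\in P(H)$, whence $x\star y-y\star x=[x,y]+[R(x),y]+[x,R(y)]=[x,y]_R$, the descendant bracket. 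Restricting the algebra homomorphism $\huaR\colon(H,\star)\to(H,\mu)$ to primitives then yields $R([x,y]_R)=\huaR(x\star y-y\star x)=[\huaR(x),\huaR(y)]=[R(x),R(y)]$, which (as noted after Definition \ref{defi:O}) is exactly the weight-$1$ Rota-Baxter condition for $R$.

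Finally, filtration compatibility is immediate from $\huaR(\huaF_nH)\subset\huaF_nH$ and $\huaR(P(H))\subset P(H)$: for $n\geq1$ one has $\huaR(\huaF_nH\cap P(H))\subset\huaR(\huaF_nH)\cap\huaR(P(H))\subset\huaF_nH\cap P(H)=\huaF_nP(H)$. The only genuinely non-routine steps are the lemma $\huaR(1)=1$, which is what forces $\huaR$ to preserve primitivity and collapses the Sweedler expansion of $\star$, and the recognition that the $\star$-commutator on primitives coincides with the descendant bracket $[\cdot,\cdot]_R$; granting these, the Rota-Baxter identity drops out of the multiplicativity of $\huaR$ for $\star\to\mu$ with no further calculation.
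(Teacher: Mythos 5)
Your proposal is correct, and every step checks out: the computation $\huaR(1)^2=\huaR\big(\huaR(1)S(\huaR(1))\big)=\huaR(1)$ combined with the invertibility of the grouplike $\huaR(1)$ does force $\huaR(1)=1$; primitivity of $\huaR(x)$ then follows from the coalgebra-homomorphism property; the expansion $x\star y=xy+\huaR(x)y-y\huaR(x)$ for primitive $x,y$ is right, so the $\star$-commutator is the descendant bracket and multiplicativity of $\huaR$ for $\star\to\mu$ gives the weight-$1$ identity; and the filtration check is exactly the containment $\huaR(\huaF_nH\cap P(H))\subset\huaF_nH\cap P(H)$. The difference from the paper is one of scope rather than substance: the paper's proof consists of a single citation to Goncharov's work for the unfiltered facts ($\huaR(P(H))\subset P(H)$ and the Rota-Baxter identity on $P(H)$), plus the same two-line filtration verification you give at the end. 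You instead re-derive the cited facts from scratch, and your route — reading \eqref{RBH} as the statement that $\huaR$ is multiplicative from $(H,\star)$ to $(H,\mu)$ and restricting to primitives — is essentially the mechanism the paper itself exploits later (in the proof of Proposition \ref{pro:RBcompleteH} it rewrites \eqref{RBH} as $\huaR(a\star b)=\huaR(a)\huaR(b)$). So your argument buys self-containedness at the cost of length, while the paper buys brevity by outsourcing the non-filtered content; the genuinely new content of the proposition, the filtration compatibility, is handled identically in both.
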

\begin{proof}
   From \cite{Goncharov} it follows that $\huaR(P(H))\subset P(H)$ and $(P(H), \huaR|_{P(H)})$ is a Rota-Baxter Lie algebra.  Since $\huaR(\huaF_n H)\subset \huaF_n H$, we have $$\huaR(\huaF_n P(H))=\huaR(\huaF_n H\cap P(H))\subset \huaF_n H\cap P(H)=\huaF_n P(H),$$
   as required.
\end{proof}
The connection between Rota-Baxter operators of weight 1 on a Lie algebra $\g$ and Rota-Baxter operators on the universal enveloping algebra was obtained in \cite{Goncharov}. More precisely, if $\g$ is a Lie algebra and $R:\g\longrightarrow \g$ is a Rota-Baxter operator of weight 1, then the map $\huaR: U(\g)\longrightarrow U(\g)$ defined as
\begin{eqnarray}
    \huaR(1)&=&1 \label{e100},\\
    \huaR(x)&=&R(x),\ \ x\in \g, \label{e10}\\
    \huaR(xh)&=&R(x)\huaR(h)-\huaR([R(x),h]),\ \ x\in \g,
     \ h\in U(\g), \label{e11}
    \end{eqnarray}
is a well-defined coalgebra homomorphism and satisfies \eqref{RBH} (that is, $(U(\g),\huaR)$ is a Rota-Baxter Hopf algebra).

Now we generalize this result to the filtered setting.
\begin{thm}\label{thm:RBLie-RBH}
    Let $(\g, \huaF_{\bullet}\g, R)$ be a filtered Rota-Baxter Lie algebra and $(U(\g),\huaF_{\bullet}U(\g))$ be the filtered universal enveloping algebra given in Proposition \ref{pro:Lie-Hopf}. Then $(U(\g),\huaF_{\bullet}U(\g),\huaR)$ is a filtered Rota-Baxter Hopf algebra.
\end{thm}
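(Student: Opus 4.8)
The plan is to observe that almost everything needed is already in hand. Proposition~\ref{pro:Lie-Hopf} furnishes the filtered Hopf algebra structure on $(U(\g),\huaF_{\bullet}U(\g))$, and the cited result of \cite{Goncharov} recalled in \eqref{e100}--\eqref{e11} guarantees that $\huaR$ is a well-defined coalgebra homomorphism making $(U(\g),\huaR)$ a Rota-Baxter Hopf algebra, i.e.\ satisfying \eqref{RBH}. Hence the only point left to verify is the compatibility of $\huaR$ with the filtration, namely $\huaR(\huaF_n U(\g))\subset\huaF_n U(\g)$ for all $n\geq 0$. The case $n=0$ is trivial since $\huaF_0 U(\g)=U(\g)$, so I would concentrate on $n\geq 1$.

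By the description \eqref{g3} of the filtration and by linearity, it suffices to prove the sharper statement that whenever $x_i\in\huaF_{n_i}\g$ (with each $n_i\geq 1$), one has $\huaR(x_1\cdots x_k)\in\huaF_{n_1+\cdots+n_k}U(\g)$. I would establish this by induction on the length $k$. The base cases are immediate: $\huaR(1)=1$ by \eqref{e100}, and for $k=1$ the relation \eqref{e10} gives $\huaR(x_1)=R(x_1)\in\huaF_{n_1}\g\subset\huaF_{n_1}U(\g)$, using the filtered Rota-Baxter condition $R(\huaF_{n_1}\g)\subset\huaF_{n_1}\g$.

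For the inductive step, write $x=x_1$, $h=x_2\cdots x_k$ and $m=n_2+\cdots+n_k$, and apply the recursion \eqref{e11} to get
$$\huaR(x_1\cdots x_k)=R(x)\huaR(h)-\huaR\big([R(x),h]\big).$$
For the first term, $R(x)\in\huaF_{n_1}U(\g)$ and, by the induction hypothesis, $\huaR(h)\in\huaF_m U(\g)$; since the multiplication is a morphism of filtered vector spaces, the product lies in $\huaF_{n_1+m}U(\g)$, the desired level. For the second term, I would use that $\ad_{R(x)}=[R(x),\cdot]$ is a derivation of $U(\g)$ to expand
$$[R(x),h]=\sum_{i=2}^{k}x_2\cdots x_{i-1}\,[R(x),x_i]\,x_{i+1}\cdots x_k,$$
where each $[R(x),x_i]$ is the Lie bracket in $\g$ and lies in $[\huaF_{n_1}\g,\huaF_{n_i}\g]\subset\huaF_{n_1+n_i}\g$ by \eqref{g1}. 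Thus each summand is again a product of $k-1$ elements of $\g$ whose filtration degrees still add up to $n_1+\cdots+n_k=n_1+m$, so the induction hypothesis for length $k-1$ yields $\huaR([R(x),h])\in\huaF_{n_1+m}U(\g)$. Combining the two terms gives $\huaR(x_1\cdots x_k)\in\huaF_{n_1+m}U(\g)$, closing the induction.

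The only genuinely delicate point, and the one I would be most careful about, is the handling of the correction term $\huaR([R(x),h])$: the induction on length closes only because replacing the leading factor $R(x)$ by its commutator action redistributes the degree $n_1$ onto one of the remaining factors via \eqref{g1}, so that the total filtration degree is preserved exactly while the length drops by one. Everything else—the $n=0$ case, the multiplicativity of the filtration, and the $k=1$ base case—is routine.
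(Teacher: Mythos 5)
Your proof is correct and follows essentially the same route as the paper's: both reduce the claim to showing $\huaR(\huaF_n U(\g))\subset\huaF_n U(\g)$ on monomials, induct on the length of the monomial via the recursion \eqref{e11}, and handle the correction term by expanding $[R(x_1),x_2\cdots x_m]$ into length-$(m-1)$ monomials whose total filtration degree is preserved thanks to \eqref{g1}. Your version merely makes explicit the degree bookkeeping that the paper calls ``obvious.''
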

\begin{proof}
 We need to prove that $\huaR( \huaF_nU(\g)) \subset \huaF_n U(\g)$ for any $n\geq 1$.

Let $x_1,\ldots,x_m\in \g$ and $x_1\ldots x_m\in \huaF_n U(\g)$. If $m=1$, then $x_1\in \huaF_n\g$.  From \eqref{e10} we have that $$\huaR(x_1)=R(x_1)\in \huaF_n \g\subset \huaF_n U(\g).$$

We will use induction on $m$. Consider $x_1\ldots x_m$ with $x_i\in \huaF_{k_i}\g$ and $\sum\limits_i k_i\geq n$. Then by \eqref{e11}
\begin{eqnarray*}
    \huaR(x_1 x_2\ldots x_m)&=&R(x_1)\huaR\big(x_2\ldots x_m)-\huaR([R(x_1),x_2\ldots x_m]\big)
    \\&=&R(x_1)\huaR(x_2\ldots x_n)-\sum\limits_{i=2}^m \huaR\big(x_2\ldots x_{i-1}[R(x_1),x_i]x_{i+1}\ldots x_m\big).
\end{eqnarray*}

By the induction hypothesis and since $R$ preserves the filtration on $\g$, we have $$R(x_1)\huaR(x_2\ldots x_m) \in \huaF_n U(\g).$$ Furthermore, it is obvious that for any $2 \leq i\leq m$, $$\huaR(x_2\ldots x_{i-1}[R(x_1),x_i]x_{i+1}\ldots x_m)\in \huaF_nU(\g).$$ Thus, $\huaR(x_1\ldots x_m)\in \huaF_n U(\mathfrak g)$.
\end{proof}

\begin{defi}
    Let $(\g, \huaF_{\bullet}\g, R)$ be a filtered Rota-Baxter Lie algebra. Then the filtered Rota-Baxter Hopf algebra $(U(\g),\huaF_{\bullet}U(\g),\huaR)$ with the filtration defined by \eqref{g2}-\eqref{g3} and the Rota-Baxter operator $\huaR$ defined by \eqref{e100}-\eqref{e11} is called the  {\bf filtered Rota-Baxter universal enveloping algebra}.
\end{defi}

Let $(\g, \huaF_{\bullet}\g, R)$ be a filtered Rota-Baxter Lie algebra, and $\g_R$ be the corresponding descendent Lie algebra. The
Poincar\'e--Birkhoff--Witt theorem states that universal enveloping algebras $U(\g)$ and $U(\g_R)$ are isomorphic as vector spaces. Moreover, from \cite[Theorem 4]{Goncharov} and \cite[Theorem 3.4]{ELM}, the universal enveloping algebra $U(\g_R)$ of the Lie algebra $\g_R$ is isomorphic to the Hopf algebra $(U(\g),\star,\Delta,\eta,\epsilon, S_{\huaR})$. For simplicity, we may assume that $U(\g_R)=(U(\g),\star,\Delta,\eta,\epsilon, S_{\huaR})$. In what follows, for $a,b\in U(\g)$, by $ab$ we will mean the product in the universal enveloping algebra of the Lie algebra $\g$,  while $a\star b$ means the product in $U(\g_R)$
defined by \eqref{nprodU}.

It is straightforward to check that the product in the descendent Lie algebra $\g_R$ preserves the filtration $\huaF_{\bullet}\g$. In other words, $(\g_R,\huaF_{\bullet}\g)$ is also a filtered Lie algebra. Let $(U(\g_R),\huaF_{\bullet}U(\g_R))$ be the filtered universal enveloping algebra of the filtered Lie algebra $(\g_R,\huaF_{\bullet}\g)$, that is
$$
\huaF_nU(\g_R)=span \left\{x_1\star x_2\star\ldots \star x_k |\ k\geq 1,\
x_i\in \huaF_{n_i}\g,\
\sum_{i=1}^k n_i\ge n\   \right\},\quad\ n\geq 1.
$$
\begin{pro}\label{filt}
For any $n\geq 1$, $\huaF_nU(\g)=\huaF_nU(\g_R)$.
\end{pro}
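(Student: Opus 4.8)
The plan is to prove the two inclusions $\huaF_n U(\g_R)\subseteq \huaF_n U(\g)$ and $\huaF_n U(\g)\subseteq \huaF_n U(\g_R)$ separately, in each case arguing by induction on the number of factors in a spanning monomial. The engine of the whole argument is the single commutation identity
\begin{equation}\label{eq:starcommut}
x\star a = xa+[R(x),a]\qquad (x\in\g,\ a\in U(\g)),
\end{equation}
where the bracket is the commutator in the ordinary associative algebra $U(\g)$. This follows immediately from \eqref{nprodU}: for the primitive element $x$ one has $x_{(1)}\otimes x_{(2)}\otimes x_{(3)}=x\otimes 1\otimes 1+1\otimes x\otimes 1+1\otimes 1\otimes x$, and substituting this together with $\huaR(1)=1$, $\huaR(x)=R(x)$ and $S(R(x))=-R(x)$ yields $x\star a = xa+R(x)a-aR(x)$.

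For the inclusion $\huaF_n U(\g_R)\subseteq \huaF_n U(\g)$, I would show that each spanning monomial $x_1\star\cdots\star x_k$ with $x_i\in\huaF_{n_i}\g$ lies in $\huaF_N U(\g)$, where $N=\sum_i n_i$. Setting $a=x_2\star\cdots\star x_k$, the inductive hypothesis puts $a\in\huaF_{N-n_1}U(\g)$, and \eqref{eq:starcommut} expresses $x_1\star a$ as a sum of $x_1 a$, $R(x_1)a$ and $aR(x_1)$. Since $R$ preserves the filtration, both $x_1$ and $R(x_1)$ lie in $\huaF_{n_1}\g\subseteq\huaF_{n_1}U(\g)$; as the multiplication of the filtered Hopf algebra $U(\g)$ respects the filtration, all three summands lie in $\huaF_N U(\g)$, which closes this direction.

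The reverse inclusion $\huaF_n U(\g)\subseteq \huaF_n U(\g_R)$ is the more delicate one and is where I expect the main obstacle. I would argue by strong induction on $k$ that each ordinary monomial $x_1\cdots x_k$ with $x_i\in\huaF_{n_i}\g$ lies in $\huaF_N U(\g_R)$. Rewriting \eqref{eq:starcommut} as $x_1 a = x_1\star a-[R(x_1),a]$ with $a=x_2\cdots x_k$, the term $x_1\star a$ is controlled by the inductive hypothesis together with the fact, noted above, that $(\g_R,\huaF_\bullet\g)$ is itself a filtered Lie algebra, so $\star$ respects $\huaF_\bullet U(\g_R)$ and hence $x_1\star a\in\huaF_N U(\g_R)$. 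The genuinely delicate point is the commutator $[R(x_1),x_2\cdots x_k]$: since $\ad_{R(x_1)}$ is a derivation of $U(\g)$, the Leibniz rule expands it as $\sum_{i=2}^k x_2\cdots x_{i-1}[R(x_1),x_i]x_{i+1}\cdots x_k$, a sum of ordinary monomials that each have only $k-1$ factors --- with $[R(x_1),x_i]\in\huaF_{n_1+n_i}\g$ absorbed as a single factor --- while the total filtration degree remains exactly $N$. The inductive hypothesis then places each of these in $\huaF_N U(\g_R)$, and combining the two pieces completes the induction. The crux is precisely this observation that commuting $R(x_1)$ through the product lowers the number of factors rather than raising the degree, so that the bookkeeping on factor count and on filtration degree stay synchronized.
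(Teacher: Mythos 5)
Your proof is correct and follows essentially the same route as the paper's: both directions rest on the commutation identity $x\star a = xa+[R(x),a]$ and an induction on the number of factors in a spanning monomial. Your explicit Leibniz-rule expansion of the commutator term is in fact slightly more careful bookkeeping than the paper's, which simply collects the lower-length error term as an element $y\in\huaF_nU(\g)\cap\g^{[k-1]}$ before applying the inductive hypothesis.
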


\begin{proof}
    For any $t\geq 1$, define a subspace $\g^{[t]}=\mathrm{span}\left(\{x_1\ldots x_i|\ x_j\in \g,\ i\leq t\}\right)\subset U(\g)$.
Given $x\in \g$ and $h\in \g^{[t]}$, from \eqref{nprodU} we have
    \begin{equation}\label{stareq}
    x\star h=xh+[R(x),h].
    \end{equation}
      A simple induction shows that $x\star h=xh+y$, where {$y\in \g^{[t]}$}. Moreover, if $x\in \huaF_i\g$, $h\in \huaF_jU(\g)$, then $x\star h$, $xh$ and $y$ are elements from $\huaF_{i+j}U(\g)$. In particular, $\huaF_nU(\g_R)\subset \huaF_nU(\g)$.

   If $x\in \g$, then obviously $x\in\huaF_nU(\g)$ if and only if $x\in \huaF_n(U(\g_R))$.  Assume that for any $t<k$ and $h=y_1\ldots y_t\in \g^{[t]}$, an inclusion $y_1\ldots y_t\in \huaF_nU(\g)$ implies that  $y_1\ldots y_t\in \huaF_nU(\g_R)$.

  Consider $x_1\in \huaF_{i_1}\g,\ldots,x_k\in \huaF_{i_k}\g$ such that $i_1+\ldots +i_k=n$. Using \eqref{stareq}, we have
    $$
    x_1\star x_2\star\ldots \star x_k=x_1\ldots x_k+y,
    $$
where $y\in \huaF_n U(\g)\cap \g^{[k-1]}$. By the assumption, $y\in \huaF_nU(\g_R)$. Since  $x_1\star x_2\star\ldots\star x_k\in \huaF_nU(\g_R)$ (by the definition), we deduce that $x_1,\ldots x_k\in \huaF_nU(\g_R)$.
\end{proof}

Let $m_{\star}: U(\g)\otimes U(\g)\rightarrow U(\g)$ be the map defined by
$$
m_{\star}(a\otimes b)=a\star b.
$$
By Proposition \ref{pro:Lie-Hopf} and Proposition \ref{filt}, $m_{\star}$ is a homomorphism of filtered vector spaces $(U(\g)\otimes U(\g),\huaF_{\bullet} (U(\g)\otimes U(\g))$ and $(U(\g),\huaF_{\bullet}\g)$. Similarly, $S_{\huaR}:U(\g)\rightarrow U(\g)$ is also a homomorphism of filtered vector spaces.

\begin{cor}\label{corfilt}
As vector spaces,
 $\widehat{U}(\g)=\underleftarrow{\lim}U(\g)/\huaF_n U(\g)=\widehat{U}(\g_R)=\underleftarrow{\lim}U(\g_R)/\huaF_n U(\g_R).$
\end{cor}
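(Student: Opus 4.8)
The plan is to deduce this directly from Proposition \ref{filt}, which already does all the real work. The key observation is that, under the identification $U(\g_R)=(U(\g),\star,\Delta,\eta,\epsilon,S_{\huaR})$ fixed just above the statement, the universal enveloping algebras $U(\g)$ and $U(\g_R)$ share one and the same underlying vector space; only their multiplications (the original product versus $\star$) differ. Thus $\widehat{U}(\g)$ and $\widehat{U}(\g_R)$ are completions of the \emph{same} vector space, taken with respect to the two filtrations $\huaF_\bullet U(\g)$ and $\huaF_\bullet U(\g_R)$, and the whole claim reduces to comparing these filtrations.

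First I would invoke Proposition \ref{filt}, which asserts that $\huaF_nU(\g)=\huaF_nU(\g_R)$ for every $n\geq 1$. Since the two filtrations coincide term by term and the ambient vector space is common, the quotient vector spaces agree: $U(\g)/\huaF_nU(\g)=U(\g_R)/\huaF_nU(\g_R)$ for all $n$. Next I would observe that the transition maps of the two inverse systems coincide as well. In both cases $p_n$ is the map induced by the identity on the common underlying space, sending $x+\huaF_{n+1}$ to $x+\huaF_n$; by the equality of filtrations these are literally the same linear maps, since $p_n$ depends only on the filtration and the additive structure, not on the product.

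Having matched both the terms and the transition maps, the two inverse systems $\big(U(\g)/\huaF_nU(\g),p_n\big)$ and $\big(U(\g_R)/\huaF_nU(\g_R),p_n\big)$ are identical, so their inverse limits coincide, giving $\widehat{U}(\g)=\widehat{U}(\g_R)$ as vector spaces. There is essentially no obstacle here: the mathematical content is entirely contained in Proposition \ref{filt}. The only points worth stating carefully are that the asserted comparison is one of \emph{vector spaces} — the two product structures on the completions differ and are not being identified — and that the transition maps are unchanged precisely because they are insensitive to the multiplication.
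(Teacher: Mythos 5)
Your proposal is correct and matches the paper's intent exactly: the corollary is stated without proof precisely because it follows immediately from Proposition \ref{filt} in the way you describe — the underlying vector space is common, the filtrations coincide term by term, and the transition maps depend only on the filtration and additive structure, so the two inverse limits are identical as vector spaces.
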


Let $(\g,\huaF_{\bullet}\g,R)$ be a filtered Rota-Baxter Lie algebra, $(U(\g),\huaF_{\bullet}U(\g),\huaR)$ be the filtered Rota-Baxter universal enveloping algebra of $(\g,\huaF_{\bullet}\g,R)$ given in Theorem \ref{thm:RBLie-RBH}, and $\widehat U(\g)$ be the completion of the filtered universal enveloping algebra $(U(\g),\huaF_{\bullet} U(\g))$.
By $\huaR(\huaF_{n+1} U(\mathfrak g))\subset \huaF_{n+1} U(\mathfrak g)$, there is the induced map $\huaR_n: U(\mathfrak g)/\huaF_{n+1}U(\mathfrak g)\rightarrow U(\mathfrak g)/\huaF_{n+1}U(\mathfrak g)$ given by
\begin{eqnarray}\label{comp-RB}
\huaR_n(x+\huaF_{n+1} U(\mathfrak g))=\huaR(x)+\huaF_{n+1} U(\mathfrak g).
\end{eqnarray}

Obviously, $p_n\circ \huaR_n= \huaR_{n-1}\circ p_n$. Thus, we can define a map
$\widehat \huaR=\varprojlim \huaR_n :\widehat U(\g)\rightarrow \widehat U(\g)$
as
$$
\widehat \huaR(x_0,\ldots,x_n,\ldots)=(\huaR_0(x_0),\ldots,\huaR_n(x_n),\ldots)\in \varprojlim (U(\mathfrak g)/\huaF_{n+1}(U(\mathfrak g)).
$$

Now we are ready to give the main result of the paper.

\begin{thm}\label{thm:main}
    Let $(\g, \huaF_{\bullet}\g, R)$ be a filtered Rota-Baxter Lie algebra. Then the map $\frkR:\widehat{\g}\rightarrow \widehat{\g}$ defined as
\begin{equation}\label{fint}
\frkR(x)=\log(\widehat{\huaR}(\exp(x))),\quad \forall x\in \widehat{\g}
\end{equation}
   is a Rota-Baxter operator on the group $(\widehat{\g},*)$ given in Proposition \ref{pro:group-BCH}.

   In particular, if $(\g, \huaF_{\bullet}\g, R)$ is a complete Rota-Baxter Lie algebra, then   $(\g, *, \frkR)$ with the map $\frkR$ defined by \eqref{fint} for all $x\in \g$ is a Rota-Baxter group, which is considered as the {\bf formal integration} of the complete Rota-Baxter Lie algebra $(\g, \huaF_{\bullet}\g, R)$.

\end{thm}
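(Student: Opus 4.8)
The plan is to realize $\frkR$ as the transport, through the group isomorphism $\exp\colon(\widehat\g,*)\to(\G,\cdot)$ of Proposition~\ref{pro:group-BCH}, of the restriction of $\widehat\huaR$ to the group-like elements, and thereby to reduce the Rota-Baxter group identity \eqref{defi:1} for $\frkR$ to the Rota-Baxter Hopf identity \eqref{RBH} evaluated on group-like elements. Since $\log=\exp^{-1}$ and $\frkR=\log\circ\widehat\huaR\circ\exp$, once we know that $\widehat\huaR$ maps $\G$ into $\G$ and that $(\G,\cdot,\widehat\huaR|_{\G})$ is a Rota-Baxter group, the statement for $(\widehat\g,*,\frkR)$ follows formally by conjugating with a group isomorphism.

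First I would check that $\widehat\huaR$ preserves group-like elements. As $\huaR$ is a filtration-compatible coalgebra homomorphism on $U(\g)$, its completion $\widehat\huaR=\varprojlim\huaR_n$ is a coalgebra homomorphism of the complete Hopf algebra $\widehat U(\g)$; in particular $\widehat\Delta\circ\widehat\huaR=(\widehat\huaR\,\hat{\otimes}\,\widehat\huaR)\circ\widehat\Delta$ and $\epsilon\circ\widehat\huaR=\epsilon$. Hence for $g\in\G$ one gets $\widehat\Delta(\widehat\huaR(g))=\widehat\huaR(g)\,\hat{\otimes}\,\widehat\huaR(g)$ and $\epsilon(\widehat\huaR(g))=\epsilon(g)=1$, so $\widehat\huaR(g)\neq0$ and $\widehat\huaR(g)\in\G$. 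This produces a well-defined map $\widehat\huaR|_{\G}\colon\G\to\G$ and, by Proposition~\ref{grexp}, shows at the same time that $\frkR(x)=\log(\widehat\huaR(\exp(x)))$ is well defined for every $x\in\widehat\g$.

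Next I would verify the Rota-Baxter group identity on $(\G,\cdot)$. The key input is that \eqref{RBH} passes to the completion: each induced map $\huaR_n$ inherits \eqref{RBH} as the quotient of the Rota-Baxter Hopf identity of Theorem~\ref{thm:RBLie-RBH} on $U(\g)$, and passing to the inverse limit yields $\widehat\huaR(a)\widehat\huaR(b)=\widehat\huaR\big(a_{(1)}\widehat\huaR(a_{(2)})\,b\,S(\widehat\huaR(a_{(3)}))\big)$ for all $a,b\in\widehat U(\g)$. Specializing to a group-like $a=g$, where $\widehat\Delta$ is diagonal so that $g_{(1)}=g_{(2)}=g_{(3)}=g$ and no infinite sum intervenes, and using that the antipode inverts group-like elements (so $S(\widehat\huaR(g))=\widehat\huaR(g)^{-1}$ because $\widehat\huaR(g)\in\G$), the identity collapses to $\widehat\huaR(g)\widehat\huaR(h)=\widehat\huaR\big(g\,\widehat\huaR(g)\,h\,\widehat\huaR(g)^{-1}\big)$ for all $g,h\in\G$, which is exactly \eqref{defi:1} for $(\G,\cdot,\widehat\huaR|_{\G})$.

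Finally I would transport this structure along $\exp$. Because $\exp$ is a group isomorphism $(\widehat\g,*)\to(\G,\cdot)$ with inverse $\log$ and $\frkR=\exp^{-1}\circ(\widehat\huaR|_{\G})\circ\exp$, applying $\exp$ to the two sides of $\frkR(g)*\frkR(h)=\frkR\big(g*\frkR(g)*h*\frkR(g)^{-1}\big)$ and using that $\exp$ carries $*$ to $\cdot$ and $*$-inverses to $\cdot$-inverses reduces it, upon setting $a=\exp(g)$ and $b=\exp(h)$, precisely to the identity already established on $(\G,\cdot)$; hence $\frkR$ is a Rota-Baxter operator on $(\widehat\g,*)$. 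For the final assertion, completeness of $(\g,\huaF_\bullet\g)$ gives $\g\cong\widehat\g$, under which $(\g,*)$ and the operator $\frkR$ from \eqref{fint} are identified with those just constructed, so $(\g,*,\frkR)$ is a Rota-Baxter group serving as the formal integration. The main obstacle is the passage of \eqref{RBH} to the inverse limit: one must confirm that the compatible quotients $\huaR_n$ satisfy $p_n\circ\huaR_n=\huaR_{n-1}\circ p_n$ together with the quotiented form of \eqref{RBH}, and that product, comultiplication and antipode all survive the completion coherently, so that \eqref{RBH} holds verbatim for $\widehat\huaR$; once this continuity is secured, the specialization to group-like elements is immediate since the comultiplications involved are diagonal.
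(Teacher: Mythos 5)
Your proposal is correct and follows essentially the same route as the paper: the paper isolates your first two steps as Propositions \ref{pro:RBcompleteH} and \ref{pro:RBgrouplike} (that $\widehat{\huaR}$ is a Rota-Baxter operator on the complete Hopf algebra $\widehat{U}(\g)$ and that it restricts to a Rota-Baxter operator on the group $\G$ of group-like elements), and then transports the identity through the group isomorphisms $\exp$ and $\log$ exactly as you do. The only cosmetic difference is that the paper verifies the passage of \eqref{RBH} to the completion by rewriting it as $\huaR(a)\huaR(b)=\huaR(a\star b)$ and invoking that all maps involved are homomorphisms of filtered vector spaces, whereas you argue directly with the Sweedler-form identity and its specialization to group-likes.
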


To prove this theorem, we need the following results, which are interesting on their own.

\begin{pro}\label{pro:RBcompleteH}
$\widehat \huaR$ is a Rota-Baxter operator on the complete Hopf algebra $\widehat U(\g)$.
\end{pro}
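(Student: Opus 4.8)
The plan is to reduce the entire statement to the finite quotients $U(\g)/\huaF_{n+1}U(\g)$ and then pass to the inverse limit, exploiting that every structure map in sight — the multiplication, $\Delta$, the antipode $S$, and $\huaR$ — preserves the filtration and therefore descends to each quotient. First I would record what we already have: by Theorem \ref{thm:RBLie-RBH}, $(U(\g),\huaR)$ is a Rota-Baxter Hopf algebra, so $\huaR$ is a coalgebra homomorphism and the identity \eqref{RBH} holds in $U(\g)$. I would rephrase \eqref{RBH} as the equality of two maps $\Lambda,\Omega\colon U(\g)\otimes U(\g)\to U(\g)$, namely $\Lambda(h\otimes t)=\huaR(h)\huaR(t)$ and $\Omega(h\otimes t)=\huaR\big(h_{(1)}\huaR(h_{(2)})tS(\huaR(h_{(3)}))\big)$. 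Since $\widehat{\huaR}=\varprojlim\huaR_n$ is already built (via $p_n\circ\huaR_n=\huaR_{n-1}\circ p_n$), the task is to transport these two facts to the completion.

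The key observation is that both $\Lambda$ and $\Omega$ are homomorphisms of filtered vector spaces, i.e. they send $\huaF_m(U(\g)\otimes U(\g))$ into $\huaF_m U(\g)$. For $\Lambda$ this holds because $\huaR$ preserves the filtration and the multiplication sends $\huaF_iU(\g)\otimes\huaF_jU(\g)$ into $\huaF_{i+j}U(\g)$; for $\Omega$ it holds because $\Omega$ is a composite of $\Delta$, $\huaR$, $S$ and the multiplication, each of which preserves the filtration, so a short degree count on $h_{(1)}\huaR(h_{(2)})tS(\huaR(h_{(3)}))$ gives total degree at least $i+j$ when $h\in\huaF_iU(\g)$ and $t\in\huaF_jU(\g)$. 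Consequently $\Lambda$ and $\Omega$ descend to maps $U(\g)\otimes U(\g)/\huaF_{n+1}\to U(\g)/\huaF_{n+1}U(\g)$, and since $\Lambda=\Omega$ on $U(\g)\otimes U(\g)$ they remain equal on each such quotient. By the same token, the coalgebra-homomorphism identities $\Delta\circ\huaR=(\huaR\otimes\huaR)\circ\Delta$ and $\epsilon\circ\huaR=\epsilon$ descend level-wise, giving $\Delta_n\circ\huaR_n=(\huaR\otimes\huaR)_n\circ\Delta_n$ and $\epsilon\circ\huaR_n=\epsilon$ for every $n$.

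I would then pass to the inverse limit. Using that $\widehat{\huaR}=\varprojlim\huaR_n$, that $\widehat{\Delta}=\varprojlim\Delta_n$, and that the multiplication and antipode on $\widehat U(\g)$ are likewise the inverse limits of the induced maps on the quotients — together with the identification $\widehat U(\g)\hat{\otimes}\widehat U(\g)\cong\varprojlim\big(U(\g)\otimes U(\g)/\huaF_{n+1}\big)$ from Proposition \ref{iso-cv} and the construction preceding the statement — the level-wise equalities assemble into the corresponding equalities on $\widehat U(\g)$. This simultaneously shows that $\widehat{\huaR}$ is a coalgebra homomorphism of the complete Hopf algebra and that it satisfies the completed form of \eqref{RBH}, which is exactly what it means for $\widehat{\huaR}$ to be a Rota-Baxter operator on $\widehat U(\g)$.

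The step I expect to be the main obstacle is not a mathematical difficulty but a matter of careful bookkeeping with the Sweedler notation in the complete setting: one must verify that the completed right-hand side $\widehat{\huaR}\big(h_{(1)}\widehat{\huaR}(h_{(2)})tS(\widehat{\huaR}(h_{(3)}))\big)$, whose Sweedler summands now live in the complete tensor product and are genuine inverse limits, is literally the inverse limit of the maps $\Omega_n$, so that equality at each finite stage truly transfers to the limit. Pinning this down requires unwinding the definitions of $\widehat{\Delta}$ and of $\hat{\otimes}$ and checking compatibility of all the projection maps, but no identity beyond \eqref{RBH} itself is needed.
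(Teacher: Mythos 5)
Your proof is correct and follows essentially the same route as the paper: the paper also rewrites \eqref{RBH} as an equality of two filtered-vector-space homomorphisms $U(\g)\otimes U(\g)\to U(\g)$ (writing your map $\Omega$ as $\huaR\circ m_{\star}$, whose filtration-preservation is Proposition \ref{filt}) and then passes to completions. Your extra bookkeeping on the coalgebra-homomorphism property and the Sweedler terms in $\hat{\otimes}$ is a harmless elaboration of the same argument.
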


\begin{proof} We can rewrite \eqref{RBH} as
$$
\huaR(a)\huaR(b)=\huaR(m_{\star}(a\otimes b))=\huaR(a\star b)
$$
for any $a,b\in U(\g)$. Since $\huaR$, $m_{\star}$, and the product $\mu $ in $U(\g)$ are homomorphisms of the corresponding filtered vector spaces, we obtain that the same equality holds for the completions of these maps, that is
$$
\widehat{\huaR}(a)\widehat{\huaR}(b)=\widehat{\huaR}(a\star b)
$$
for any $a,b\in \widehat{U}(\g)$.
\end{proof}

\begin{pro}\label{pro:RBgrouplike}
$\widehat{\huaR} (\G)\subset \G$, and the restriction $\widehat{\huaR}|_\G:\G\rightarrow \G$ is a Rota-Baxter operator on the group $\G$.
\end{pro}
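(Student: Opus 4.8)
The plan is to reduce everything to Proposition~\ref{pro:RBcompleteH} together with the fact that $\widehat{\huaR}$ remains a coalgebra homomorphism after completion. First I would record that $\widehat{\huaR}$ is a coalgebra homomorphism of the complete Hopf algebra $\widehat U(\g)$, i.e. $\widehat\Delta\circ\widehat{\huaR}=(\widehat{\huaR}\,\hat{\otimes}\,\widehat{\huaR})\circ\widehat\Delta$ and $\epsilon\circ\widehat{\huaR}=\epsilon$. Since $\huaR$ is a coalgebra homomorphism on $U(\g)$ by the definition of a Rota-Baxter Hopf algebra, and since $\Delta$, $\huaR$, $\epsilon$ are all homomorphisms of filtered vector spaces, these identities hold on $U(\g)$ and therefore pass verbatim to the completions, by exactly the same argument used in the proof of Proposition~\ref{pro:RBcompleteH}.

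With this in hand, the inclusion $\widehat{\huaR}(\G)\subset\G$ is immediate. For $x\in\G$ we have $\widehat\Delta(x)=x\hat{\otimes}x$, so the coalgebra homomorphism property gives $\widehat\Delta(\widehat{\huaR}(x))=(\widehat{\huaR}\,\hat{\otimes}\,\widehat{\huaR})(x\hat{\otimes}x)=\widehat{\huaR}(x)\hat{\otimes}\widehat{\huaR}(x)$. It remains to check $\widehat{\huaR}(x)\neq 0$: applying the counit to one leg of $\widehat\Delta(x)=x\hat{\otimes}x$ yields $x=\epsilon(x)\,x$, whence $\epsilon(x)=1$ as $x\neq 0$, and then $\epsilon(\widehat{\huaR}(x))=\epsilon(x)=1\neq 0$ using $\epsilon\circ\widehat{\huaR}=\epsilon$. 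Hence $\widehat{\huaR}(x)\in\G$.

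For the Rota-Baxter group identity I would first note that $\G$ is a group under the multiplication of $\widehat U(\g)$, with inverse given by the antipode, so that $S(z)=z^{-1}$ for every $z\in\G$. Proposition~\ref{pro:RBcompleteH} gives $\widehat{\huaR}(g)\widehat{\huaR}(h)=\widehat{\huaR}(g\star h)$ for all $g,h\in\widehat U(\g)$, where $\star$ is the completion of the product~\eqref{nprodU}. Specializing to a group-like $g$, for which $g_{(1)}=g_{(2)}=g_{(3)}=g$ in Sweedler's notation, and using that $\widehat{\huaR}(g)\in\G$ so that $S(\widehat{\huaR}(g))=\widehat{\huaR}(g)^{-1}$, the $\star$ product collapses to $g\star h=g\,\widehat{\huaR}(g)\,h\,\widehat{\huaR}(g)^{-1}$. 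Substituting this back, I obtain $\widehat{\huaR}(g)\widehat{\huaR}(h)=\widehat{\huaR}\big(g\,\widehat{\huaR}(g)\,h\,\widehat{\huaR}(g)^{-1}\big)$, which is precisely~\eqref{defi:1} for the map $\frkR=\widehat{\huaR}|_\G$.

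I expect the only genuine point of care to be step one: verifying that Sweedler's notation, the identity $S(z)=z^{-1}$ on group-like elements, and the collapse of $\star$ are legitimate in the completed Hopf algebra $\widehat U(\g)$ rather than merely in $U(\g)$. Once the coalgebra homomorphism property of $\widehat{\huaR}$ is secured and $\G$ is recognized as an honest group, the rest of the computation is purely formal, so the bulk of the attention goes into transferring the Hopf-algebraic structure to the completion.
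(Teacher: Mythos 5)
Your proposal is correct, and it supplies exactly the argument the paper leaves implicit: Proposition \ref{pro:RBgrouplike} is stated without proof, the authors evidently regarding it as the standard passage from a Rota-Baxter Hopf algebra to a Rota-Baxter group on its group-like elements (as in \cite{Goncharov,GLS}). Your three steps --- the coalgebra-homomorphism property of $\huaR$ passing to the completion by the same filtered-homomorphism argument as in Proposition \ref{pro:RBcompleteH}, the preservation of group-likes together with $\epsilon(\widehat{\huaR}(x))=1\neq 0$, and the collapse of \eqref{RBH} to \eqref{defi:1} upon substituting $g_{(1)}=g_{(2)}=g_{(3)}=g$ and $S(\widehat{\huaR}(g))=\widehat{\huaR}(g)^{-1}$ --- are precisely what is needed, and your remark that no convergence issues arise because $\widehat{\Delta}(g)=g\hat{\otimes}g$ is a pure tensor correctly disposes of the only delicate point in the completed setting.
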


{\bf The proof of Theorem \ref{thm:main}:} By Proposition \ref{pro:RBcompleteH} and \ref{pro:RBgrouplike}, $\frkR$ is a Rota-Baxter operator on the group $\G$. Since $\log$ and $\exp$ are group isomorphism between $\G$ and $(\widehat\g,*)$, we have
\begin{eqnarray*}
  \frkR(x)*\frkR(y)&=&\log(\widehat{\huaR}(\exp(x)))*\log(\widehat{\huaR}(\exp(y)))\\
  &=&\log\Big(\widehat{\huaR}(\exp(x)) \cdot \widehat{\huaR}(\exp(y))\Big)\\
  &=&\log\Big(\widehat{\huaR}\big(\exp(x)  \cdot \widehat{\huaR}(\exp(x)) \cdot \exp(y) \cdot \widehat{\huaR}(\exp(x))^{-1}\big)\Big)\\
  &=&\log\Big(\widehat{\huaR}\big(\exp(x)  \cdot  \exp(\frkR(x)) \cdot \exp(y) \cdot  (\exp(\frkR(x))^{-1})\big)\Big)\\
  &=&\log\Big(\widehat{\huaR}(\exp(x*   \frkR(x)  *  y  *  (\frkR(x))^{-1}))\Big)\\
  &=&\frkR(x*   \frkR(x)  *  y  *  (\frkR(x))^{-1}),
\end{eqnarray*}
which implies that $\frkR$ is a Rota-Baxter operator on the group $(\widehat\g,*)$. The proof is completed.

 \vspace{2mm}
We can find several components of $\frkR:\g\rightarrow \g$. Let $x=(x_1,x_2\ldots,x_n,\ldots)\in \widehat{\g}$. Then
\begin{eqnarray*}
\frkR(x)&=&\log(\widehat{\huaR}(\exp(x)))\\
&=&\log((\widehat{\huaR}(1,1+x_1,1+x_2+\frac{x_2^2}{2!},\ldots)\\
&=&\log(\huaR(1),\huaR(1)+\huaR(x_1),\huaR(1)+\huaR(x_2)+\huaR(\frac{x_2^2}{2})),\ldots\\
&=&\log((1,1+R(x_1),1+R(x_2)+\frac{1}{2}(R(x_2)R(x_2)-R([R(x_2),x_2]),\ldots ).
\end{eqnarray*}

Thus,
\begin{equation}\label{PRES}
\frkR(x_1,\ldots,x_n,\ldots)=(R(x_1),R(x_2)-\frac{1}{2}R([R(x_2),x_2]),\ldots).
\end{equation}

 From \eqref{PRES} and \eqref{corresp}, we have the following result.

 \begin{pro}\label{nil3}
     Let $\g$ be a nilpotent Lie algebra of nilindex $3$, $\huaF_{\bullet}\g$ be the standard filtration on $\g$, and $R:\g\rightarrow \g$ be a Rota-Baxter operator preserving the filtration. Then the Rota-Baxter operator $\mathfrak R$ on the group $(\g,*)$ is given by
     $$
    \frkR(x)=R(x)-\frac{1}{2}R([R(x),x]).
     $$
 \end{pro}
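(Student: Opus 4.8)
The plan is to specialize the general formula \eqref{PRES} for $\frkR$ to the case of a nilpotent Lie algebra of nilindex $3$ with its standard filtration, and to verify that under the completion identification of Example \ref{nilp} the formula collapses to the stated closed expression. First I would recall that by Example \ref{nilp} the standard filtered Lie algebra $(\g, \huaF_{\bullet}\g)$ is complete, so that $\widehat{\g} \cong \g$ via the isomorphism \eqref{corresp}; consequently $\frkR$ is genuinely a map $\g \to \g$ rather than a map on a proper completion, and I may apply Theorem \ref{thm:main} to conclude that $\frkR$ defined by \eqref{fint} is a Rota-Baxter operator on $(\g, *)$. The remaining task is purely to compute its value on a single element $x \in \g$.

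Next I would write the image of $x$ under the completion isomorphism \eqref{corresp}. Since $\g^3 = 0$, we have $\huaF_3\g = 0$, so the inverse-limit coordinates stabilize immediately: the first coordinate is $\pi_1(x) = x + \huaF_2\g$, and every coordinate from the second onward equals $x$ itself. In the notation $x = (x_1, x_2, \ldots)$ used to derive \eqref{PRES}, this means all the relevant higher coordinates $x_n$ for $n \geq 2$ may be taken equal to $x$. Substituting $x_2 = x$ into the second component of the general formula \eqref{PRES} gives exactly
\begin{equation*}
\frkR(x) = R(x) - \frac{1}{2}R([R(x), x]),
\end{equation*}
and since $\g^3 = 0$ forces all higher correction terms in the Magnus-type expansion to vanish, no further components contribute.

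The one point requiring a little care, which I expect to be the main (though still mild) obstacle, is justifying that \eqref{PRES} truncates at this order rather than contributing additional terms from the third coordinate onward. Here I would argue that any further term would be built from iterated brackets and the operator $R$ applied to products of length $\geq 3$ in $U(\g)$, and that under $\log$ these feed into components lying in $\huaF_3 U(\g)$; since $\huaF_3\g = 0$ and $R$ preserves the filtration, the corresponding contribution to $\frkR(x) \in \widehat{\g} \cong \g$ lands in $\huaF_3\g = 0$. Thus reading off the first two (and only nonzero) coordinates of $\frkR(x)$ through the identification \eqref{corresp} yields the displayed formula, completing the proof.
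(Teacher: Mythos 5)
Your proposal is correct and follows essentially the same route as the paper, which derives the proposition directly by combining the component formula \eqref{PRES} with the completion identification \eqref{corresp} for nilpotent Lie algebras. Your extra care about why nothing beyond the second coordinate contributes is sound (and can be phrased even more simply: for $n\geq 2$ one has $\huaF_{n+1}\g=0$, so the inverse-limit coordinates stabilize and the second coordinate already determines $\frkR(x)$), but it does not change the argument.
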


Note that Proposition \ref{nil3} was independently obtained in \cite{NR}, using a different technique.

\begin{ex}
Consider   the 3-dimensional Heisenberg Lie algebra given in Example \ref{ex:Hei-group}.
Consider an operator $R$:
$$
R(x)=y,\ R(y)=x,\ R(z)=-z.
$$
It is straightforward to check that $R$ is a Rota-Baxter operator of weight 1 on $\mathfrak g$. Obviously, $R$ preserves the filtration, that is, $(\g,\huaF_{\bullet}\g,R)$ is a complete Rota-Baxter Lie algebra.

From Proposition \ref{nil3},  the map $\mathfrak R:\g\rightarrow \g$ defined as
$$
\frkR(\alpha x+\beta y+\gamma z)=\beta x+\alpha y+\frac{1}{2}(-\alpha^2+\beta^2-2\gamma)z
$$
for all $\alpha,\beta,\gamma\in \bk$, is a Rota-Baxter operator on the group $(\g,*)$.

\end{ex}

\begin{ex}\label{gpoly}
Let $(\g,R)$ be a Rota-Baxter Lie algebra over a field $\bk$ of characteristic zero and $\bk[h]h$ be the polynomials in one variable $h$ without a constant term. Consider
$$\g_h=\g[h]h=\g\otimes \bk[h]h=\g h\oplus \g h^2\oplus\ldots\oplus \g h^n\oplus\ldots.
$$
We can continue the map $R$ to a linear map on $\g_h$ (that we will also denote by $R$) by letting $R(xh^k)=R(x)h^k$ for any $x\in \g$ and $k\geq 1$. Define $\huaF_{k}\g_h=\g[h]h^k$ for any $k\geq 1$. Then $(\g_h,\huaF_{\bullet}\g_h,R)$ is a filtered Rota-Baxter Lie algebra, and the completion
of $\g_h$ may be presented as
$$
\widehat{\g_h}=\g[[h]]h=\left\{\sum\limits_{i\geq 1}g_ih^i|\ g_i\in \g\right\},
$$
the formal power series with coefficients in $\g$ without a constant term. By Theorem \ref{thm:main}, we can integrate $R$ to a Rota-Baxter operator $\frkR$ on the group $(\widehat{\g_h},*)$. By \eqref{PRES}, for any $x\in \g$ and $k\geq 1$, we have
$$
\frkR(xh^k)=R(x)h^k-\frac{1}{2}R([R(x),x])h^{2k}+\ldots.
$$

\end{ex}

\begin{pro}
    Let $(\g, \huaF_{\bullet}\g, R)$ be a filtered Rota-Baxter Lie algebra, $(\widehat{\g},*,\frkR)$ be the corresponding Rota-Baxter group given in Theorem \ref{thm:main}, $\phi:\g\rightarrow \g$ be an automorphism of the filtered Lie algebra $(\g,\huaF_{\bullet}\g)$, and $Q=\phi^{-1}R\phi$. Then  $(\g, \huaF_{\bullet}\g, Q)$ is a filtered Rota-Baxter Lie algebra. Moreover, the corresponding  Rota-Baxter operator $\mathfrak Q$   on the group $(\widehat{\g},*)$ is given by $\mathfrak Q=\widehat{\phi}^{-1}\frkR\widehat{\phi}$, where $\widehat{\phi}$ is the natural continuation of the automorphism $\phi$ on the completion $\widehat{\g}$.
\end{pro}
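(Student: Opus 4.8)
The plan is to realize the Rota--Baxter data attached to $Q$ as the conjugate, by the Hopf algebra automorphism induced from $\phi$, of the data attached to $R$, and then transport this identity through the $\exp/\log$ correspondence. First I would dispose of the easy part, namely that $(\g,\huaF_{\bullet}\g,Q)$ is a filtered Rota-Baxter Lie algebra. Since $\phi$ is an automorphism of the filtered Lie algebra, $\phi(\huaF_n\g)=\huaF_n\g=\phi^{-1}(\huaF_n\g)$, and together with $R(\huaF_n\g)\subset\huaF_n\g$ this gives $Q(\huaF_n\g)=\phi^{-1}R\phi(\huaF_n\g)\subset\huaF_n\g$. That $Q$ is a weight-$1$ Rota-Baxter operator is a direct computation: applying $\phi^{-1}$ to the weight-$1$ identity for $R$ evaluated at the arguments $\phi(x),\phi(y)$, and using $\phi([a,b])=[\phi(a),\phi(b)]$ together with the relation $R\phi=\phi Q$ (equivalent to $Q=\phi^{-1}R\phi$), reproduces the weight-$1$ identity for $Q$.

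Next, let $\Phi=U(\phi):U(\g)\to U(\g)$ be the Hopf algebra automorphism obtained by functoriality of the universal enveloping algebra. Because $\phi$ preserves the filtration on $\g$, the description \eqref{g2}--\eqref{g3} shows that $\Phi$ preserves $\huaF_{\bullet}U(\g)$, and the same holds for $\Phi^{-1}=U(\phi^{-1})$; thus $\Phi$ is a filtered Hopf algebra automorphism. The crucial step is to prove the identity $\huaQ=\Phi^{-1}\huaR\Phi$ on $U(\g)$, where $\huaQ$ is the operator attached to $Q$ by \eqref{e100}--\eqref{e11}. I would verify that $\Phi^{-1}\huaR\Phi$ satisfies the three defining relations of $\huaQ$: the unit relation \eqref{e100} and the degree-one relation \eqref{e10} are immediate from $\Phi|_\g=\phi$ and $\phi^{-1}R\phi=Q$, while for \eqref{e11} I would expand $(\Phi^{-1}\huaR\Phi)(xh)=\Phi^{-1}\huaR\big(\phi(x)\Phi(h)\big)$, apply \eqref{e11} for $\huaR$, and use that $\Phi$ is an algebra homomorphism respecting the commutator to rewrite the two resulting terms as $Q(x)\,(\Phi^{-1}\huaR\Phi)(h)$ and $(\Phi^{-1}\huaR\Phi)\big([Q(x),h]\big)$. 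Since \eqref{e11} determines the value of the operator on a length-$(k+1)$ monomial from its values on monomials of length $\le k$, uniqueness of the solution to this recursion forces $\Phi^{-1}\huaR\Phi=\huaQ$. I expect this verification to be the main obstacle, as it is where one must track the commutator terms in $U(\g)$ carefully.

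Finally, I would pass to completions and transport along $\exp/\log$. As $\Phi$, $\Phi^{-1}$, $\huaR$, and $\huaQ$ are all filtered maps, taking inverse limits of $\huaQ=\Phi^{-1}\huaR\Phi$ on the quotients yields $\widehat{\huaQ}=\widehat{\Phi}^{-1}\widehat{\huaR}\widehat{\Phi}$ on $\widehat{U}(\g)$, and $\widehat{\Phi}$ restricts to $\widehat{\phi}$ on $\widehat{\g}$. Since $\widehat{\Phi}$ is an algebra homomorphism fixing $1$ and restricting to $\widehat{\phi}$ on $\widehat{\g}$, the power-series formula \eqref{exp} gives $\widehat{\Phi}(\exp(w))=\exp(\widehat{\phi}(w))$; dually, for any group-like $g=\exp(w)\in\G$ we get $\log\big(\widehat{\Phi}(g)\big)=\widehat{\phi}(\log(g))$, and likewise with $\widehat{\Phi}^{-1}$ and $\widehat{\phi}^{-1}$. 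Then, using that $\widehat{\huaR}(\exp(\widehat{\phi}(x)))\in\G$ by Proposition \ref{pro:RBgrouplike}, for every $x\in\widehat{\g}$ one computes
\begin{align*}
\mathfrak Q(x)&=\log\big(\widehat{\huaQ}(\exp(x))\big)=\log\big(\widehat{\Phi}^{-1}\widehat{\huaR}\widehat{\Phi}(\exp(x))\big)\\
&=\log\big(\widehat{\Phi}^{-1}\widehat{\huaR}(\exp(\widehat{\phi}(x)))\big)=\widehat{\phi}^{-1}\log\big(\widehat{\huaR}(\exp(\widehat{\phi}(x)))\big)=\widehat{\phi}^{-1}\frkR\widehat{\phi}(x),
\end{align*}
which establishes $\mathfrak Q=\widehat{\phi}^{-1}\frkR\widehat{\phi}$, as desired.
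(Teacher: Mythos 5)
Your proposal is correct and follows essentially the same route as the paper: lift $\phi$ to the Hopf algebra automorphism $\Phi=U(\phi)$, establish $\huaQ=\Phi^{-1}\huaR\Phi$ from the defining recursion \eqref{e100}--\eqref{e11}, and transport through completion and $\exp/\log$. The paper simply asserts the identity $\huaQ=\Phi^{-1}\huaR\Phi$ "from \eqref{e100}--\eqref{e11}", whereas you spell out the uniqueness-of-recursion argument; the substance is the same.
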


\begin{proof}
Since the automorphism $\phi$ preserves the filtration, $(\g, \huaF_{\bullet}\g, Q)$ is a filtered Rota-Baxter Lie algebra. We can continue $\phi$ from the Lie algebra $\g$ to the universal enveloping algebra $\Phi:U(\g)\rightarrow U(\g)$.  Let $\huaR$ and $\mathcal{Q}$ be continuations of Rota-Baxter operators $R$ and $Q$ into $U(\g)$. From \eqref{e100}-\eqref{e11} we can conclude that $\mathcal{Q}=\Phi^{-1}\huaR\Phi$. Therefore, for any $x\in \widehat \g$, we have
$$
\mathfrak Q(x)=\log( \widehat{\mathcal Q}(\exp(x)))=\log(\widehat{\Phi}^{-1}(\widehat{\huaR}(\widehat{\Phi}(\exp{x}))))=\log(\exp((\widehat{\phi}^{-1}{\frkR}\widehat{\phi})(x))).
$$
Thus, $\mathfrak Q=\widehat{\phi}^{-1}\frkR\widehat{\phi}$. 
\end{proof}

Let $(\g, \huaF_{\bullet}\g, R)$ be a filtered Rota-Baxter Lie algebra and $x\in \g$. Let $L_R(x)$ be the minimal $R$-invariant subalgebra in $\g$ containing $x$. Let $R_1$ be the restriction of $R$ on $L_R(x)$, then $(L_R(x),\huaF_{\bullet}\g\cap L_R(x),R_1)$ is a filtered Rota-Baxter Lie algebra. The universal enveloping algebra $U(L_R(x))$ of the filtered Lie algebra $(L_R(x),\huaF_{\bullet}\g\cap L_R(x))$ may be naturally embedded (as a filtered Hopf algebra) into $(U(\g),\huaF_{\bullet}U(\g))$.

Let $\pi:\g\lon\varprojlim \g/\huaF_{n+1}\g$ be the natural homomorphism from $\g$ to $\widehat{\g}$.
We have $\ker\pi=
\bigcap\limits_{n=0}^{\infty}\huaF_{n+1}\g$. In general, there is an embedding of the quotient algebra $\g/\ker\pi$ into $\widehat{\g}$. In particular, the map $\pi:\g\lon\varprojlim \g/\huaF_{n+1}\g$ is an embedding if and only if $\bigcap\limits_{n=0}^{\infty}
\huaF_{n+1}\g=\{0\}$.

\begin{pro}
Let $(\g, \huaF_{\bullet}\g, R)$ be a filtered Rota-Baxter Lie algebra, $\frkR$ be the Rota-Baxter operator on the group $(\widehat {\g},*)$ defined by \eqref{fint}, and $x\in \g$. Then
\begin{itemize}
\item[{\rm(i)}] $\frkR(y)\in \widehat{L_R(x)}$ for any $y\in L_R(x)$.
\item[{\rm(ii)}] If $[R(x),x]=0$, then $\frkR(\pi(x))=\pi(R(x))\in \pi(\g)\subset \widehat{\g}$.
\item[{\rm(iii)}] If $R(x)=0$, then $\frkR(\pi(x))=0$.
\end{itemize}
\end{pro}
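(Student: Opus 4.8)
The plan is to handle the three items in the order (ii), (iii), (i): item (iii) will be an immediate special case of (ii), item (ii) isolates the only genuine computation, and item (i) is a structural statement about restriction to a sub-enveloping algebra. Throughout I identify $y \in \g$ with $\pi(y) \in \widehat{\g}$, so that $\frkR(y) = \log(\widehat{\huaR}(\exp(y)))$ is the instance of \eqref{fint} to be evaluated.

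For (ii), the key is to compute $\widehat{\huaR}$ directly on $\exp(\pi(x)) = 1 + x + \tfrac{x^2}{2} + \cdots$. First I would prove by induction on $n$ that $\huaR(x^n) = R(x)^n$ in $U(\g)$ whenever $[R(x),x] = 0$. The cases $n=0,1$ are exactly \eqref{e100} and \eqref{e10}. For the inductive step, \eqref{e11} gives $\huaR(x^n) = R(x)\huaR(x^{n-1}) - \huaR([R(x),x^{n-1}])$; since $[R(x),x^{n-1}] = \sum_{k=0}^{n-2} x^k [R(x),x] x^{n-2-k} = 0$, the correction term vanishes and the induction closes with $\huaR(x^n) = R(x)\cdot R(x)^{n-1}$. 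Because $\widehat{\huaR}$ is the continuation of $\huaR$ and acts componentwise, summing over $n$ yields $\widehat{\huaR}(\exp(\pi(x))) = \exp(\pi(R(x)))$, whence $\frkR(\pi(x)) = \log(\exp(\pi(R(x)))) = \pi(R(x))$ by Proposition \ref{grexp}. This is consistent with the leading terms recorded in \eqref{PRES}. Item (iii) is then the special case $R(x)=0$, for which $[R(x),x]=0$ holds trivially, so (ii) gives $\frkR(\pi(x)) = \pi(R(x)) = 0$.

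For (i), I would show that $\widehat{\huaR}$ preserves the completion of $U(L_R(x))$ and then transport $\exp$ and $\log$ to this subalgebra. Because $L_R(x)$ is an $R$-invariant subalgebra, the recursion \eqref{e100}--\eqref{e11} shows by induction on word length that $\huaR$ maps $U(L_R(x))$ into itself: if $z \in L_R(x)$ and $h \in U(L_R(x))$, then $R(z) \in L_R(x)$ and $[R(z),h] \in U(L_R(x))$, so both terms on the right of \eqref{e11} lie in $U(L_R(x))$. Using the filtered Hopf algebra embedding $U(L_R(x)) \hookrightarrow U(\g)$ recalled just before the statement, $\widehat{\huaR}$ therefore restricts to an operator on $\widehat{U(L_R(x))}$. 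Now for $y \in L_R(x)$ the element $\exp(y)$ is group-like in $\widehat{U(L_R(x))}$ by Proposition \ref{grexp} applied to the filtered Lie algebra $(L_R(x), \huaF_{\bullet}\g \cap L_R(x))$; the element $\widehat{\huaR}(\exp(y))$ stays group-like by Proposition \ref{pro:RBgrouplike}; and $\log$ sends it back into $\widehat{L_R(x)}$, again by Proposition \ref{grexp}. Hence $\frkR(y) \in \widehat{L_R(x)}$.

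The only delicate point is the bookkeeping in (i): I must be sure that the filtration on $L_R(x)$ induced from $\g$, the maps $\exp$ and $\log$, and the group-like condition are all computed inside $\widehat{U(L_R(x))}$ compatibly with their computation inside $\widehat{U}(\g)$. This compatibility is precisely what the filtered Hopf algebra embedding $U(L_R(x)) \hookrightarrow U(\g)$ provides, so once that embedding is invoked the argument is formal. By contrast, the computation in (ii) is routine after the commutation identity $[R(x),x^{n-1}] = 0$ is observed, and (iii) requires no further work.
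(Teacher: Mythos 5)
Your proposal is correct and follows essentially the same route as the paper's proof: item (ii) via the induction $\huaR(x^n)=R(x)^n$ from \eqref{e11} when $[R(x),x]=0$, item (iii) as its special case, and item (i) by observing that $U(L_R(x))$ is $\huaR$-invariant (so $\widehat{\huaR}$ restricts to $\widehat{U}(L_R(x))$) and then passing through $\exp$, group-likeness, and $\log$ inside the sub-enveloping algebra. The extra details you supply (the Leibniz expansion of $[R(x),x^{n-1}]$ and the explicit word-length induction) are exactly what the paper leaves implicit.
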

\begin{proof}
(i) By \eqref{e100}-\eqref{e11}, it follows that $U(L_R(x))$ is an $\huaR$-invariant Hopf subalgebra in $U(\g)$, and the restriction of the Rota-Baxter operator $\huaR|_{U(L_R(x))}$ coincides with $\huaR_1$, the continuation of the Rota-Baxter operator $R_1$ to $U(L_R(x))$. Therefore, for any $y\in L_R(x)$,
$$\widehat{\huaR}(\exp(y))=\widehat{\huaR_1}(\exp(y))\in \widehat{U}(L_R(x))\subset \widehat{U}(\g)$$
and
$$\frkR(y)=\log(\widehat{\huaR}(\exp(y)))=\log(\widehat{\huaR_1}(\exp(y)))\in \widehat{L_R(x)}$$
by Theorem \ref{thm:main}.

(ii) If $[R(x),x]=0$, then using simple induction and \eqref{e11}, we obtain that $\huaR(x^n)=R(x)^n$. Thus, $\widehat{\huaR}(\exp(\pi(x)))=\exp(\pi(R(x)))$ and  $$\frkR(\pi(x))=\log(\widehat{\huaR}(\exp(\pi(x))))=\log(\exp(\pi(R(x))))=\pi(R(x))\in \pi(\g).$$

(iii) Follows from (ii).
\end{proof}

Let $\g=(\g,\huaF_{\bullet}\g,R)$ be a filtered Rota-Baxter Lie algebra. By Corollary \ref{corfilt}, the completion of the filtered universal enveloping algebra of $(\g,\huaF_{\bullet}\g)$ can also be considered as the completion of the descendent Lie algebra $\g_R=(\g,[\cdot,\cdot]_R,\huaF_{\bullet}\g)$. Note that  $\widehat{\g}=\widehat{\g}_R$ as vector spaces. By Proposition \ref{grexp}, we may define two ``exponents'' $\exp:\widehat{\g}\rightarrow \widehat{U}(\g)$ and $\exp_{\star}:\widehat{\g}\rightarrow \G\subset \widehat{U}(\g)$ given by
$$
\exp(x)=1+x+\frac{x^2}{2}+\ldots+\frac{x^n}{n!}+\ldots
$$
and
$$
\exp_{\star}(x)=1+x+\frac{x{\star}x}{2}+\ldots+\frac{x^{n,\star}}{n!}+\ldots,
$$
where $x\in \widehat{\g}$ and $x^{n,\star}=x{\star}x{\star}x\ldots{\star}x$ ($n$ times). Similarly, we have two ``logarithms'' $\log=\exp^{-1}:\G\rightarrow \widehat{\g}$ and $\log_{\star}=\exp_{\star}^{-1}:\G\rightarrow \widehat{\g}$.

In particular, for any $x\in \widehat{\g}$, there is a unique element $t\in \widehat{\g}$ such that $\exp(x)=\exp_{\star}(t)$. This naturally lead us to the following notion
of the {\bf post-Lie Magnus expansion} \mcite{CP,CEO,EMQ,EMM,MQS}  that is given by
\begin{eqnarray}\mlabel{post-Lie Magnus expansion}
 \Omega:\widehat{\g}\lon \widehat{\g}, \quad \Omega(x):=\log_{\star}(\exp(x)) \tforall x\in \widehat{\g}.
\end{eqnarray}

We will need the following result that was originally obtained in \cite[Theorem 17]{EMM}.
\begin{pro}\label{Magnus}
    For any $x\in \widehat{\g}$, we have
\begin{eqnarray}
\Omega(x)=\Omega_1(x)+\Omega_2(x)+\ldots+\Omega_n(x)+\ldots
\end{eqnarray}
where
\begin{eqnarray}
\nonumber \Omega_1(x)&=&x,\\
\Omega_n(x)&=&\left(\frac{1}{n!}x^n-\sum\limits_{k=2}^{n}\sum\limits_{i_1+\ldots+i_k=n\atop i_1,\dots,i_k\ge 1}\frac{1}{k!}\Omega_{i_1}(x)\star \Omega_{i_2}(x)\star \ldots\star \Omega_{i_k}(x)\right)\in \huaF_n\widehat{\g},\quad n\geq 2.\label{form}
\end{eqnarray}
Here  $\Omega_n(x)$ is the degree $n$ part of the $\Omega(x)$ and $\star $ is the product in ${U}(\g)$ defined by \eqref{nprodU}.
\end{pro}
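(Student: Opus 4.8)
Since $\Omega=\log_{\star}\circ\exp$ and $\log_{\star}=\exp_{\star}^{-1}$, the whole assertion is encoded in the single identity
\[
\exp_{\star}(\Omega(x))=\exp(x),\qquad \forall\,x\in\widehat{\g},
\]
which is immediate from the definition of $\Omega$. The plan is to make the phrase ``degree $n$ part'' precise by introducing a formal grading parameter $t$, replacing $x$ by $tx$, and letting $\Omega_n(x)$ be the coefficient of $t^n$ in the formal power series $\Omega(tx)=\log_{\star}(\exp(tx))$. I would then substitute $\Omega(tx)=\sum_{m\geq 1}\Omega_m(x)t^m$ into the identity $\exp_{\star}(\Omega(tx))=\exp(tx)$ and compare coefficients of $t^n$, reading off the recursion \eqref{form} from the linear term of the $\star$-exponential.

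For the main computation I expand the $\star$-exponential as a series of $\star$-powers,
\[
\exp_{\star}\Big(\sum_{m\geq 1}\Omega_m(x)\Big)=\sum_{k\geq 0}\frac{1}{k!}\sum_{i_1,\dots,i_k\geq 1}\Omega_{i_1}(x)\star\cdots\star\Omega_{i_k}(x),
\]
and collect those terms with $i_1+\cdots+i_k=n$, which forces $1\leq k\leq n$. Equating this degree-$n$ contribution with the degree-$n$ part $\frac{1}{n!}x^n$ of $\exp(x)=\sum_{n\geq 0}\frac{1}{n!}x^n$ gives
\[
\sum_{k=1}^{n}\frac{1}{k!}\sum_{\substack{i_1+\cdots+i_k=n\\ i_1,\dots,i_k\geq 1}}\Omega_{i_1}(x)\star\cdots\star\Omega_{i_k}(x)=\frac{1}{n!}x^n.
\]
The unique $k=1$ summand equals $\Omega_n(x)$; moving the remaining $k\geq 2$ summands to the right-hand side produces exactly \eqref{form}, while $n=1$ returns $\Omega_1(x)=x$.

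It remains to justify that $\Omega_n(x)\in\huaF_n\widehat{\g}$, which I would obtain by induction directly from the recursion just derived: $\Omega_1(x)=x\in\huaF_1\widehat{\g}$, and for $n\geq 2$ one has $x^n\in\huaF_n$ by multiplicativity of the filtration (as $x\in\huaF_1$), while each summand $\Omega_{i_1}(x)\star\cdots\star\Omega_{i_k}(x)$ with $i_1+\cdots+i_k=n$ lies in $\huaF_n$ because $\Omega_{i_j}(x)\in\huaF_{i_j}$ and the $\star$-product respects the filtration---by Proposition \ref{filt} the $\star$-filtration coincides with $\huaF_{\bullet}$, so $\huaF_i\star\huaF_j\subset\huaF_{i+j}$. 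Since the degrees then tend to $+\infty$, the series $\sum_m\Omega_m(x)$ converges in the complete space $\widehat{\g}$, and one may set $t=1$ to conclude $\Omega(x)=\sum_m\Omega_m(x)$.

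I expect the only genuinely delicate point to be the legitimacy of reorganizing the doubly-infinite sum over $k$ and over the compositions $(i_1,\dots,i_k)$, and of comparing ``degree $n$ parts'' for elements that are filtered rather than strictly graded. This is settled by the filtration topology: modulo $\huaF_{n+1}\widehat{\g}$ only finitely many summands survive, so every reorganization is a finite manipulation in the quotient $\widehat{\g}/\huaF_{n+1}\widehat{\g}$, and the homogeneity $\Omega_n(tx)=t^n\Omega_n(x)$---itself an immediate induction on \eqref{form}---aligns the $t$-degree with the filtration degree. The algebraic core is then a purely formal consequence of inverting $\exp_{\star}$ against $\exp$, and recovers \cite[Theorem 17]{EMM}.
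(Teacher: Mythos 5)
Your derivation is correct; note that the paper itself gives no proof of Proposition \ref{Magnus} but simply cites \cite[Theorem 17]{EMM}, so there is no in-paper argument to compare against. What you write is the standard derivation: invert $\exp_{\star}(\Omega(x))=\exp(x)$ degree by degree, isolate the $k=1$ term of the $\star$-exponential, and read off the recursion \eqref{form}; the convergence and reordering issues are indeed settled by working modulo $\huaF_{n+1}$ at each stage, exactly as you say. One point you pass over a little quickly: the assertion $\Omega_n(x)\in\huaF_n\widehat{\g}$ has two components, namely that $\Omega_n(x)$ has filtration degree at least $n$ (which your induction handles, using $x\in\huaF_1$ and Proposition \ref{filt} to see that the $\star$-product respects $\huaF_{\bullet}$) and that $\Omega_n(x)$ is actually a \emph{Lie} (primitive) element rather than merely an element of $\huaF_n\widehat{U}(\g)$ --- the right-hand side of \eqref{form} is a priori only in the enveloping algebra. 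The latter does follow from your $t$-parameter setup, since $\Omega(tx)=\log_{\star}(\exp(tx))$ is primitive for the cocommutative coproduct and hence so is each coefficient of $t^n$ (the ground field is infinite, as $\operatorname{char}\bk=0$), whence $\Omega_n(x)\in\huaF_n\widehat{U}(\g)\cap\widehat{\g}=\huaF_n\widehat{\g}$ by the induced filtration of Proposition \ref{indfilt}; it would be worth making this half of the claim explicit.
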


 Let $\widehat{R}:\widehat{\g}\rightarrow\widehat{\g}$ be the completion of $R$ defined by
\begin{eqnarray}\label{completion-RBA}
\widehat{R}(x_1,\ldots,x_n,\ldots)=(R_1(x_1),\ldots,R_n(x_n),\ldots),
    \end{eqnarray}
here $(x_1,\ldots,x_n,\ldots)\in \widehat{\g}=\underleftarrow{\lim}\ \g/\huaF_{n+1} (\g)$ and $R_n:\g/\huaF_{n+1} (\g)\lon \g/\huaF_{n+1} (\g)$ the induced Rota-Baxter operator of $R$. Then $(\widehat{\g}, \huaF_{\bullet}\widehat{\g} , \widehat{R})$ is a complete Rota-Baxter Lie algebra.

\begin{thm}\label{thm:formula}
Let $(\g, \huaF_{\bullet}\g, R)$ be a filtered Rota-Baxter Lie algebra, $\frkR$ be the Rota-Baxter operator on  the group $(\widehat {\g},*)$ defined by \eqref{fint}.  Then $\frkR=\widehat{R}\circ \Omega$.
\end{thm}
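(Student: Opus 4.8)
The plan is to compute $\frkR(x)=\log(\widehat{\huaR}(\exp(x)))$ directly by decomposing the ordinary exponential $\exp(x)$ into the $\star$-exponential using the post-Lie Magnus expansion, and then exploiting the multiplicativity of $\widehat{\huaR}$ with respect to the $\star$-product established in Proposition \ref{pro:RBcompleteH}. The key structural fact I would use is the identity $\widehat{\huaR}(a\star b)=\widehat{\huaR}(a)\widehat{\huaR}(b)$ on $\widehat{U}(\g)$, which says precisely that $\widehat{\huaR}$ is a homomorphism from $(\widehat U(\g),\star)$ to $(\widehat U(\g),\cdot)$. Since the $\star$-exponential $\exp_\star$ is the group-like exponential for the descendent structure (equivalently the ordinary exponential computed in $U(\g_R)=(U(\g),\star)$ under the identification of Corollary \ref{corfilt}), this homomorphism property should convert a $\star$-exponential directly into an ordinary exponential of the image of the generator under $\widehat R$.

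First I would fix $x\in\widehat{\g}$ and invoke the post-Lie Magnus expansion \eqref{post-Lie Magnus expansion} to write $\exp(x)=\exp_\star(\Omega(x))$, where $\Omega(x)\in\widehat{\g}$ by Proposition \ref{Magnus}. Then $\widehat{\huaR}(\exp(x))=\widehat{\huaR}(\exp_\star(\Omega(x)))$. The second step is the crux: I would expand $\exp_\star(\Omega(x))=\sum_{n\ge0}\tfrac{1}{n!}\,\Omega(x)^{n,\star}$ and apply $\widehat{\huaR}$ term by term, using that $\widehat{\huaR}$ is linear, continuous for the filtration, and carries $\star$-products to ordinary products. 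This gives
\begin{eqnarray*}
\widehat{\huaR}(\exp_\star(\Omega(x)))
&=&\sum_{n\ge0}\frac{1}{n!}\,\widehat{\huaR}\big(\Omega(x)^{n,\star}\big)\\
&=&\sum_{n\ge0}\frac{1}{n!}\,\widehat{\huaR}(\Omega(x))^{n}\\
&=&\exp\big(\widehat{\huaR}(\Omega(x))\big).
\end{eqnarray*}
The final step is to identify $\widehat{\huaR}(\Omega(x))$ with $\widehat R(\Omega(x))$: since $\Omega(x)\in\widehat{\g}$ and $\widehat{\huaR}$ restricted to the completed Lie algebra $\widehat{\g}$ agrees with $\widehat R$ (both are the inverse limit of the induced operators $\huaR_n$, which on the primitive part act as $R_n$ by \eqref{e10} and \eqref{comp-RB}), we obtain $\widehat{\huaR}(\exp(x))=\exp(\widehat R(\Omega(x)))$. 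Taking $\log$ yields $\frkR(x)=\widehat R(\Omega(x))=(\widehat R\circ\Omega)(x)$, as claimed.

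\textbf{Main obstacle.}
The delicate point is justifying the interchange of $\widehat{\huaR}$ with the infinite $\star$-sum defining $\exp_\star$, and, before that, ensuring that the multiplicativity identity of Proposition \ref{pro:RBcompleteH} extends from single products to arbitrary $n$-fold $\star$-products $\Omega(x)^{n,\star}$. I would handle the extension by a straightforward induction on $n$, writing $\Omega(x)^{n,\star}=\Omega(x)\star\Omega(x)^{(n-1),\star}$ and applying the two-variable identity at each stage. The convergence and term-by-term application of $\widehat{\huaR}$ are then legitimate because everything takes place in the complete filtered algebra $\widehat U(\g)$: each $\Omega_k(x)$ lies in $\huaF_k\widehat{\g}$ by Proposition \ref{Magnus}, so modulo $\huaF_{N+1}$ only finitely many terms survive, and $\widehat{\huaR}$ is by construction compatible with the inverse-limit filtration. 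The remaining subtlety is the clean identification $\widehat{\huaR}|_{\widehat{\g}}=\widehat R$, which I would verify by tracking the definitions \eqref{comp-RB} and \eqref{completion-RBA} componentwise against \eqref{e10}.
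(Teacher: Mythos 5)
Your proposal is correct and follows essentially the same route as the paper: write $\exp(x)=\exp_\star(\Omega(x))$, use the identity $\widehat{\huaR}(a\star b)=\widehat{\huaR}(a)\widehat{\huaR}(b)$ to obtain $\widehat{\huaR}(\exp_\star(y))=\exp(\widehat R(y))$, and take $\log$. The paper states the conversion of $\exp_\star$ into $\exp$ as an immediate corollary of multiplicativity, whereas you spell out the induction on $n$-fold $\star$-products and the filtration-based convergence; this is just a more detailed rendering of the same argument.
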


\begin{proof}

Since $(\widehat{U}(\g),\cdot,\Delta, \eta, \epsilon, S,\huaF_{\bullet}U(\g),\widehat{\huaR})$ is a  complete Rota-Baxter Hopf algebra, we deduce that
\begin{eqnarray}\label{RBH-homo}
\widehat{\huaR}(x\star y)=\widehat{R}(x)\widehat{R}(y),\,\,\forall x,y\in \widehat{\g}.
\end{eqnarray}
As a corollary, we get  $\widehat{\huaR}(\exp_{\star}(x))=\exp(\widehat{R}(x))$ for any $x\in \widehat{\g}$. Therefore, we have
\begin{eqnarray*}
\exp(\frkR(x))=\widehat{\huaR}(\exp(x))&\stackrel{\eqref{post-Lie Magnus expansion}}{=}&\widehat{\huaR}\Big(\exp_\star \big(\Omega(x)\big)\Big)\stackrel{\eqref{RBH-homo}}{=} \exp\Big(\widehat{R}\big(\Omega(x)\big)\Big).
\end{eqnarray*}
Then, we deduce
$
\frkR(x)=\log(\widehat{\huaR}(\exp(x)))=\log\left(\exp\Big(\widehat{R}\big(\Omega(x)\big)\Big)\right)=\widehat{R}\left(\Omega(x)\right).
$
\end{proof}

\begin{rmk}
The post-Lie Magnus expansion  occupies  an important position in the  formal integration theory of post-Lie algebras \cite[Theorem 5.22]{BGST}.
It is very surprising that post-Lie Magnus expansion also naturally emerges in  the formal integration theory of Rota-Baxter Lie algebras. Please see \cite{EMQ} for the latest developments of the Magnus expansion and its applications.
\end{rmk}

\begin{ex}
We can compute $\Omega_2(x)$ and $\Omega_3(x)$. Note that $\Omega_1(x)\star\Omega_1(x)=x\star x=x^2+[R(x),x]$. From \eqref{form}, we have
$$
\Omega_2(x)=\frac{1}{2}x^2-\frac{1}{2}(\Omega_1(x)\star\Omega_1(x))=\frac{1}{2}(x^2-x^2-[R(x),x])=-\frac{1}{2}[R(x),x].
$$
Note that this agrees with \eqref{PRES}. In order to compute $\Omega_3(x)$, we need the following  equalities:
\begin{eqnarray*}
\Omega_1(x)\star\Omega_1(x)\star\Omega_1(x)&=&x\star (x^2+[R(x),x])=x^3+x[R(x),x]+[R(x),x^2]+[R(x),[R(x),x]],\\
\Omega_1(x)\star\Omega_2(x)&=&-\frac{1}{2}(x[R(x),x]+[R(x),[R(x),x]]),\\
\Omega_2(x)\star \Omega_1(x)&=&-\frac{1}{2}([R(x),x]x+[R([R(x),x]),x]).
\end{eqnarray*}
Then
\begin{eqnarray*}
\Omega_3(x)&= & \frac{1}{6}x^3-\frac{1}{2}\big(\Omega_1(x)\star\Omega_2(x)+\Omega_2(x)\star \Omega_1(x)\big)-\frac{1}{6}\Omega_1(x)\star\Omega_1(x)\star\Omega_1(x)\\
 & =& \frac{1}{6}x^3+\frac{1}{4}([R(x),x^2]+[R(x),[R(x),x]]+[R([R(x),x]),x])\\
 && -\frac{1}{6}(x^3+x[R(x),x]+[R(x),x^2]+[R(x),[R(x),x]])\\
 & =& \frac{1}{12}[R(x),x^2]-\frac{1}{6}x[R(x),x]+\frac{1}{12}[R(x),[R(x),x]]+\frac{1}{4}[R([R(x),x]),x]\\
 & = &\frac{1}{12}[[R(x),x],x]+\frac{1}{12}[R(x),[R(x),x]]+\frac{1}{4}[R([R(x),x]),x].
\end{eqnarray*}
That is,
\begin{equation}
    \Omega_3(x)=\frac{1}{12}[[R(x),x],x]+\frac{1}{12}[R(x),[R(x),x]]+\frac{1}{4}[R([R(x),x]),x].
\end{equation}
\end{ex}

\begin{cor}
Let $\g$ be a nilpotent Lie algebra of nilindex $4$ and  $R:\g\rightarrow \g$ be a Rota-Baxter operator on $\g$ satisfying  $R(\g^i)\subset \g^i$ for $i=2,3$. Then the map $\frkR$ defined as
$$
\frkR(x)=R(x)-\frac{1}{2}R([R(x),x])+\frac{1}{12}R([[R(x),x],x]+[R(x),[R(x),x]])+\frac{1}{4}([R([R(x),x]),x])
$$
for any $x\in \g$, is a Rota-Baxter operator on  the group $(\g,*)$.
\end{cor}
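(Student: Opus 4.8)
The plan is to obtain the formula as a direct specialization of Theorem \ref{thm:formula}, exactly as Proposition \ref{nil3} is the nilindex $3$ case; the two ingredients are that nilpotency puts us in the complete setting and simultaneously truncates the post-Lie Magnus expansion. First I would equip $\g$ with its standard filtration $\huaF_n\g=\g^n$. Because $\g^1=\g$ and $\g^n=0$ for $n\geq 4$, the inclusions $R(\g^i)\subset\g^i$ for $i=2,3$ assumed in the statement are exactly what is needed for $R$ to preserve $\huaF_{\bullet}\g$ in every degree. Since $\g$ is nilpotent of nilindex $4$, Example \ref{nilp} shows that $(\g,\huaF_{\bullet}\g)$ is complete and that the natural map identifies $\widehat{\g}$ with $\g$; hence $(\g,\huaF_{\bullet}\g,R)$ is a complete Rota-Baxter Lie algebra and $\widehat{R}$ coincides with $R$ under this identification. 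By Theorem \ref{thm:main}, the map $\frkR$ of \eqref{fint} is then a Rota-Baxter operator on the group $(\g,*)$, which already yields the qualitative part of the assertion.

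For the explicit formula I would invoke Theorem \ref{thm:formula}, which gives $\frkR=\widehat{R}\circ\Omega=R\circ\Omega$. The decisive simplification is that the Magnus expansion terminates in low degree: Proposition \ref{Magnus} guarantees $\Omega_n(x)\in\huaF_n\widehat{\g}$ for every $n$, and since $\huaF_4\widehat{\g}=\g^4=0$ we get $\Omega_n(x)=0$ for all $n\geq 4$. Therefore $\Omega(x)=x+\Omega_2(x)+\Omega_3(x)$, where $\Omega_2(x)$ and $\Omega_3(x)$ are precisely the expressions computed in the Example preceding the statement, namely $\Omega_2(x)=-\frac{1}{2}[R(x),x]$ together with the three-term value of $\Omega_3(x)$.

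The final step is purely mechanical: since $R$ is linear I apply it termwise to $x+\Omega_2(x)+\Omega_3(x)$ and collect the resulting expressions to read off $\frkR(x)$. There is no genuine obstacle here, as the substantive content is already carried by Theorem \ref{thm:formula} and by the prior computation of $\Omega_3$; the remaining work is bookkeeping. The only points deserving a line of justification are the observation that the hypotheses $R(\g^i)\subset\g^i$ for $i=2,3$ are equivalent to preservation of the standard filtration, so that the integration machinery applies, and the truncation $\Omega_n(x)=0$ for $n\geq 4$, which follows from the degree estimate $\Omega_n(x)\in\huaF_n\widehat{\g}$ in Proposition \ref{Magnus}.
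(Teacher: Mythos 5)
Your proof is correct and follows exactly the route the paper intends: equip $\g$ with the standard filtration (where the hypotheses $R(\g^i)\subset\g^i$ give filtration-preservation, degree $1$ being automatic and degrees $\geq 4$ vacuous since $\g^4=0$), invoke completeness from Example \ref{nilp}, apply Theorems \ref{thm:main} and \ref{thm:formula} to get $\frkR=R\circ\Omega$, and truncate the post-Lie Magnus expansion at $\Omega_3$ via the degree estimate $\Omega_n(x)\in\huaF_n\widehat{\g}$, using the values of $\Omega_2$ and $\Omega_3$ computed in the preceding example. One small remark: carrying out your termwise application of $R$ gives $\tfrac{1}{4}R\bigl([R([R(x),x]),x]\bigr)$ for the last summand, whereas the displayed formula omits the outer $R$ there; this is a typo in the statement rather than a defect in your argument.
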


\section{From filtered Rota-Baxter groups to graded Rota-Baxter Lie rings}\label{sec:diffRB}

In this section, we show that one can obtain a graded Rota-Baxter Lie ring from a filtered Rota-Baxter group.

Let $G$ be a group. Given $x,y\in G$, denote by $(x,y)$ the commutator of elements $x,y$:
$$
(x,y)=xyx^{-1}y^{-1}.
$$
By $x^y$ we will mean the conjugation: $x^y=yxy^{-1}$. Let $H$  and $K$ be two subgroups of $G$. We use the notation $(H,K)$  for the subgroup of $G$ generated by  the commutators $(x,y)$ for $x\in H,~y\in K$.

\begin{defi}
    A filtered group is a pair $(G,\huaF_{\bullet}G)$, where $G$ is a group and $\huaF_{\bullet}G$ is a descending filtration (the integral filtration in the terminology of \cite{Ser}) of the group $G$ such that $G=\huaF_1G\supset\huaF_2G\supset\cdots\supset\huaF_n G\supset\cdots$, and
    \begin{equation}\label{ugr}
        (\huaF_nG,\huaF_mG)\subset \huaF_{m+n}G
    \end{equation}
    for all $n,m\geq 1$.
\end{defi}


The following properties will be frequently used in the sequel.

\begin{pro}\label{Ser}\cite{Ser} Let $(G,\huaF_{\bullet}G)$ be a filtered group. For any $n\geq 1$ we have
    \begin{itemize}
\item[{\rm(i)}] $\huaF_n G$ is a normal subgroup in $G$;
\item[{\rm(ii)}] $\huaF_nG/\huaF_{n+1}G$ is an abelian group;
\item[{\rm(iii)}] For any $x\in \huaF_n G$ and $y\in G$: $x^y\in \huaF_{n}G$.
\end{itemize}
\end{pro}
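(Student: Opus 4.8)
The plan is to derive all three statements from the single defining inclusion $(\huaF_nG,\huaF_mG)\subset \huaF_{n+m}G$, together with the elementary commutator identity
\[
x^y = yxy^{-1} = (y,x)\,x,
\]
valid with the conventions $(a,b)=aba^{-1}b^{-1}$ and $a^b=bab^{-1}$ fixed in the text. Here I use that each $\huaF_nG$ is a subgroup (this is part of what it means to be a filtration of a group, and is implicit in the notation $(\huaF_nG,\huaF_mG)$ for the subgroup generated by commutators) and that the filtration is descending. I would prove (iii) first, since it is the engine for the rest. Fix $x\in \huaF_nG$ and $y\in G=\huaF_1G$. The single commutator $(y,x)$ lies in the generating set of $(\huaF_1G,\huaF_nG)$, so by the defining inclusion $(y,x)\in (\huaF_1G,\huaF_nG)\subset \huaF_{n+1}G\subset \huaF_nG$. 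Since $\huaF_nG$ is a subgroup containing both $(y,x)$ and $x$, the identity above gives $x^y=(y,x)\,x\in \huaF_nG$, which is exactly (iii). Statement (i) is then immediate: (iii) says $y\huaF_nG\,y^{-1}\subseteq \huaF_nG$ for every $y\in G$, and applying this with $y^{-1}$ in place of $y$ gives $\huaF_nG\subseteq y\huaF_nG\,y^{-1}$; combining the two inclusions shows $y\huaF_nG\,y^{-1}=\huaF_nG$, so $\huaF_nG$ is normal.

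For (ii), first note that $\huaF_{n+1}G$ is normal in $G$ by (i), hence normal in $\huaF_nG$, so the quotient $\huaF_nG/\huaF_{n+1}G$ is a well-defined group. To see it is abelian, take $x,y\in \huaF_nG$; then the single commutator satisfies $(x,y)\in (\huaF_nG,\huaF_nG)\subset \huaF_{2n}G\subset \huaF_{n+1}G$, where the last inclusion uses $2n\ge n+1$ for $n\ge 1$ together with the descending property of the filtration. Writing $\bar{\,\cdot\,}$ for the image in the quotient, this means $\overline{(x,y)}=\bar 1$, i.e.\ $\bar x\,\bar y=\bar y\,\bar x$, so the quotient is abelian.

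There is no serious obstacle here; the content is entirely elementary group theory, and the only points requiring a moment's care are the bookkeeping with filtration degrees (using $\huaF_{n+1}G\subset \huaF_nG$ and $\huaF_{2n}G\subset \huaF_{n+1}G$) and the observation that (iii) already encodes conjugation-invariance, so that (i) needs no separate commutator computation beyond inverting $y$. The proof relies only on the defining inclusion and the subgroup structure of the $\huaF_nG$, and does not use any of the Lie-theoretic or Rota-Baxter material of the preceding sections.
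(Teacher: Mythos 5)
Your argument is correct and is precisely the standard one: the paper gives no proof of this proposition, deferring to Serre, and your derivation of (iii) from $x^y=(y,x)\,x$ with $(y,x)\in(\huaF_1G,\huaF_nG)\subset\huaF_{n+1}G$, of (i) by conjugating with $y$ and $y^{-1}$, and of (ii) from $(\huaF_nG,\huaF_nG)\subset\huaF_{2n}G\subset\huaF_{n+1}G$ is exactly the expected reasoning. No gaps.
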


\begin{ex}\label{exgrfilt}
   Let $G$ be an arbitrary group and $G=G^1>G^2>\ldots >G^n>\ldots$ be the descending central series of the group $G$ defined as $G^1=G$ and $G^n=(G,G^{n-1})$ for $n>1$. Then the sequence $\{\huaF_nG=G^n\}$ is a filtration on the group $G$. Moreover, this is the minimal filtration in the sense that if $\huaF_{\bullet}'G$ is another filtration on $G$, then $G^n\subset \huaF'_nG$ for all $n\geq 1$ (see \cite{Ser}).
\end{ex}

\begin{ex}\label{filthg}
Let $(H,\huaF_{\bullet}H)$ be a filtered Hopf algebra. Then the group $G$ of group-like elements of $H$ has an induced filtration:
$$
\huaF_n G=(1+\huaF_nH)\cap G
$$
 Indeed,
$\epsilon(g)=1$ for every $g\in G$, so
$x_g=g-1 \in \ker(\epsilon)$, i.e., $g=1+x_g$.
 Suppose that $g$ and $h$   are two group-like elements such that $g\in \huaF_nG$ and $h\in \huaF_mG$. This means that $x_g\in\huaF_n H$ and $x_h\in \huaF_m H$. Then the commutator $(g,h)$ is again a group-like element. Let $(g,h)=1+t$, where $t\in \ker(\epsilon)$. Then
\begin{eqnarray*}
 gh&=&(1+t)hg,\\
 (1+x_g)(1+x_h)&=&(1+t)(1+x_h)(1+x_g),\\
 1+x_g+x_h+x_gx_h&=&1+x_h+x_g+x_hx_g+t(1+x_h+x_g+x_hx_g),\\
 t&=&[x_g,x_h]-t(x_h+x_g+x_hx_g).
\end{eqnarray*}
     Therefore, $t=[x_g,x_h]+ta$, where $a\in\ker(\epsilon)=\huaF_1H$. But then
 \begin{eqnarray*}
     t&=&[x_g,x_h]+ta=[x_g,x_h]+[x_g,x_h]a+ta^2=\ldots
     \\&=&[x_g,x_h]+[x_g,x_h]a+\ldots+[x_g,x_h]a^{n+m-1}+ta^{m+n}\in\huaF_{m+n}H.
\end{eqnarray*}
 That means that $(G,\huaF_{\bullet}G)$ is a filtered group.
\end{ex}

\begin{rmk}\label{rmkfilt}
    If $(\g,\huaF_{\bullet}\g)$ is a filtered Lie algebra, and $G$ is the group of group-like elements of the complete Hopf algebra $\widehat U(\g)$, then the filtration from Example \ref{filthg} may be presented as follows:
    $$\huaF_nG=\{\exp(x)|\ x\in\huaF_n \widehat{\g}\},\quad n\geq 1.$$
\end{rmk}

\begin{ex}
    Let $(\g,\huaF_{\bullet}\g)$ be a filtered Lie algebra. Then the triple $(\widehat{\g},*,\huaF_{\bullet}\widehat{\g})$, where $x*y=\BCH(x,y)$, is a filtered group.
\end{ex}

  Let $(G,\huaF_{\bullet}G)$ be a filtered group.
For any $n\geq 1$, denote by
$$\gr_nG=\huaF_nG/\huaF_{n+1}G
$$
which is an abelian group and the group structure ``+'' is defined by
$$
\overline{x}+\overline{y}=\overline{xy},\quad \forall x,y\in \huaF_nG.
$$
 Denote by
$$
\gr G=\gr_1G\oplus \gr_2G\oplus\ldots\oplus \gr_nG\oplus\ldots,
$$
 the direct sum of abelian groups.

Define a graded product on the group $\gr G$ as follows: if $a=\overline{x}\in \gr_nG$ and $b=\overline{y}\in \gr_mG$, then the product $[a,b]$ is defined to be an element in $\gr_{m+n}G$ given by
\begin{equation}\label{prod}
    [a,b]=\overline{(x,y)}\in \gr_{n+m}G=\huaF_{n+m}G/\huaF_{n+m+1}G.
\end{equation}

\begin{pro}\cite{Ser}
 Let $(G,\huaF_{\bullet}G)$ be a filtered group. Then $(\gr G,+,[\cdot,\cdot])$ is a graded Lie ring.
\end{pro}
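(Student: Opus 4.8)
The plan is to verify directly the four defining properties of a graded Lie ring for $(\gr G,+,[\cdot,\cdot])$: that the bracket \eqref{prod} is well defined on classes, that it is $\mathbb Z$-bilinear, that it is antisymmetric, and that it satisfies the Jacobi identity. The single tool that makes all of this work is the following consequence of Proposition \ref{Ser}: conjugation acts trivially on the associated graded pieces. Indeed, for $w\in\huaF_pG$ and any $g\in G=\huaF_1G$ one has $w^g=(g,w)\,w$ with $(g,w)\in(\huaF_1G,\huaF_pG)\subset\huaF_{p+1}G$, so that $w^g\equiv w\pmod{\huaF_{p+1}G}$. Thus any inner automorphism of $G$ induces the identity on each $\gr_pG$.

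For well-definedness and bilinearity I would begin from the universal commutator identity $(xx',y)=(x',y)^x(x,y)$. Taking $x,x'\in\huaF_nG$ and $y\in\huaF_mG$, the factor $(x',y)$ lies in $\huaF_{n+m}G$, so by the triviality of conjugation just noted, $(x',y)^x\equiv(x',y)\pmod{\huaF_{n+m+1}G}$. Hence $(xx',y)\equiv(x',y)(x,y)\pmod{\huaF_{n+m+1}G}$, which in $\gr_{n+m}G$ reads $[\overline{xx'},\overline{y}]=[\overline{x'},\overline{y}]+[\overline{x},\overline{y}]$. Specializing to $x'\in\huaF_{n+1}G$ gives $(x',y)\in\huaF_{n+m+1}G$, showing the bracket is independent of the representative of $\overline{x}$; the companion identity $(x,yy')=(x,y)(x,y')^y$ treats the second slot in the same way. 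This yields both well-definedness and additivity in each argument.

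Antisymmetry is then immediate: $(x,x)=1$ forces $[\overline{x},\overline{x}]=0$, while $(x,y)^{-1}=(y,x)$ gives $\overline{(y,x)}=-\overline{(x,y)}$, that is $[\overline{y},\overline{x}]=-[\overline{x},\overline{y}]$. Note also that $\overline{x^{-1}}=-\overline{x}$ in $\gr_nG$, which is compatible with the additivity above and will be used in the last step.

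The delicate point, and the \textbf{main obstacle}, is the Jacobi identity, for which I would invoke the Hall--Witt identity: for $x,y,z\in G$ the cyclic product of suitable conjugates of nested commutators such as $(x,(y^{-1},z))$ equals $1$. Fix $x\in\huaF_nG$, $y\in\huaF_mG$, $z\in\huaF_kG$ and set $p=n+m+k$. Each of the three cyclic factors is a conjugate of a nested commutator lying in $\huaF_pG$, so it has a well-defined class in $\gr_pG$. Reducing the identity modulo $\huaF_{p+1}G$, the outer conjugations act trivially on $\gr_pG$ by the observation of the first paragraph, and the product of the three factors (equal to $1$ in $G$) becomes the \emph{sum} of their classes in the abelian group $\gr_pG$. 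Using bilinearity together with $\overline{y^{-1}}=-\overline{y}$ to simplify the inner terms, the three classes are $-[\overline{x},[\overline{y},\overline{z}]]$, $-[\overline{y},[\overline{z},\overline{x}]]$, $-[\overline{z},[\overline{x},\overline{y}]]$, and their vanishing sum is precisely the Jacobi identity. The care required here is threefold: checking that every nested commutator genuinely lands in $\huaF_pG$ so that passage to $\gr_pG$ is legitimate; confirming that the conjugating elements, lying in $G=\huaF_1G$, shift their arguments by at least one filtration degree and hence act trivially on $\gr_pG$; and keeping the signs produced by the inverses $x^{-1},y^{-1},z^{-1}$ inside Hall--Witt consistent with the bilinearity and antisymmetry already established.
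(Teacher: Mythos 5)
Your proof is correct and is essentially the classical argument from the reference \cite{Ser} that the paper cites in lieu of a proof: well-definedness and biadditivity from the commutator identities $(xx',y)=(x',y)^x(x,y)$, triviality of conjugation on the graded pieces, antisymmetry from $(x,y)^{-1}=(y,x)$, and Jacobi from the Hall--Witt identity. Nothing further is needed.
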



\begin{defi}
    A \textbf{filtered Rota-Baxter group} is triple $(G,\huaF_{\bullet}G,\frkR)$, where $(G,\huaF_{\bullet}G)$ is a filtered group, $\frkR$ is a Rota-Baxter operator on the group $G$ and
    $$
    \frkR(\huaF_nG)\subset \huaF_n G,\quad \forall n\geq 1.
    $$

\end{defi}

\begin{ex}
Let $(H,\huaF_{\bullet}H,\huaR)$ be a filtered Rota-Baxter Hopf algebra, $(G,\huaF_{\bullet}G)$ be the group of group-like elements with the filtration given by Example \ref{filthg} and $\frkR$ be the induced Rota-Baxter operator on the group $G$ defined by   $\frkR=\huaR|_G:G\rightarrow G$. Then $(G,\huaF_{\bullet}G,\frkR)$ is a filtered Rota-Baxter group.
Indeed, for any $g\in \huaF_nG$ such that $g=1+x_g$ for some $x_g\in \huaF_nH$, we have
\begin{eqnarray*}
\frkR(g)=\frkR(1+x_g)=\frkR(1)+\frkR(x_g)=1+\frkR(x_g)\in (1+\huaF_nH)\cap G=\huaF_nG.
\end{eqnarray*}
\end{ex}

\begin{pro}
    Let $(G,\huaF_{\bullet}G,\frkR)$ be a filtered Rota-Baxter group. Then for any $n\geq 1$, the map $\frkR_n:\gr_nG\rightarrow \gr_nG$ defined by
\begin{eqnarray*}
    \frkR_n(\overline{x})=\overline{\frkR(x)},\quad \forall \overline{x}\in \gr_nG
\end{eqnarray*}
    is a well-defined homomorphism of $\gr_nG$.
\end{pro}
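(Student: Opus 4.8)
The plan is to establish two things at once: that $\frkR_n$ is independent of the chosen representative, and that it respects the addition on $\gr_n G$. Both will reduce to a single structural fact recorded earlier, namely that $\frkR$ is a group homomorphism from the descendant group $(G,\star)$ to $(G,\cdot)$, where $x\star y=x\frkR(x)y\frkR(x)^{-1}$ and $\frkR(x)\frkR(y)=\frkR(x\star y)$, combined with the hypothesis that $\frkR$ preserves the filtration. The quotient map $\huaF_n G\to \gr_n G$, $z\mapsto\overline z$, is a group homomorphism onto the abelian group $(\gr_n G,+)$, and I will use this repeatedly to pass between products in $\huaF_n G$ and sums in $\gr_n G$.

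For well-definedness I would take $x,x'\in\huaF_n G$ with $\overline{x'}=\overline x$, i.e. $x'=xc$ with $c\in\huaF_{n+1}G$. The key move is to convert the ordinary product $xc$ into a $\star$-product: setting $c'=\frkR(x)^{-1}c\,\frkR(x)$, a direct computation gives $x\star c'=x\frkR(x)c'\frkR(x)^{-1}=xc$, and since $\huaF_{n+1}G$ is normal (Proposition \ref{Ser}(i)) we have $c'\in\huaF_{n+1}G$. Applying the homomorphism property $\frkR(x\star c')=\frkR(x)\frkR(c')$ together with filtration preservation $\frkR(c')\in\huaF_{n+1}G$, I obtain $\frkR(x)^{-1}\frkR(x')=\frkR(c')\in\huaF_{n+1}G$, that is $\overline{\frkR(x')}=\overline{\frkR(x)}$. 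Hence $\frkR_n$ is well-defined.

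For the homomorphism property I would first observe that $\star$ and $\cdot$ induce the same addition on $\gr_n G$. For $x,y\in\huaF_n G$, writing $r=\frkR(x)\in\huaF_n G$, we have $x\star y=x r y r^{-1}=x\,(r,y)\,y$, and the commutator satisfies $(r,y)\in(\huaF_n G,\huaF_n G)\subset\huaF_{2n}G\subset\huaF_{n+1}G$ since $2n\ge n+1$. Because $z\mapsto\overline z$ is a homomorphism into $(\gr_n G,+)$ and $\overline{(r,y)}=0$, this yields $\overline{x\star y}=\overline x+\overline y=\overline{xy}$. Thus $xy$ and $x\star y$ represent the same class, so by well-definedness $\overline{\frkR(xy)}=\overline{\frkR(x\star y)}$, and using $\frkR(x\star y)=\frkR(x)\frkR(y)$ I conclude
$$\frkR_n(\overline x+\overline y)=\overline{\frkR(xy)}=\overline{\frkR(x\star y)}=\overline{\frkR(x)\frkR(y)}=\overline{\frkR(x)}+\overline{\frkR(y)}=\frkR_n(\overline x)+\frkR_n(\overline y).$$

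The main obstacle is the well-definedness step, because a priori $\frkR$ is only a set map with no additivity, so there is no direct reason that congruent inputs produce congruent outputs. The resolution is precisely the identity $xc=x\star\big(\frkR(x)^{-1}c\,\frkR(x)\big)$, which rewrites multiplication by a high-filtration element as a $\star$-operation and lets the homomorphism property of $\frkR$ carry the argument. Once this is in place, everything else is routine filtration bookkeeping, using only normality of $\huaF_{n+1}G$, the containment $(\huaF_nG,\huaF_nG)\subset\huaF_{2n}G$, and the fact that $\frkR$ respects the filtration.
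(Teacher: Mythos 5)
Your proof is correct and follows essentially the same route as the paper: both hinge on the identity $xc=x\star\bigl(\frkR(x)^{-1}c\,\frkR(x)\bigr)$, the fact that $\frkR$ is a homomorphism from $(G,\star)$ to $G$, and preservation of the filtration. The only cosmetic difference is in the additivity step, where you pass through $\overline{x\star y}=\overline{xy}$ and invoke well-definedness, while the paper directly observes that the conjugate ${x'}^{\frkR(x)^{-1}}$ lies in the same class as $x'$.
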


\begin{proof}
If $\overline{x}=\overline{x'}$ for $x,x'\in \huaF_nG$, then we have $x'=xy$ for $y\in \huaF_{n+1}G$. Since $\frkR$ is a Rota-Baxter operator, we have
$$
\frkR(x')=\frkR(xy)=\frkR(x\frkR(x)\frkR(x)^{-1}y\frkR(x)\frkR(x)^{-1})=\frkR(x)\frkR(y^{\frkR(x)^{-1}}).
$$
Then by Proposition \ref{Ser}, $y^{\frkR(x)^{-1}}\in \huaF_{n+1}G$. Therefore, $ \overline{\frkR(x')}=\overline{\frkR(x)\frkR(y^{\frkR(x)^{-1}})}=\overline{\frkR(x)}$ and $\frkR_n$ is well-defined.

Let $\overline{x},\overline{x'}\in gr_nG$. Then
$$
\frkR_n(\overline{x}+\overline{x'})=\frkR_n(\overline{xx'})=\overline{\frkR(xx')}=\overline{\frkR(x)\frkR\left({x'}^{\frkR(x)^{-1}}\right)}
=\frkR_n(\overline{x})+\frkR_n\left(\overline{{x'}^{\frkR(x)^{-1}}}\right)=\frkR_n(\overline{x})+\frkR_n(\overline{x'}),
$$
which finishes the proof.
\end{proof}

    Define a map $\gr\frkR:\gr G\rightarrow \gr G$ by
  \begin{equation}\label{defRBGr}
    \gr\frkR(a_1+a_2+\ldots +a_k)=\frkR_{i_1}(a_1)+\frkR_{i_2}(a_2)+\ldots+\frkR_{i_k}(a_k),
\end{equation}
    where $a_j\in \gr_{i_j}G$, $j=1,\ldots,k$.

\begin{thm}\label{thm:grring} Let $(G,\huaF_{\bullet}G,\frkR)$ be a filtered Rota-Baxter group. Then $(\gr G,\gr\frkR)$ is a graded Rota-Baxter Lie ring.
\end{thm}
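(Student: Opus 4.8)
The plan is to verify that the degree-preserving additive map $\gr\frkR$ satisfies the weight-$1$ Rota-Baxter identity on the graded Lie ring $\gr G$, that is,
\[
[\gr\frkR(a),\gr\frkR(b)]=\gr\frkR\big([\gr\frkR(a),b]+[a,\gr\frkR(b)]+[a,b]\big).
\]
Since every $\frkR_n$ is additive and the bracket on $\gr G$ is biadditive, both sides are biadditive in $(a,b)$, so it suffices to check this for homogeneous elements $a=\overline{x}\in\gr_nG$ and $b=\overline{y}\in\gr_mG$, with $x\in\huaF_nG$, $y\in\huaF_mG$; the resulting equality then lives in $\gr_{n+m}G$. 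Throughout I would work with the descendant group $(G,\star)$, $x\star y=x\frkR(x)y\frkR(x)^{-1}$, for which $\frkR$ is a group homomorphism onto $(G,\cdot)$; thus $\frkR(u\star v)=\frkR(u)\frkR(v)$ for all $u,v$, and $\frkR$ carries $\star$-inverses to ordinary inverses.

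First I would rewrite the left-hand side. Writing $(g,h)=ghg^{-1}h^{-1}$ and using that $\frkR$ is a homomorphism $(G,\star)\to(G,\cdot)$,
\[
\big(\frkR(x),\frkR(y)\big)=\frkR(x)\frkR(y)\frkR(x)^{-1}\frkR(y)^{-1}=\frkR\big((x,y)_{\star}\big),
\]
where $(x,y)_{\star}$ is the commutator of $x$ and $y$ in $(G,\star)$. Granting that $(x,y)_{\star}\in\huaF_{n+m}G$, the definitions of $\gr\frkR$ and $\frkR_{n+m}$ then give
\[
[\gr\frkR(a),\gr\frkR(b)]=\overline{\big(\frkR(x),\frkR(y)\big)}=\overline{\frkR\big((x,y)_{\star}\big)}=\gr\frkR\big(\overline{(x,y)_{\star}}\big).
\]

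The heart of the proof is a closed-form evaluation of $(x,y)_{\star}$. Substituting the $\star$-inverses $\frkR(x)^{-1}x^{-1}\frkR(x)$ and $\frkR(y)^{-1}y^{-1}\frkR(y)$, together with $\frkR(x\star y)=\frkR(x)\frkR(y)$, one writes $(x,y)_{\star}$ out as a word in $x,\frkR(x),y,\frkR(y)$ and their inverses; cancelling adjacent inverse pairs makes it telescope to
\[
(x,y)_{\star}=\big(x\frkR(x),\,y\frkR(y)\big)\big(\frkR(y),\frkR(x)\big),
\]
which visibly belongs to $\huaF_{n+m}G$, thereby justifying the step above (and showing that $(G,\star)$ is again filtered by $\huaF_\bullet G$). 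Passing to $\gr_{n+m}G$, biadditivity of the bracket together with $\overline{x\frkR(x)}=\overline{x}+\overline{\frkR(x)}$ in $\gr_nG$ and $\overline{y\frkR(y)}=\overline{y}+\overline{\frkR(y)}$ in $\gr_mG$ expands the first factor, and the two cross terms $[\overline{\frkR(x)},\overline{\frkR(y)}]$ and $[\overline{\frkR(y)},\overline{\frkR(x)}]$ cancel by antisymmetry, leaving
\[
\overline{(x,y)_{\star}}=[a,b]+[\gr\frkR(a),b]+[a,\gr\frkR(b)].
\]
Feeding this into the previous display yields exactly the Rota-Baxter identity on homogeneous elements, and biadditivity extends it to all of $\gr G$.

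The main obstacle is the telescoping identity for $(x,y)_{\star}$; everything else is formal, since the graded Lie ring structure of $\gr G$ (in particular biadditivity and antisymmetry of $[\cdot,\cdot]$) is already provided by the cited result of Serre, and additivity of each $\frkR_n$ was proved in the preceding proposition. Concretely, I expect to group the word as $(x\frkR(x))(y\frkR(y))(x\frkR(x))^{-1}$ followed by a tail, recognise the conjugation $(x\frkR(x))(y\frkR(y))(x\frkR(x))^{-1}=(g,h)\,y\frkR(y)$ with $g=x\frkR(x)$, $h=y\frkR(y)$, and then absorb the remaining factors using the cancellations $y\frkR(y)\cdot\frkR(y)^{-1}y^{-1}\frkR(y)=\frkR(y)$ and $\frkR(y)\cdot\frkR(x)\frkR(y)^{-1}\frkR(x)^{-1}=(\frkR(y),\frkR(x))$. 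The one point demanding care is that the same-degree commutators implicitly created during this collection do not survive: they are precisely the ones that telescope away, which is what makes $(x,y)_{\star}$ land in $\huaF_{n+m}G$ with the clean leading term above.
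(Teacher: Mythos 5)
Your proof is correct, and it follows the same overall strategy as the paper's: both reduce the Rota--Baxter identity on $\gr G$ to an identity for the ordinary commutator $(\frkR(x),\frkR(y))$, obtained by exploiting that $\frkR$ is a homomorphism from the descendant group $(G,\star)$ to $(G,\cdot)$, and then pass to the graded quotient. The difference lies in the key combinatorial lemma. The paper expands $(\frkR(x),\frkR(y))$ directly into $\frkR$ applied to the five-factor word $(x,y)\,(\frkR(x),y)^{yxy^{-1}}\,(\frkR(x),\frkR(y))^{yx}\,(x,\frkR(y))^{y}\,(\frkR(y),\frkR(x))$, and then uses that conjugations are invisible in $\gr_{n+m}G$ before cancelling the two terms $[\gr\frkR(a),\gr\frkR(b)]$ and $[\gr\frkR(b),\gr\frkR(a)]$ by antisymmetry. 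You instead note that $\frkR$ carries the $\star$-commutator $(x,y)_\star$ to $(\frkR(x),\frkR(y))$ and establish the closed form $(x,y)_\star=\bigl(x\frkR(x),\,y\frkR(y)\bigr)\bigl(\frkR(y),\frkR(x)\bigr)$; expanding $[a+\gr\frkR(a),\,b+\gr\frkR(b)]+[\gr\frkR(b),\gr\frkR(a)]$ by biadditivity and antisymmetry leaves exactly the same three surviving terms. Your telescoping identity is correct (I checked the word reduction), and it can even be obtained without any hand computation from the standard fact that $x\mapsto x\frkR(x)$ is also a group homomorphism $(G,\star)\to(G,\cdot)$: applied to the $\star$-commutator this gives $(x,y)_\star\cdot\frkR\bigl((x,y)_\star\bigr)=\bigl(x\frkR(x),y\frkR(y)\bigr)$, which is your factorization. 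The payoff of your route is that all conjugation bookkeeping disappears and the membership $(x,y)_\star\in\huaF_{n+m}G$ (i.e.\ that $\huaF_\bullet G$ also filters $(G,\star)$) comes for free; the paper's version has the minor advantage of displaying explicitly which commutators contribute before cancellation.
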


\begin{proof}
For $x,y\in G$ by the fact that $\frkR$ is a homomorphism from the group $(G,\star)$ to $G$, we have
\begin{eqnarray*}
   && (\frkR(x),\frkR(y))\\&=  & \frkR(x)\frkR(y)(\frkR(y)\frkR(x))^{-1}\\
     & =&\frkR(x\star y)(\frkR(y\star x))^{-1}\\
     &=&\frkR(x\star y)\frkR(\frkR(y\star x)^{-1}(y\star x)^{-1}\frkR(y\star x))\\
    & =&\frkR(x\star y)\frkR\big(\frkR(x)^{-1}\frkR(y)^{-1}(y\star x)^{-1}\frkR(y)\frkR(x)\big)\\
    & =&\frkR\Big((x\star y)\frkR(x\star y)\frkR(x)^{-1}\frkR(y)^{-1}(y\star x)^{-1}\frkR(y)\frkR(x)\frkR(x\star y)^{-1}\Big)\\
   & =&\frkR\Big([x\frkR(x)y\frkR(x)^{-1}][\frkR(x)\frkR(y)]\frkR(x)^{-1}\frkR(y)^{-1}[\frkR(y)x^{-1}\frkR(y)^{-1}y^{-1}]\frkR(y)\frkR(x)[\frkR(y)^{-1}\frkR(x)^{-1}]\Big)\\
    & =&\frkR\Big(x\frkR(x)y\frkR(y)\frkR(x)^{-1}x^{-1}\frkR(y)^{-1}y^{-1}(\frkR(y),\frkR(x))\Big)\\
    & =&\frkR\Big(x(\frkR(x),y)y\frkR(x)\frkR(y)\frkR(x)^{-1}x^{-1}\frkR(y)^{-1}y^{-1}(\frkR(y),\frkR(x))\Big)\\
    & =&\frkR\Big(x(\frkR(x),y)y(\frkR(x),\frkR(y))\frkR(y)x^{-1}\frkR(y)^{-1}y^{-1}(\frkR(y),\frkR(x))\Big)\\
    & =&\frkR\Big(xyx^{-1}y^{-1}(\frkR(x),y)^{yxy^{-1}}yxy^{-1}y(\frkR(x),\frkR(y))\frkR(y)x^{-1}\frkR(y)^{-1}y^{-1}(\frkR(y),\frkR(x))\Big)\\
    & =&\frkR\Big((x,y)(\frkR(x),y)^{yxy^{-1}}(\frkR(x),\frkR(y))^{yx}(x,\frkR(y))^y(\frkR(y),\frkR(x))\Big).
\end{eqnarray*}

Then for  $a=\overline{x}\in \gr_nG$ and $b=\overline{y}\in \gr_mG$, where $x\in \huaF_nG$ and $y\in \huaF_m G$, we have
\begin{eqnarray*}
    &&[\gr\frkR(a),\gr\frkR(b)]\\& = {} & [\frkR_n(a),\frkR_m(b)]\\
     & =&(\overline{\frkR(x)},\overline{\frkR(y)})=\overline{(\frkR(x),\frkR(y))}\\
    & =&\overline{\frkR\Big((x,y)(\frkR(x),y)^{yxy^{-1}}(\frkR(x),\frkR(y))^{yx}(x,\frkR(y))^y(\frkR(y),\frkR(x))\Big)}\\
    & =&\frkR_{n+m}\left(\overline{(x,y)}+\overline{(\frkR(x),y)}+\overline{(\frkR(x),\frkR(y))}+\overline{(x,\frkR(y))}+\overline{(\frkR(y),\frkR(x))}\right)\\
    & =&\gr\frkR([a,b]+[\gr\frkR(a),b]+[\gr\frkR(a),R(b)]+[a,\gr\frkR(b)]+[\gr\frkR(b),\gr\frkR(a)])\\
    & =&\gr\frkR([\gr\frkR(a),b]+[a,\gr\frkR(b)]+[a,b]).
\end{eqnarray*}
    That is, $\gr\frkR$ is a Rota-Baxter operator on the Lie ring $\gr G$. It is obvious that $\gr\frkR(\gr_n G)\subset \gr_nG$ for any $n\geq 1$. The proof is finished.
\end{proof}

\begin{rmk}
    If $(G,\huaF_{\bullet}G,R)$ is a Rota-Baxter group with the standard filtration, then the results of Theorem  \ref{thm:grring} were partially obtained in \cite{BarGub}.
\end{rmk}

Note that currently we only obtain graded Rota-Baxter Lie rings from filtered Rota-Baxter groups. At the end of this section, we discuss when can we obtain graded Rota-Baxter $\mathbb Q$-Lie algebras.

Recall that
    a group $G$ is called uniquely divisible if the map $x\mapsto x^m$ is bijective for any $m\geq 1$.
It is straightforward to check that the Lie ring $\gr G$ is a $\mathbb Q$-Lie algebra if and only if $\gr_nG$ is a uniquely divisible abelian group for any $n\geq 1$. In this case, $\gr G$ is a $\mathbb Q$-Lie algebra  with the action given by
\begin{equation}\label{act}
m\cdot\overline{x}=\overline{x^m},
\end{equation}
for all $m\in \mathbb Q$ and $\overline{x}\in \gr_iG$, $i\geq 1$.

Since $\gr\frkR$ is an additive map, we have the following result.
\begin{pro}\label{pro:qcondition}
Let $(G,\huaF_{\bullet}G,\frkR)$ be a filtered Rota-Baxter group. Suppose that $\gr G$ has a structure of a $\mathbb Q$-Lie algebra. Then  the  map $\gr\frkR:\gr G\rightarrow \gr G$ defined by \eqref{defRBGr} is a $\mathbb Q$-linear map and $(\gr G,\gr\frkR)$ is a graded Rota-Baxter $\mathbb Q$-Lie algebra.
\end{pro}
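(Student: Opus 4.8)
The plan is to build directly on Theorem \ref{thm:grring}, which already establishes that $(\gr G,\gr\frkR)$ is a graded Rota-Baxter Lie ring; the only two new assertions to verify are that $\gr\frkR$ is $\mathbb Q$-linear and that the Rota-Baxter identity survives the passage from a Lie ring to a $\mathbb Q$-Lie algebra. First I would record that $\gr\frkR$ is additive: each $\frkR_n\colon\gr_nG\to\gr_nG$ is a group homomorphism by the preceding proposition, and $\gr\frkR$ is the direct sum of the $\frkR_n$ by \eqref{defRBGr}, so it is a homomorphism of the abelian group $(\gr G,+)$. In particular $\gr\frkR$ is $\mathbb Z$-linear, i.e.\ $\gr\frkR(k\cdot a)=k\cdot\gr\frkR(a)$ for every $k\in\mathbb Z$.

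The heart of the argument is the observation that an additive map between uniquely divisible abelian groups is automatically $\mathbb Q$-linear. Concretely, fix $\overline{x}\in\gr_nG$ and $m=p/q\in\mathbb Q$ with $p\in\mathbb Z$ and $q\geq 1$. By the hypothesis that $\gr_nG$ is uniquely divisible, the element $m\cdot\overline{x}$ defined by \eqref{act} is the unique $\overline{z}\in\gr_nG$ satisfying $q\cdot\overline{z}=p\cdot\overline{x}$. Applying $\frkR_n$ and using its $\mathbb Z$-linearity gives $q\cdot\frkR_n(\overline{z})=\frkR_n(q\cdot\overline{z})=\frkR_n(p\cdot\overline{x})=p\cdot\frkR_n(\overline{x})$, and unique divisibility of $\gr_nG$ then forces $\frkR_n(\overline{z})=m\cdot\frkR_n(\overline{x})$, that is, $\frkR_n(m\cdot\overline{x})=m\cdot\frkR_n(\overline{x})$. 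Summing over the graded components shows that $\gr\frkR$ is $\mathbb Q$-linear.

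Finally I would conclude the Rota-Baxter property. By hypothesis $\gr G$ is a $\mathbb Q$-Lie algebra, so its bracket $[\cdot,\cdot]$ is $\mathbb Q$-bilinear, and Theorem \ref{thm:grring} supplies the identity $[\gr\frkR(a),\gr\frkR(b)]=\gr\frkR([\gr\frkR(a),b]+[a,\gr\frkR(b)]+[a,b])$ as a relation in the underlying Lie ring. Since every operation appearing in this identity — the bracket and $\gr\frkR$ — is now $\mathbb Q$-linear, the identity holds verbatim over $\mathbb Q$, and $\gr\frkR$ still preserves each $\gr_nG$. Hence $(\gr G,\gr\frkR)$ is a graded Rota-Baxter $\mathbb Q$-Lie algebra. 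The only step that genuinely uses a new hypothesis is the $\mathbb Q$-linearity lemma of the second paragraph; everything else is inherited unchanged from Theorem \ref{thm:grring}, so I do not expect any serious obstacle beyond making the $\mathbb Q$-action in \eqref{act} precise and verifying its compatibility with the additive map $\gr\frkR$.
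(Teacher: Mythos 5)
Your proposal is correct and matches the paper's (essentially omitted) argument: the paper simply notes that $\gr\frkR$ is additive and lets the reader infer $\mathbb Q$-linearity from unique divisibility of each $\gr_nG$, which is exactly the lemma you spell out in your second paragraph. The remaining step — that the Rota-Baxter identity from Theorem \ref{thm:grring} carries over verbatim — is likewise the intended reasoning.
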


\begin{cor}
 Let $(\g,[\cdot,\cdot]_\g,\huaF_{\bullet}\g,R)$ be a filtered Rota-Baxter Lie algebra over $\mathbb Q$, and $(\widehat{\g},*, \huaF_{\bullet}\widehat{\g},\frkR)$ be the corresponding filtered Rota-Baxter group. Then $(\gr(\widehat{\g},*),\gr\frkR)$ is a graded Rota-Baxter $\mathbb Q$-Lie algebra.
\end{cor}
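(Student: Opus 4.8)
The plan is to deduce this as an application of Proposition \ref{pro:qcondition} and Theorem \ref{thm:grring}, so the whole task reduces to verifying the single hypothesis of Proposition \ref{pro:qcondition}, namely that each graded piece $\gr_n(\widehat{\g},*)$ is a uniquely divisible abelian group. First I would record that $(\widehat{\g},*,\huaF_\bullet\widehat{\g})$ is a filtered group (by the Example above) and that $\frkR$ of \eqref{fint} is a Rota-Baxter operator on $(\widehat{\g},*)$ by Theorem \ref{thm:main}. To see that $(\widehat{\g},*,\huaF_\bullet\widehat{\g},\frkR)$ is genuinely a \emph{filtered} Rota-Baxter group I would check $\frkR(\huaF_n\widehat{\g})\subset\huaF_n\widehat{\g}$: for $x\in\huaF_n\widehat{\g}$ one has $\exp(x)-1\in\huaF_n\widehat U(\g)$, and since $\widehat{\huaR}$ preserves the filtration and fixes $1$, the element $\widehat{\huaR}(\exp(x))$ lies in $1+\huaF_n\widehat U(\g)$, whence $\frkR(x)=\log(\widehat{\huaR}(\exp(x)))\in\huaF_n\widehat{\g}$. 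Theorem \ref{thm:grring} then already gives that $(\gr(\widehat{\g},*),\gr\frkR)$ is a graded Rota-Baxter Lie ring.

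The core of the proof is to identify the abelian group $\gr_n(\widehat{\g},*)=\huaF_n\widehat{\g}/\huaF_{n+1}\widehat{\g}$ with the underlying $\mathbb Q$-vector space. For $x,y\in\huaF_n\widehat{\g}$ the Baker-Campbell-Hausdorff series gives $x*y=x+y+\tfrac12[x,y]+\cdots$, where every higher term is an iterated bracket of at least two elements of $\huaF_n\widehat{\g}$ and hence lies in $\huaF_{2n}\widehat{\g}\subset\huaF_{n+1}\widehat{\g}$. Thus $\overline{x}+\overline{y}=\overline{x*y}=\overline{x+y}$, i.e. the induced group law on $\gr_n(\widehat{\g},*)$ is exactly vector-space addition. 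Because $\g$, and therefore $\widehat{\g}$ together with each subspace $\huaF_n\widehat{\g}$, is a $\mathbb Q$-vector space, the quotient $\gr_n(\widehat{\g},*)$ is again a $\mathbb Q$-vector space and in particular a uniquely divisible abelian group. The same filtration estimate shows $x^{m,\star}\equiv m x\pmod{\huaF_{n+1}\widehat{\g}}$ for the $m$-fold $*$-power, so that the $\mathbb Q$-action \eqref{act} coincides with scalar multiplication on this vector space.

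With $\gr(\widehat{\g},*)$ now exhibited as a $\mathbb Q$-Lie algebra, Proposition \ref{pro:qcondition} applies directly and yields that $\gr\frkR$ is $\mathbb Q$-linear and that $(\gr(\widehat{\g},*),\gr\frkR)$ is a graded Rota-Baxter $\mathbb Q$-Lie algebra, which is the assertion. I expect the only real obstacle to be the filtration bookkeeping in the middle step: one must argue carefully that all Baker-Campbell-Hausdorff correction terms land in $\huaF_{n+1}\widehat{\g}$, so that the associated graded group is abelian and linear, after which both the divisibility and the appeal to Proposition \ref{pro:qcondition} are immediate.
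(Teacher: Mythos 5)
Your proposal is correct and follows essentially the same route as the paper: the paper's proof simply asserts that $\gr(\widehat{\g},*)$ is a $\mathbb{Q}$-Lie algebra with action $m\cdot\overline{x}=\overline{mx}$ and then invokes Proposition \ref{pro:qcondition}, which is exactly the structure of your argument. Your BCH filtration estimates merely supply the details the paper labels ``obvious,'' and they are accurate.
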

\begin{proof}
 It is obvious that  $\gr(\widehat{\g},*)$ is a Lie algebra over $\mathbb Q$ with the action defined by
$$
m\cdot \overline{x}=\overline{mx}
$$
for all $m\in\mathbb Q$ and $\overline{x}\in \gr_i(\widehat{\g},*)$, $i\geq 1$. Then by Proposition \ref{pro:qcondition}, $(\gr(\widehat{\g},*),\gr\frkR)$ is a graded Rota-Baxter $\mathbb Q$-Lie algebra.
\end{proof}

Let  $(\g,[\cdot,\cdot]_\g,\huaF_{\bullet}\g,R)$ be a filtered Rota-Baxter Lie algebra over $\mathbb Q$. Denote by $\gr(\widehat{\g},[\cdot,\cdot]_{\widehat{\g}})=\bigoplus\limits_{n\geq 1}\huaF_n\widehat{\g}/\huaF_{n+1}\widehat{\g}$ the induced graded Lie algebra of the completion $(\widehat{\g},[\cdot,\cdot]_{\widehat{\g}}).$ Denote by $\gr\widehat{R}$ the induced Rota-Baxter operator on $\gr(\widehat{\g},[\cdot,\cdot]_{\widehat{\g}})$, that is, $$\gr\widehat{R}(x+\huaF_{n+1}\widehat{\g})=\widehat{R}(x)+\huaF_{n+1}\widehat{\g},\quad \forall x\in \huaF_n\widehat{\g}.$$

\begin{cor}
  Let  $(\g,[\cdot,\cdot]_\g,\huaF_{\bullet}\g,R)$ be a filtered Rota-Baxter Lie algebra over $\mathbb Q$. Then graded Rota-Baxter $\mathbb Q$-Lie algebras $(\gr(\widehat{\g},*),\gr\frkR)$ and  $(\gr(\widehat{\g},[\cdot,\cdot]_{\widehat{\g}}),\gr\widehat{R})$ are isomorphic.
\end{cor}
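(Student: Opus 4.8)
The plan is to take the \emph{identity} map as the desired isomorphism, after observing that both graded Rota-Baxter $\mathbb{Q}$-Lie algebras are built on one and the same underlying graded vector space. Indeed, by the example preceding this corollary the filtered group $(\widehat{\g},*)$ carries the filtration $\huaF_{\bullet}\widehat{\g}$, which is exactly the filtration of the filtered Lie algebra $(\widehat{\g},[\cdot,\cdot]_{\widehat{\g}})$. For $x\in\huaF_n\widehat{\g}$ and $y\in\huaF_{n+1}\widehat{\g}$ the Baker--Campbell--Hausdorff formula gives $x*y\in x+\huaF_{n+1}\widehat{\g}$, so the two coset spaces $\huaF_n\widehat{\g}/\huaF_{n+1}\widehat{\g}$, one formed with $*$ and one with $+$, coincide. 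Thus $\gr_n(\widehat{\g},*)$ and $\gr_n(\widehat{\g},[\cdot,\cdot]_{\widehat{\g}})$ share the same underlying set, and I take $\Psi\colon\gr(\widehat{\g},*)\to\gr(\widehat{\g},[\cdot,\cdot]_{\widehat{\g}})$ to be the identity on each homogeneous component. It then remains to check that $\Psi$ respects the abelian group law, the $\mathbb{Q}$-action, the Lie bracket, and the Rota-Baxter operators.

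Everything reduces to truncating the Baker--Campbell--Hausdorff and Magnus series one filtration degree at a time. First, for $x,y\in\huaF_n\widehat{\g}$ we have $[x,y]_{\widehat{\g}}\in\huaF_{2n}\widehat{\g}\subseteq\huaF_{n+1}\widehat{\g}$ since $n\ge 1$, so $x*y\equiv x+y\pmod{\huaF_{n+1}\widehat{\g}}$; hence the additive law of $\gr_n(\widehat{\g},*)$ agrees with that of $\gr_n(\widehat{\g},[\cdot,\cdot]_{\widehat{\g}})$. The same truncation applied to the $m$-fold $*$-product $x*\cdots*x$ gives $x*\cdots*x\equiv mx\pmod{\huaF_{n+1}\widehat{\g}}$ for every integer $m$, which by unique divisibility shows that both $\mathbb{Q}$-actions equal $m\cdot\overline{x}=\overline{mx}$; thus $\Psi$ is $\mathbb{Q}$-linear. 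For the bracket, recall that the $*$-commutator $(x,y)=x*y*(-x)*(-y)$, with $*$-inverse $-x$, has Baker--Campbell--Hausdorff expansion whose leading term is $[x,y]_{\widehat{\g}}$, all remaining terms being iterated brackets of at least three factors. For $x\in\huaF_n\widehat{\g}$ and $y\in\huaF_m\widehat{\g}$ these remaining terms lie in $\huaF_{n+m+1}\widehat{\g}$, so $(x,y)\equiv[x,y]_{\widehat{\g}}\pmod{\huaF_{n+m+1}\widehat{\g}}$. Consequently the bracket on $\gr(\widehat{\g},*)$, defined by $[\overline{x},\overline{y}]=\overline{(x,y)}$, is carried by $\Psi$ to the bracket $\overline{[x,y]_{\widehat{\g}}}$ on $\gr(\widehat{\g},[\cdot,\cdot]_{\widehat{\g}})$.

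It then remains to match the two Rota-Baxter operators. By Theorem \ref{thm:formula}, $\frkR=\widehat{R}\circ\Omega$, where $\Omega=\sum_{k\ge 1}\Omega_k$ is the post-Lie Magnus expansion with $\Omega_1(x)=x$. Each $\Omega_k(x)$ for $k\ge 2$ is an iterated bracket of $k$ expressions built from $x$ and $R(x)$; since $R$ preserves the filtration, for $x\in\huaF_n\widehat{\g}$ one gets $\Omega_k(x)\in\huaF_{kn}\widehat{\g}\subseteq\huaF_{2n}\widehat{\g}\subseteq\huaF_{n+1}\widehat{\g}$. Hence $\Omega(x)\equiv x\pmod{\huaF_{n+1}\widehat{\g}}$, and since $\widehat{R}$ preserves the filtration,
$$
\frkR(x)=\widehat{R}\big(\Omega(x)\big)\equiv\widehat{R}(x)\pmod{\huaF_{n+1}\widehat{\g}}.
$$
Therefore $\gr\frkR(\overline{x})=\overline{\frkR(x)}=\overline{\widehat{R}(x)}=\gr\widehat{R}(\overline{x})$ on each $\gr_n$, i.e.\ $\Psi$ intertwines $\gr\frkR$ and $\gr\widehat{R}$. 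Combined with the previous paragraph, $\Psi$ is an isomorphism of graded Rota-Baxter $\mathbb{Q}$-Lie algebras.

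I expect the main obstacle to be the bracket computation: one must be sure that, beyond the leading Lie bracket, every higher Baker--Campbell--Hausdorff term of the group commutator $(x,y)$ genuinely raises the filtration degree, landing in $\huaF_{n+m+1}\widehat{\g}$ rather than merely $\huaF_{n+m}\widehat{\g}$. This is exactly where the hypotheses $n,m\ge 1$ enter, forcing any term with at least two bracket factors one filtration step higher; the analogous estimate for the Magnus tail $\sum_{k\ge 2}\Omega_k$ is precisely what makes the two Rota-Baxter operators agree on the associated graded. The rest is the routine verification that these congruences are independent of the chosen coset representatives.
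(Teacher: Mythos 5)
Your proof is correct and follows essentially the same route as the paper: the isomorphism is the identity on cosets, and the key step is the congruence $\frkR(x)\equiv\widehat{R}(x)\pmod{\huaF_{2n}\widehat{\g}}$ obtained from Theorem \ref{thm:formula} and the filtration estimate on the Magnus tail. The only difference is that you verify the underlying graded Lie algebra isomorphism directly by truncating the Baker--Campbell--Hausdorff series, whereas the paper cites \cite[Proposition 2.8]{Quillen} for that part; your estimates there are all valid.
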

\begin{proof}
  It follows from  \cite[Proposition 2.8]{Quillen} that graded Lie algebras $\gr(\widehat{\g},*)$ and $\gr(\widehat{\g},[\cdot,\cdot]_{\widehat{\g}})$ are isomorphic. In fact, the isomorphism is induced by maps $\varphi_n:\gr_n(\widehat{\g},*)\rightarrow \gr_n(\widehat{\g},[\cdot,\cdot]_{\widehat{\g}})$ defined as
$$
\varphi_n(x*\huaF_{n+1}\widehat{\g})=x+\huaF_{n+1}\widehat{\g}\in \huaF_n\widehat{\g}/\huaF_{n+1}\widehat{\g},\quad \forall x\in \huaF_n \widehat{\g}.
$$
From Theorem \ref{thm:formula}, for any $x\in \huaF_n\widehat{\g}$ we have that $\frkR(x)=\widehat{R}(x)+\huaF_{2n}{\widehat{\g}}$. Then for any $x\in \huaF_n \widehat{\g}$,
\begin{eqnarray*}
\varphi_n(\gr\frkR(x*\huaF_{n+1}\widehat{\g}))&=&\varphi_n(\frkR(x)*\huaF_{n+1}\widehat{\g})=\varphi_n((R(x)+\huaF_{2n}\widehat{\g})*\huaF_{n+1}\widehat{\g})
\\
&=& \varphi_n(\widehat{R}(x)*\huaF_{n+1}\widehat{\g})=\widehat{R}(x)+\huaF_{n+1}\widehat{\g}=\gr\widehat{R}(x+\huaF_{n+1}\widehat{\g})\\
&=&\gr\widehat{R}(\varphi_n(x*\huaF_{n+1}\widehat{\g})).
\end{eqnarray*}
Therefore, $\varphi_n\circ \gr\frkR=\gr\widehat{R}\circ \varphi_n$ and maps $\{\varphi_n\}$ induce an isomorphism of  graded Rota-Baxter Lie algebras.
\end{proof}


\begin{rmk}
A nilpotent Lie algebra $(\g,[\cdot,\cdot]_\g)$ is called {\bf naturally graded} if $\g$ is isomorphic to $\gr(\g,[\cdot,\cdot]_\g)$ (that is constructed with respect to the standard filtration). It is straightforward to check that the Heisenberg Lie algebra from Example \ref{ex:Hei-group} is naturally graded.
Let $\g$ be a naturally graded nilpotent Lie algebra over $\mathbb Q$, $\huaF_{\bullet}\g$ be the standard filtration on $\g$ and $(\g,*,\huaF_{\bullet}\g)$ be the formal integration of $(\g,\huaF_{\bullet}\g)$. If $(\g,*,\huaF_{\bullet}\g,\frkR)$ is a filtered Rota-Baxter group, then $(\gr(\g,*),\gr\frkR)$, where $\gr\frkR$ is the map defined by \eqref{defRBGr}, is a filtered Rota-Baxter Lie algebra.
\end{rmk}

\begin{ex}
Let $\g$ be a Lie algebra over a field $\mathbb Q$ and  $\g_h=\g[h]h$ be the Lie algebra from Example \ref{gpoly} with the filtration $\huaF_n\g_h=\g[h]h^n$. Then the completion of $\g_h$ is equal to $\g[[h]]h$ with the filtration $\huaF_n\widehat{\g_h}=\g[[h]]h^n$. The graded algebra of $(\widehat{\g_h},\huaF_{\bullet}\widehat{\g_h})$ is isomorphic to $\g[h]h=\g_h$. Therefore, any structure of a filtered Rota-Baxter group $(\g[[h]]h,\huaF_\bullet\g[[h]]h,\frkR)$ gives rise to a structure of a filtered Rota-Baxter Lie algebra on $\g_h$.
\end{ex}

   We finish the paper with an example to illustrate that even though sometimes  it is possible to endow the graded Lie ring $\gr G$ with a structure of an algebra over the field $\bk$ of characteristic zero, but the induced Rota-Baxter operator is not necessarily $\bk$-linear.

\begin{ex}
Let $G=(\mathbb C,+)$ be the additive group of the field of complex numbers with the minimal filtration $\huaF_1\mathbb C=\mathbb C$, $\huaF_n\mathbb C=\mathbb C_n=0$ for any $n\geq 2$ (see Example \ref{exgrfilt}). Since $\mathbb C=\mathbb R\oplus i\mathbb R$ (the direct sum of subgroups), a map $\frkR:G\rightarrow G$ defined as $\frkR(x+iy)=-iy$ for any $x,y\in \mathbb R$ is a Rota-Baxter operator on the group $G$ \cite{GLS}. Thus, $(G,\huaF_{\bullet}G,\frkR)$ is a filtered Rota-Baxter group. The corresponding graded Lie ring is  $\gr G=\mathbb C$ with ordinary addition and trivial multiplication, and the induced map $\gr\frkR$ is equal to $\frkR$, which is not $\mathbb C$-linear since $\gr\frkR(ix)=-ix$ while $i\gr\frkR(x)=0$ for any $x\in \mathbb R$.
\end{ex}

\noindent
{\bf Declaration of interests. } The authors have no conflicts of interest to disclose.

\noindent
{\bf Data availability. } Data sharing is not applicable to this article as no new data were created or analyzed in this study.

\end{document}